\providecommand{\tabularnewline}{\\}
\def\newrmtheorem#1{\@ifnextchar[{\@rmothm{#1}}{\@rmnthm{#1}}}
\def\@rmnthm#1#2{%
\@ifnextchar[{\@rmxnthm{#1}{#2}}{\@rmynthm{#1}{#2}}}
\def\@rmxnthm#1#2[#3]{\expandafter\@ifdefinable\csname #1\endcsname
{\@definecounter{#1}\@addtoreset{#1}{#3}%
\expandafter\xdef\csname the#1\endcsname{\expandafter\noexpand
  \csname the#3\endcsname \@rmthmcountersep \@rmthmcounter{#1}}%
\global\@namedef{#1}{\@rmthm{#1}{#2}}\global\@namedef{end#1}{\@endrmtheorem}}}
\def\@rmynthm#1#2{\expandafter\@ifdefinable\csname #1\endcsname
{\@definecounter{#1}%
\expandafter\xdef\csname the#1\endcsname{\@rmthmcounter{#1}}%
\global\@namedef{#1}{\@rmthm{#1}{#2}}\global\@namedef{end#1}{\@endrmtheorem}}}
\def\@rmothm#1[#2]#3{\expandafter\@ifdefinable\csname #1\endcsname
  {\global\@namedef{the#1}{\@nameuse{the#2}}%
\global\@namedef{#1}{\@rmthm{#2}{#3}}%
\global\@namedef{end#1}{\@endrmtheorem}}}
\def\@rmthm#1#2{\refstepcounter
    {#1}\@ifnextchar[{\@rmythm{#1}{#2}}{\@rmxthm{#1}{#2}}}
\def\@rmxthm#1#2{\@beginrmtheorem{#2}{\csname the#1\endcsname}\ignorespaces}
\def\@rmythm#1#2[#3]{\@opargbeginrmtheorem{#2}{\csname
       the#1\endcsname}{#3}\ignorespaces}
\def\@rmthmcounter#1{\noexpand\arabic{#1}}
\def\@rmthmcountersep{}
\def\@beginrmtheorem#1#2{\rm \trivlist
      \item[\hskip \labelsep{\bf #1\ #2\thmrmcounterend}]}
\def\@opargbeginrmtheorem#1#2#3{\rm \trivlist
      \item[\hskip \labelsep{\bf #1\ #2\ (#3)\thmrmcounterend}]}
\def\@endrmtheorem{\endtrivlist}
\def\thmrmcounterend{\hskip 0em\relax}
\def\newrmwntheorem#1#2{\expandafter\@ifdefinable\csname #1\endcsname%
\global\@namedef{#1}{\@rmwnthm{#1}{#2}}%
\global\@namedef{end#1}{\@endrmwntheorem}}
\def\newsltheorem#1{\@ifnextchar[{\@slothm{#1}}{\@slnthm{#1}}}
\def\@slnthm#1#2{%
\@ifnextchar[{\@slxnthm{#1}{#2}}{\@slynthm{#1}{#2}}}
\def\@slxnthm#1#2[#3]{\expandafter\@ifdefinable\csname #1\endcsname
{\@definecounter{#1}\@addtoreset{#1}{#3}%
\expandafter\xdef\csname the#1\endcsname{\expandafter\noexpand
  \csname the#3\endcsname \@slthmcountersep \@slthmcounter{#1}}%
\global\@namedef{#1}{\@slthm{#1}{#2}}\global\@namedef{end#1}{\@endsltheorem}}}
\def\@slynthm#1#2{\expandafter\@ifdefinable\csname #1\endcsname
{\@definecounter{#1}%
\expandafter\xdef\csname the#1\endcsname{\@slthmcounter{#1}}%
\global\@namedef{#1}{\@slthm{#1}{#2}}\global\@namedef{end#1}{\@endsltheorem}}}
\def\@slothm#1[#2]#3{\expandafter\@ifdefinable\csname #1\endcsname
  {\global\@namedef{the#1}{\@nameuse{the#2}}%
\global\@namedef{#1}{\@slthm{#2}{#3}}%
\global\@namedef{end#1}{\@endsltheorem}}}
\def\@slthm#1#2{\refstepcounter
    {#1}\@ifnextchar[{\@slythm{#1}{#2}}{\@slxthm{#1}{#2}}}
\def\@slxthm#1#2{\@beginsltheorem{#2}{\csname the#1\endcsname}\ignorespaces}
\def\@slythm#1#2[#3]{\@opargbeginsltheorem{#2}{\csname
       the#1\endcsname}{#3}\ignorespaces}
\def\@slthmcounter#1{.\noexpand\arabic{#1}}
\def\@slthmcountersep{}
\def\@beginsltheorem#1#2{\sl \trivlist
      \item[\hskip \labelsep{\bf #1\ #2\thmslcounterend}]}
\def\@opargbeginsltheorem#1#2#3{\sl \trivlist
      \item[\hskip \labelsep{\bf #1\ #2\ (#3)\thmslcounterend}]}
\def\@endsltheorem{\endtrivlist}
\def\thmslcounterend{\hskip 0em\relax}
\def\newslwntheorem#1#2{\expandafter\@ifdefinable\csname #1\endcsname%
\global\@namedef{#1}{\@slwnthm{#1}{#2}}%
\global\@namedef{end#1}{\@endslwntheorem}}
\newcommand{\dd}{\mathrm{d}}
\DeclareMathOperator\erfc{erfc}
\begin{document}

\title{Accounting Noise and the Pricing of CoCos}%

\author{Mike Derksen\thanks{Korteweg-de Vries Institute for Mathematics,
University of Amsterdam, {\tt m.j.m.derksen@uva.nl}} \and
        Peter Spreij\thanks{Korteweg-de Vries Institute for Mathematics,
University of Amsterdam, {\tt spreij@uva.nl}, 
and
Institute for Mathematics, Astrophysics and Particle Physics,
Radboud University
Nij\-me\-gen} 
\and
Sweder van Wijnbergen\thanks{University of Amsterdam, Tinbergen Institute, CEPR,
{\tt s.j.g.vanwijnbergen@uva.nl}}
}

\maketitle

\begin{abstract}\noindent
Contingent Convertible bonds (CoCos) are debt instruments that convert
into equity or are written down in times of distress. Existing pricing
models assume conversion triggers based on market prices and on the
assumption that markets can always observe all relevant firm information.
But all Cocos issued so far have triggers based on accounting ratios
and/or regulatory intervention. We incorporate that markets receive
information through noisy accounting reports issued at discrete time instants,
which allows us to distinguish between market and accounting values,
and between automatic triggers and regulator-mandated conversions.
Our second contribution is to incorporate that coupon payments are
contingent too: their payment is conditional on the Maximum Distributable
Amount not being exceeded. We examine the impact of CoCo design parameters,
asset volatility and accounting noise on the price of a CoCo; and
investigate the interaction between CoCo design features, the capital
structure of the issuing bank and their implications for risk taking
and investment incentives. Finally, we use our model to explain the
crash in CoCo prices after Deutsche Bank's profit warning in February
2016.
\smallskip\\
{\sl JEL codes:} G12, G13, G18, G21, G28, G32
\\
{\sl AMS subject classification:} 91B25,
91G40,
97M30 
\smallskip\\
{\sl Key Words:} Contingent capital pricing, accounting noise, Coco triggers,
Coco design, risk taking incentives, investment incentives
\end{abstract}

\section{Introduction}

\setcounter{equation}{0}

Contingent Capital instruments or Contingent Convertible bonds (CoCos)
are debt instruments designed to convert into equity or to be written
down in times of distress. They differ from regular convertibles in
that conversion is not an option to be exercized by the holder; conversion
is either automatically triggered in response to a particular accounting
ratio falling below a specified threshold or at the discretion of
the regulator when a so called Point of Non-Viability (PONV) has been
reached. They were originally proposed by \citet{Flannery} as an
alternative to raising equity in times of distress. Their use has
exploded since the Great Financial Crisis eroded the capital
base of banks across the world and regulators responded by actually
raising capital requirements so as to increase the Loss Absorption
Capacity of the banking system. In this paper we develop a valuation
model that not only takes into account their particular contingent
properties but also explicitly incorporates the fact that markets
get only imperfect information about the underlying firm dynamics
through noisy accounting reports issued at regular discretely spaced
time points.

An academic literature has quickly emerged since Flannery's original
advocacy of CoCos~\citep{Flannery}, and unanimously recommends basing
trigger ratios on market values. In line with that view, the pricing
models proposed so far assume conversion triggers based on market
prices. Moreover, the literature is without exception built on the
assumption that markets can always observe all relevant firm information,
so the existing literature in fact assumes there is no difference
between accounting and market values, which makes the choice between
them obviously irrelevant. Whatever the merit of that view (the preference
for market based triggers), fact is that basing conversion on market
prices actually makes a CoCo ineligible for being counted against
capital requirements under European law (cf.~\citet[art.~54]{CRRart54}, henceforth referred to as CRR) under
the framework implementing Basel III capital requirements in the European
Union. Accordingly, all CoCos issued so far have without exception
triggers for conversion based on accounting ratio's falling below
a particular ratio (the trigger ratio).

In this paper we attempt to bridge this discrepancy between the academic
valuation literature on the one hand and the actual capital market
practice on the other by valuing CoCos under the assumption that the
only information available is noisy accounting information, our first
contribution, which in addition is only received at pre-specified
discrete moments in time. The underlying processes are continuous,
but markets only receive noisy information on those underlying fundamental
processes at discrete time instants: accounting reports are only released
at discretely spaced points in time, the accounting dates (for example
at the end of each quarter). In this way it is possible to distinguish
between market and accounting values. This informational structure
gives rise to potential discrepancies between true underlying values,
accounting ratios and market prices.

We make a second contribution towards a better pricing model for CoCos.
The asset pricing literature has so far concentrated on the conversion
contingency. But the possibility of a write down or conversion into
equity is not the only option-like characteristic embedded in CoCo
designs. The coupon payments are contingent too, they can only be
paid out if that payment does not exceed the so called Maximum Distributable
Amount, a trigger that binds much earlier than the conversion trigger
(see \citep{Kiewiet_ea_2017} for extensive details). The relevance
of this contingency became very clear in the beginning of 2016, when
a profit warning of Deutsche Bank ahead of their first quarter accounting
report set off an accross the board crash in CoCo prices (cf.~again
\citet{Kiewiet_ea_2017}).

The model developed in this paper thus contributes in two different
ways to the existing literature; it distinguishes between market and
book values of assets in the valuation of CoCos and it allows the
coupons of CoCos to be already cancelled at a moment before the conversion
date. The model is based on the approach used by \citet{Duffie2001},
in which debt is valued under the assumption that the only information
available is noisy accounting information which is received at selected
moments in time. This setting is particularly relevant for the pricing
of CoCos since, as pointed out above, the non-discretionary conversion
triggers are always based on imperfect accounting ratios observed
at discrete moments in time, rather than on continuously observable
market prices.

We first set up a comprehensive description of the structural credit
risk model proposed by Duffie and Lando, including the derivation of all the relevant formulas
and proofs. We then go beyond the paper by Duffie and Lando by using their framework to provide explicit formulas and algorithms for
the pricing of CoCos. The setting is applied to the valuation of different
kinds of CoCo bonds, namely CoCos with a (partial) principal write
down and CoCos with a conversion into shares. Also a distinction is
made between CoCos with a discretionary regulatory trigger, for which
conversion could happen at any moment in time, and CoCos that can
only be triggered at one of the accounting dates. The model does not
lead to closed form solutions, but the expressions for CoCo prices
involve integrals that are computed using MCMC-methods.

We first use the model developed in this paper to examine the impact
of several CoCo design parameters on the price of a CoCo; we then
investigate the interaction between CoCo design features, the capital
structure of the issuing bank and their implications for risk taking
and investment incentives. Finally, the model is used to explain the
big downward price jump that CoCos of Deutsche Bank suffered at the
beginning of 2016 after the release of a profit warning. In this particular
case the added value of the proposed model becomes clear as it
allows for the announcement of a bad accounting report and explicitly
allows for the early cancellation of coupon payments (before conversion)
when the payment would exceed the so called Maximum Distributable
Amount (MDA) trigger, cf.~\cite{GoldmanSachs2016}. Market sources indeed indicated
at the time that the sudden price drop was out of fear for the MDA
trigger more than for setting off the conversion trigger, as the conversion
trigger was still far out of reach.

The remainder of this paper is as follows. Section~\ref{sec:cocos}
describes CoCos in detail, their design features and their regulatory
treatment. Section~\ref{sec:literature} surveys the existing asset
price literature on CoCos. Section~\ref{sec:model} sets up the pricing
model, making a distinction between market values and accounting values
and incorporating the possibility of early cancelling of coupons triggered
by the MDA regulations referred to earlier, Section~\ref{Section_Simulations} outlines the MCMC algorithms
used for evaluating the integrals involved in the final pricing expressions.
We outline the derivation of the key pricing formulas, with full details
in the Appendix.
Section~\ref{sec:application} uses the model to analyse the sensitivity
of CoCo valuation to various design features, changes in the firm's
capital structure and various external shocks. We also analyse the
events following Deutsche Bank's profit warning of late February 2016,
and show that our pricing model does quite well in explaining the
observed CoCo price response. Section~\ref{sec:conclusions} summarizes
and concludes. Proofs of the technical results are collected in the
Appendix.

\section{Contingent convertible bonds}
\label{sec:cocos} \setcounter{equation}{0}

A Contingent Convertible bond is a bond which converts into equity
or is (partially) written down at the conversion date. This means
that the design of a CoCo contract is specified by two main characteristics: 
\begin{itemize}
\item The trigger event: when does conversion happen? 
\item The conversion mechanism: what does happen at conversion? 
\end{itemize}

\subsection{The trigger event}

The trigger event specifies at which moment the conversion takes place.
We can distinguish three types of trigger events; an accounting trigger,
a market trigger and a regulatory trigger. In case of an accounting
trigger, the conversion is triggered by an accounting ratio, e.g.
the Common Equity Tier 1 Ratio (defined as the fraction of common
equity over (risk weighted) assets) falling below a certain barrier.
This type of trigger is typical in practice, although it is widely
criticized in the academic world. For example, in \citet{Flannery}
it is argued that a book value will only be triggered long after the
damage has already occurred, because book values are not up-to-date
at any moment. Therefore, the academic literature widely supports
the use of market price based triggers. In the case of a market trigger,
the conversion happens if a market value, e.g.\ the share price of
the issuing bank, falls below a certain threshold. A market price
is thought to better reflect the current situation of the issuing
bank, because a market price is a forward looking parameter; it reflects
the market's opinion on the future of the bank. See \citet{Haldane2011}
and also \citet{Pennacchi2015} for a very articulate defense of this
point of view, to which we return in the next section. Against this
point of view, in \citet{Sundaresan2014} and \citet{Glasserman2012b}
it is argued that a market trigger could lead to a multiple equilibria
problem for the pricing of a CoCo if the terms of conversion are beneficial
to CoCo holders. In this case, a market trigger could also encourage
CoCo holders to short-sell shares of the issuing bank, to profit from
a conversion, which could subsequently lead to a ``death spiral\char`\"{}.
These warnings may well explain why market based triggers are actually
outlawed in the European Union, cf.~CRR. Whatever the EU's
reason for this outlawing of market based triggers, as a consequence
no CoCos with market price based triggers have been issued so far.
A third type of trigger is the regulatory trigger, which allows the
regulator to call for a conversion. All CoCos issued so far have a
trigger mechanism which is a combination of an accounting trigger
and a regulatory trigger, since that is required for the CoCo to count
as regulatory capital in the European Union. The regulatory trigger
has not been discussed in the asset pricing literature yet (but see
\citet{Chan2018} for a corporate finance perspective on CoCo triggers
focusing on risk taking and regulatory forbearance, i.e.\ regulatory
behavior).

\subsection{The Conversion mechanism}

The conversion mechanism specifies what happens at the moment of conversion.
There are two possibilities: a (partial) principal write-down or a
conversion into shares. In case of a (partial) principal write-down
(PWD) mechanism, the principal of the CoCo bond is (partially) written
down at the moment of conversion, to strengthen the capital position
of the issuing bank. In case of a conversion into shares, the principal
of the CoCo bond is converted into a number of shares. Of course,
it needs to be specified how many shares a CoCo holder receives at
conversion. This conversion rule can be designed in two different
ways. One possibility is that the CoCo holder receives a fixed number
of shares $\Delta$ for every monetary unit of principal. This corresponds
to a pre-specified share price 
$1/\Delta$. Another option is a variable number of shares related
to the market price prevailing at the moment of time conversion takes
place. In this case the CoCo holder would ``buy\char`\"{} a number
of shares against a market price based conversion price. Some authors
have warned for the possibility of a ``death spiral\char`\"{}
when CoCo conversion is advantageous to CoCo holders and thus leads
to incentives to short sell the stock (cf.~\citet{Sundaresan2014}).
This possibly leads to an infinitely large dilution of the existing
shareholders. A way to avoid this would be to place a floor under
the conversion price, again a requirement for the CoCo to count as
capital under European law.

\subsection{The Maximum Distributable Amount (MDA) trigger}

As Contingent Convertible bonds qualify as a form of capital in the
Basel III regulations, they are also affected by the concept of the
Maximum Distributable Amount (MDA), which requires regulators to block
earnings distributions when the bank's capital becomes too low. An
example of such earnings distributions are dividends, but also CoCo
coupon payments if the CoCos qualify as AT1 capital. This means that
when the bank's capital falls below some threshold, always (much)
higher than the CoCo's conversion trigger, the payment of coupons
is stopped until the bank's capital is again above the MDA trigger.
See \citet{Kiewiet_ea_2017} for a detailed discussion of the MDA
trigger for coupons. This trigger has not been considered before in
the asset pricing literature, but will be introduced explicitly in
this paper.

\section{Related Literature}

\label{sec:literature} \setcounter{equation}{0}

The existing asset pricing literature on CoCos can be grouped in three
categories (cf.~\citet{Wilkens2014} for an early assessment following
the same classification): structural models, equity derivative models
and credit risk or reduced form models. In a structural model one
starts by describing the value of the assets of a firm by a stochastic
process. Then the liabilities are introduced and equity is the difference
between the assets and those liabilities. Conversion of CoCos occurs
when the market value of the firm's assets or the firm's capital ratio
falls below a predetermined value (the conversion trigger). In most
papers, liquidation of the firm is also incorporated in the model
by assuming that the equity holders liquidate the firm when the value
of assets falls below some optimal threshold, chosen by the shareholders
to maximize equity value. Furthermore, it is assumed that default
cannot occur before conversion. An early example of such a structural
model is \citet{Albul2012}; there the firm's value $A_{t}$ is described
by a geometric Brownian motion (GBM) process under the Risk Neutral
measure given by 
\begin{eqnarray*}
\dd A_{t} & = & \mu A_{t}\,\dd t+\sigma A_{t}\,\dd W_{t},
\end{eqnarray*}
with $\mu$ and $\sigma$ constants and $W$ a standard Brownian
motion. The risk-free rate is assumed constant. The firm also issues
two types of debt: a straight bond and a CoCo, both with perpetual
maturities and both paying coupons at a constant rate. The CoCo converts
into equity the first time the asset value falls below some threshold
$\alpha_{c}$, so the conversion time is $\tau(\alpha_{c})=\inf\{t\geq0:A_{t}\leq\alpha_{c}\}$.
In their set up, the CoCo converts into equity valued at market prices
at a specified conversion ratio $\lambda$, where $\lambda=1$ means
that the CoCo holder receives equity with a market value equal to
the face value of the CoCo at issue. Like in all other papers surveyed
here, the value of the various claims (including the CoCos) is given
by the risk-neutral expectation of the discounted future cashflows
regarding the claim. The simplicity of the model allows for closed
form expressions. In~\citet{Pennacchi2011} a similar model is introduced,
but also proportional jump processes are added to the firm's dynamics by
adding a compound Poisson process; asset values are governed by the
following stochastic process (also under the risk neutral measure)
\begin{eqnarray*}
\dd A_{t} & = & (r-\lambda_{t}k_{t})A_{t}\,\dd t+\sigma A_{t}\,\dd W_{t}+(Y_{q_{t}}-1)\,\dd q_{t}.
\end{eqnarray*}
Here $\lambda_{t}$ is the risk neutral jump intensity of the Poisson process $q_t$, $k_{t}=E_{Q}(Y-1)$
is the expected proportional jump under the risk neutral measure in
case a Poisson jump occurs.\footnote{cf.~\citet[Chapter~11]{Shreve2004} for a discussion of the various types
of Poisson processes.} The model does not yield closed form solutions so Monte Carlo simulation
is used to sketch the solution structure. In~\citet{Chen2013} an
equally involved model is proposed in which the asset value process
also involves a GBM process with Poisson jumps added in, with a distinction
between market wide and firm specific jumps. They are mainly interested in downside shocks and, for tractability, they
assume that minus the log
of the jump sizes have exponential distributions, which allows the
authors to derive closed form solutions. Conversion of CoCos into
equity is triggered the first time the value of assets falls below
some specified threshold. In contrast to the variable conversion share
price featured in \citet{Albul2012} and \citet{Pennacchi2011}, the
CoCo holders receive a fixed number of shares for every dollar of
principal when the CoCo converts, which is the way all Cocos with a conversion into shares are set up in practice, cf.\ \citet{Avdiev2017}. An interesting innovation
is their introduction of finite maturity debt and the associated potential
debt roll over problems. This feature has a significant effect on
risk taking behavior before conversion: even when the share conversion
takes place at a rate favorable to the old shareholders, conversion
leads to higher roll over costs of short term debt, which mitigates
risk taking incentives ex ante. The model in~\citet{Pennacchi2015}
reverts to a straight GBM process driving asset values, and the focus
is on the uniqueness and in fact existence of a price equilibrium
when conversion involves a wealth transfer favoring either the CoCo
holder (dilutive CoCos) or the old equity holder (non-dilutive CoCos).
Academics widely favor conversion triggers based on market prices
and dilutive conversion ratios, but in~\citet{Sundaresan2014} it
has been argued that stipulating triggers based on market prices leads
to multiple price equilibria in the case of dilutive CoCos and in
fact non-existence in the case of non-dilutive CoCos (i.e.\ conversion
at terms favoring the old shareholders). Both in \citet{Glasserman2012b}
and \citet{Pennacchi2015} it is shown that price equilibria will in
fact be unique in the case of dilutive CoCos. In~\citet{Pennacchi2015}
it is furthermore shown that for perpetual CoCos (which is the structure
most seen in practice) non-existence only occurs for implausible parameter
values even when they are non-dilutive.

All these models have in common that a conversion trigger based on
market values is used, as is widely recommended in the academic literature
(cf.~in particular \citet{Haldane2011} and \citet{Pennacchi2015}
for an extensive discussion of why market prices should be used for
conversion triggers). The problem with basing one's analysis on that
view, whatever its merit, is that there is literally no single CoCo
ever issued, at least within the European Union, that follows such
a trigger definition. Without exception, within the European Union,
where the bulk of all CoCos have been issued, trigger ratios are based
on accounting values. In fact in the EU, market based triggers are
illegal under European law, cf.~CRR, or at least cannot
be counted as capital\footnote{European Law explicitly states that in order to qualify as an Additional
Tier 1 instrument for capital purposes, a CoCo instrument should have
a mechanical book-value based trigger which needs to have been mentioned
explicitly in the prospectus.}.

Moreover, none of the models discussed so far actually distinguishes
between market and accounting based valuation. The single exception
in the literature is \citet{Glasserman2012a}, where it is assumed
that markets and accountants agree on whether a firm is solvent (if
the value of the assets exceeds the value of debt based on market
prices, accounting values are assumed to do likewise). But the ratio
between the market value of equity and the accounting value of equity,
roughly similar to the market-to-book (M-to-B) ratio, follows a GBM
process. This approach gives two additional parameters to be used
in calibration: the volatility of the M-to-B value process and its
correlation to the market process. This is an imaginative attempt
to endogenize the M-to-B value process, but there are problems with
this first attempt at endogenizing the difference between market values
and accounting ratios. First of all, \citet{Haldane2011} doubted
casts on their key assumption, that market and accounting valuations
always agree on whether the firm is solvent. A less fundamental but
practically speaking equally serious problem is that the approach
in \citet{Glasserman2012a} assumes that all processes can be observed
continuously. In practice however regulatory capital ratios are only
calculated on a quarterly basis.

In this paper we address both shortcomings. We assume that firm values
are driven by a GBM, without jumps; we omit jumps for reasons explained
in Section~\ref{sec:model}. We do not introduce a separate independent accounting process like~\citet{Brigo2013}.
Instead we follow the approach taken in \citet{Duffie2001} who assume
an underlying GBM process for the dynamic evolvement of the firm's
asset valuation, but stipulate that that process is not directly observable.
Instead, noisy information (``accounting report'') is brought out
at discrete time instants (``quarters''). It is reasonable that
noise in accounting reports has some persistence, so we assume that the
noise term in the accounting report is serially correlated. See Section~\ref{sec:model}
again for a detailed description.

In addition to these structural models, the literature has seen two
other approaches, the credit derivative approach and an equity derivative
approach. A credit derivative approach is a reduced form approach
where a conversion arrival intensity exists by assumption and is subsequently
modeled as a function of latent state variables or predictors of future
conversions. This approach is appealing for its tractability but is
difficult to apply empirically for the simple reason that conversions
have not yet occurred in practice, making the latent variable approach
untestable in practice as of the date of writing this article. However
it could be useful in linking observed credit spreads on CoCos to
presumed drivers of conversion arrival intensities. A third approach,
described in \citet{Wilkens2014}, is the equity derivative approach
where one tries to replicate the CoCo pay off by using equity derivatives
directly. The CoCo is seen as a straight bond plus Knock-in Forwards
minus Binary Down-in options. The long position in Knock-in Forwards
correspond to the possible purchase of shares at the stipulated conversion
price in case the trigger event takes place (i.e.\ when the forwards
knock in). The short position in Binary Down-in options reflects the
loss of (parts of) the coupon payments once the trigger event occurs.
A shortcoming of this approach is that it assumes the investor receives
forwards at conversion; but in an equity converter CoCo the investor
receives shares, not forwards. This is a significant difference when
the trigger event happens a substantial time before expiration of
the CoCo. This matters since CoCos must be perpetuals to qualify as
capital. If dividends are expected to be low for a substantial period
of time after the trigger event occurs, this may not be a major shortcoming.
A bigger problem with both the credit derivative and the equity derivative
approach is that they unavoidably have to assume trigger events conditional
on market price based triggers, which is counterfactual. A third problem
with the two derivative based approaches is that if they are to remain
analytically tractable, one has to assume a Black-Scholes setting
for the market price which cannot easily handle the fat tail risk
observed in CoCo prices and cannot incorporate the link between fat
tail risk and accounting reports observed in practice, cf.~\citet{Kiewiet_ea_2017}.
In~\citet{Corcuera2013} it is shown that this problem can be addressed
by analyzing an equity derivative based model using ''smile conform''
exponential L\'evy processes for stock price dynamics, incorporating
jumps and fat tails, but this approach unavoidably comes at the cost
of having to replace analytical closed form solutions by simulation
based solutions.

\section{The Model}

This section starts with the model description. After that we derive the density of asset values, conditional on accounting information. Finally we present results for the valuation of CoCos for the different trigger events: regulatory triggers,
possibly with conversion into shares,
and accounting triggers.

\label{sec:model} \setcounter{equation}{0}

\subsection{Model Description and the Firm's Debt Structure }

\label{section2_model_desc} The value of assets of the firm, denoted
by $V_{t}$, is modeled by a geometric Brownian motion, that is 
\begin{equation}
\frac{\dd V_{t}}{V_{t}}=\mu\dd t+\sigma\dd W_{t},\label{eq:model}
\end{equation}
for some $\mu\in\mathbb{R}$, $\sigma>0$. We will not explicitly include jumps in the asset value process, because this does not make sense in the noisy accounting information framework, as it would not be possible to distinguish a big price movement caused by the dynamics of the asset process from a reaction to the accounting information.
  Define $Z_{t}=\log V_{t}$
and $m=\mu-\sigma^{2}/2$, then $Z$ is a drifted Brownian motion
with drift $m$ and volatility $\sigma$, that is 
\[
Z_{t}=Z_{0}+mt+\sigma W_{t}.
\]
As mentioned before, we will consider a framework in which investors
do not observe the real asset value, instead they receive imperfect
accounting information at known observation times $t_{1}<t_{2}<\dots$
(typically every three months). At every observation date $t_{i}$
there arrives an imperfect accounting report of the real asset value
$V_{t_{i}}$, denoted by $\hat{V}_{t_{i}}$, where $\log\hat{V}_{t_{i}}$
and $\log V_{t_{i}}$ are assumed to be joint normal. This means that
we can write 
\begin{equation}
Y_{t_{i}}:=\log\hat{V}_{t_{i}}=Z_{t_{i}}+U_{t_{i}},\label{eq:acc}
\end{equation}
where $U_{t_{i}}$ is normally distributed and independent of $Z_{t_{i}}$.
In the following we will use the notation $Y_{i}:=Y_{t_{i}}$ and
similar notations for $Z$ and $U$. Of course, it is reasonable that
there exists some correlation between the accounting noise $U_{1},U_{2},\dots$.
To be more specific, following \citet{Duffie2001}, it is assumed
that 
\[
U_{i}=\kappa U_{i-1}+\epsilon_{i},
\]
for some fixed $\kappa\in\mathbb{R}$ and independent and indentically
distributed $\epsilon_{1},\epsilon_{2},\dots$, which have a normal
distribution with mean $\mu_{\epsilon}\in\mathbb{R}$ and variance
$\sigma_{\epsilon}^2>0$, and are independent of $Z$. \\
 It is assumed that the firm issues two types of debt; straight debt
and contingent convertible debt. The total par value of straight debt
outstanding is denoted by $P_{1}$, over which coupons are paid continuously
at rate $c_{1}$. Furthermore, the straight bonds have a perpetual
maturity and it is assumed that default occurs the first time the
$\log$-value of assets falls below some threshold $z_{b}$, such
that the default time is defined by 
\[
\tau_{b}=\inf\{t\geq0:Z_{t}\leq z_{b}\}.
\]
At the moment of default a fraction $(1-\alpha)$, for $\alpha\in(0,1)$,
of the firm's asset value is lost to bankruptcy costs, so a fraction
$\alpha$ of the asset value is recovered and distributed among the
senior debt holders.\\
 \indent The total par value of CoCos outstanding is denoted by $P_{2}$,
over which coupons are paid continuously at rate $c_{2}$. Furthermore,
the maturity of the contingent convertible bonds is denoted by $T$.
In our accounting report framework, we will consider two different
types of conversion triggers. The first type of conversion trigger
that will be looked into is the regulatory trigger. Banks have the
obligation to report it to their supervisor at the moment they are
approaching a trigger. Then the regulator will call for conversion,
this is called a Point of Non-Viability. Of course, this type of conversion
can also happen in between accounting report dates. This type of conversion
thus is triggered when the $\log$-value of assets falls for the first
time below a conversion threshold $z_{c}$, i.e.\ the conversion
time is given by 
\[
\tau_{c}=\inf\{t\geq0:Z_{t}\leq z_{c}\}.
\]
We will always assume that $z_{b}<z_{c}$, such that conversion will
always happen before default, i.e.\ $\tau_{c}<\tau_{b}$.\\
 There are also also CoCos whose conversion trigger solely depends
on accounting reports. An example is the Coco issued by Barclays on
March 3, 2017 (cf.\ \citet{Barclays2017}). This means that conversion
happens when the reported value of the capital ratio falls below some
threshold and hence conversion can only happen at one of the accounting
report dates $t_{1},t_{2},\dots$. This corresponds to a setting in
which the conversion time is defined as 
\[
\tau_{c}^{A}=\inf\{t_{i}\geq0:Y_{t_{i}}\leq y_{c}\},
\]
for some threshold $y_{c}\geq0$.\\
 In case we consider CoCos with regulatory triggers, the information
available to investors at time $t$ is described by the filtration
$\mathcal{H}_{t}$, where 
\[
\mathcal{H}_{t}=\sigma(\{Y_{t_{1}},\dots Y_{t_{n}},\mathbf{1}_{\{\tau_{c}\leq s\}},\mathbf{1}_{\{\tau_{b}\leq s\}}:s\leq t\}),\text{ for }t_{n}\leq t<t_{n+1}.
\]
Here, the indicators are included to ensure that it is also observed
in the market whether conversion has already occured or the firm is
liquidated before time $t$. In case we deal with CoCos with an accounting
trigger, the market information is described by the filtration 
\[
\mathcal{H}_{t}=\sigma(Y_{t_{1}},\dots Y_{t_{n}}),\text{ for }t_{n}\leq t<t_{n+1}.
\]

\subsection{The Density of Asset Value, Conditional on Accounting Information}

\label{section2_cond_dens} In the previous subsection it was explained
that we will consider two different types of conversion triggers.
For the first one, the regulatory trigger, the conversion time is
determined by the process $Z$ falling below some threshold. In order
to compute the market value of CoCos with such a trigger, we need
to be able to compute the probability of conversion, conditional on
the market information $\mathcal{H}_{t}$. In order to do so, we will
need the conditional density of $Z$, given the market information
$\mathcal{H}_{t}$. In this subsection, following \citet{Duffie2001},
we will derive an expression for this conditional density, which is
intensively used in the remainder of this article.\\
 \indent Consider $t>0$ such that $t_{n}\leq t<t_{n+1}$ and conversion
did not happen until time $t$, that is $\tau_{c}>t$. The goal in
this section is to find an expression for the conditional distribution
of $Z_{t}$, given $\mathcal{H}_{t}$, which we will denote by $f(t,\cdot)$.
Most of the results in this section can be found in the article by \citet{Duffie2001},
but we will consider them shortly, to illustrate how the particular
density is derived and we will provide some additional explicit formulas.\\
 Consider the following notation for the relevant random vectors and
its realisations: 
\begin{align*}
 & Z^{(n)}=(Z_{1},Z_{2},\dots,Z_{n})\text{ and its realisation }z^{(n)}=(z_{1},z_{2},\dots,z_{n}),\\
 & Y^{(n)}=(Y_{1},Y_{2},\dots,Y_{n})\text{ and its realisation }y^{(n)}=(y_{1},y_{2},\dots,y_{n}),\\
 & U^{(n)}=Y^{(n)}-Z^{(n)}\text{ and its realisation }u^{(n)}=y^{(n)}-z^{(n)}.
\end{align*}

As already mentioned, the goal is to compute $f(t,\cdot)$, the conditional
density of $Z_{t}$ given $Y^{(n)}$ and $\tau_{c}>t$. In order to
do so, we will first compute the conditional density of $Z_{t_{n}}$
at the report time $t_{n}$, which we will denote by $g_{t_{n}}(\cdot|Y^{(n)},\tau_{c}>t_{n})$.
To this end, we will first introduce some functions. Firstly, we need
an expression for the probability $\psi(z_{0},x,\sigma\sqrt{t})$
that $\min\{Z_{s}:s\leq t\}>0$, conditional on $Z_{0}=z_{0}>0$ and
$Z_{t}=x>0$. This expression is stated in the following lemma and
can also be found in the paper by \citet{Duffie2001}. \begin{lemma}
\label{lemmapsi} The probability $\psi(z_{0},x,\sigma\sqrt{t})$
that $\min\{Z_{s}:s\leq t\}>0$, conditional on $Z_{0}=z_{0}>0$ and
$Z_{t}=x>0$, is given by 
\[
\psi(z_{0},x,\sigma\sqrt{t})=1-\exp\left(-\frac{2z_{0}x}{\sigma^{2}t}\right).
\]
\end{lemma}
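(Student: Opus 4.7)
The plan is to reduce the statement to the standard reflection-principle computation of the probability that an \emph{undrifted} Brownian motion, conditioned on its two endpoints, stays strictly positive. The key observation I would exploit is that, given $Z_0 = z_0$ and $Z_t = x$, the conditional law of the whole path $(Z_s)_{0 \leq s \leq t}$ is that of a Brownian bridge of diffusion coefficient $\sigma$ between $z_0$ and $x$, independently of the drift $m$. Consequently $\psi(z_0, x, \sigma\sqrt t)$ depends only on $z_0$, $x$ and $\sigma\sqrt t$, and after rescaling the whole process by $1/\sigma$ it is enough to compute
\[
\psi \;=\; P\!\left(\min_{0 \leq s \leq t} B_s > 0 \,\Bigm|\, B_0 = z_0/\sigma,\; B_t = x/\sigma\right)
\]
for a standard Brownian motion $B$.

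Next I would write this conditional probability as a ratio of an unconditional density and a sub-density,
\[
\psi \;=\; 1 - \frac{q_t(a, b)}{p_t(a, b)},
\]
where $a = z_0/\sigma$, $b = x/\sigma$, $p_t(a,b) = (2\pi t)^{-1/2}\exp\bigl(-(b-a)^2/(2t)\bigr)$ is the usual Gaussian transition density of $B$ at time $t$, and $q_t(a,b)$ is the sub-density at $b$ restricted to paths started from $a$ that have touched $0$ by time $t$. Applying the reflection principle at the first hitting time of $0$ identifies each such path with a Brownian path from $-a$ to $b$ and delivers the clean identity $q_t(a,b) = p_t(-a,b)$. Substituting the explicit Gaussian forms and using the identity $(b+a)^2 - (b-a)^2 = 4ab$ then simplifies the ratio to $\exp(-2ab/t)$, which after reinstating $a = z_0/\sigma$ and $b = x/\sigma$ becomes $\exp\bigl(-2 z_0 x/(\sigma^2 t)\bigr)$, giving the formula claimed in the lemma.

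The main obstacle is conceptual rather than computational, and lies in justifying the two reductions cleanly. First one has to argue that conditioning on both endpoints really does eliminate the drift, which is a standard Brownian-bridge fact that can be checked by a short direct Gaussian calculation on finite-dimensional distributions. Second, the reflected-path argument needs a careful invocation of the strong Markov property at the hitting time of $0$ together with the reflection symmetry $B \mapsto -B$ in order to produce $q_t(a,b) = p_t(-a,b)$; in particular one should note that for $a,b>0$ any Brownian path from $a$ to $-b$ must have crossed $0$, so the mass on the reflected side of the barrier is given exactly by $p_t(a,-b) = p_t(-a,b)$. Once these two ingredients are in place the remaining algebra is routine and yields the stated exponential.
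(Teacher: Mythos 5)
Your proof is correct. The paper itself does not prove Lemma~\ref{lemmapsi}; it simply cites Duffie and Lando (2001), so there is no in-paper argument to compare against. Your two reductions --- first to a drift-free Brownian bridge (conditioning on both endpoints kills the drift, a fact checkable on finite-dimensional Gaussian laws or via Girsanov), then to the reflection-principle identity $q_t(a,b)=p_t(-a,b)$ for the sub-density of paths that have touched $0$ --- are exactly the standard route to this formula, and the final algebra $\bigl((b+a)^2-(b-a)^2\bigr)/(2t)=2ab/t$ with $a=z_0/\sigma$, $b=x/\sigma$ reproduces $\exp\!\left(-2z_0x/(\sigma^2 t)\right)$ as claimed, consistent with the paper's convention that the third argument of $\psi$ is $\sigma\sqrt{t}$.
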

Consider the conditional probability of the intersection $\{Z^{(n)}\leq z^{(n)}\}\cap \{\tau_{c}>t_{n}\}$ given $Y^{(n)}$.  We denote by $b_{n}(\cdot|Y^{(n)})$ its partial derivative w.r.t.\ $z^{(n)}$.
Note that $(Z_{n})_{n\in\mathbb{N}}$ and $(U_{n})_{n\in\mathbb{N}}$
are Markov processes and denote by $p_{Z}(z_{n}|z_{n-1})$ and $p_{U}(u_{n}|u_{n-1})$
their respective transition densities for realisations $z^{(n)},u^{(n)}$.
Furthermore, denote by $p_{Y}(y_{n}|y^{(n-1)})$ the conditional density
of $Y_{n}$ given $Y^{(n-1)}=y^{(n-1)}$. It is then possible (see \citet{Duffie2001}) to write $b_{n}(z^{(n)}|y^{(n)})$
in a recursive way,  
\begin{align}
\lefteqn{b_{n}(z^{(n)}|Y^{(n)})= } \nonumber\\ 
& \frac{\psi(z_{n-1}-z_{c},z_{n}-z_{c},\sigma\sqrt{t_{n}-t_{n-1}})p_{Z}(z_{n}|z_{n-1})p_{U}(y_{n}-z_{n}|y_{n-1}-z_{n-1})b_{n-1}(z^{(n-1)}|y^{(n-1)})}{p_{Y}(y_{n}|y^{(n-1)})}.\label{killeddensity}
\end{align}
It now follows that the conditional density $g_{t_{n}}(\cdot|Y^{(n)},\tau_{c}>t_{n})$
of $Z^{(n)}$ is given by 
\begin{align}
g_{t_{n}}(z^{(n)}|y^{(n)},\tau_{c}>t_{n})=\frac{b_{n}(z^{(n)}|y^{(n)})}{\int_{(z_{c},\infty)^{n}}b_{n}(z^{(n)}|y^{(n)})\dd z^{(n)}}.\label{finaldensity}
\end{align}
It should be noted that there is no explicit expression for the integral
in the denominator of Equation~(\ref{finaldensity}), but note the
important fact that we know the density up to a normalizing constant.
Now the marginal conditional density of $Z_{n}$ at time $t_{n}$
is given by 
\begin{align}
g_{t_{n}}(z_{n}|y^{(n)},\tau_{c}>t_{n})=\int_{(z_{c},\infty)^{n-1}}g_{t_{n}}(z^{(n)}|y^{(n)},\tau_{c}>t_{n})\dd z^{(n-1)}.\label{marginalg}
\end{align}
Now that we found the conditional density for a report time $t_{n}$,
we can use this to find the conditional density $f(t,\cdot)$ for
a general time $t>0$. For this we will need the $\mathcal{H}_{t}$-conditional
density of $Z_{t}$, at a time before the first accounting report
has arrived. Complementing \citet{Duffie2001}, we will now give
an explicit expression for this density.

\begin{lemma} \label{ftildelemma}
$\tilde{f}(t,\cdot,z_{0})$, the $\mathcal{H}_{t}$-conditional density
of $Z_{t}$, at a time $t<\tau_{c}$ before the first accounting report
has arrived, given that $Z$ started in $z_{0}$, is given by 
\begin{align}
\tilde{f}(t,x,z_{0})=\frac{1}{\sigma\sqrt{t}}\frac{\exp\left(\frac{-m(z_{0}-x)}{\sigma^{2}}-\frac{m^{2}t}{2\sigma^{2}}\right)\left(\phi\left(\frac{z_{0}-x}{\sigma\sqrt{t}}\right)-\phi\left(\frac{-z_{0}-x+2z_{c}}{\sigma\sqrt{t}}\right)\right)}{\Phi\left(\frac{z_{0}-z_{c}+mt}{\sigma\sqrt{t}}\right)-e^{-2m(z_{0}-z_{c})/\sigma^{2}}\Phi\left(\frac{z_{c}-z_{0}+mt}{\sigma\sqrt{t}}\right)}.\label{ftilde}
\end{align}
\end{lemma}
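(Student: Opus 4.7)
The plan is to compute separately the ``joint sub-density'' $p(t,x,z_0)\,\dd x := P(Z_t\in \dd x,\,\tau_c>t\mid Z_0=z_0)$ and the survival probability $S(t,z_0):=P(\tau_c>t\mid Z_0=z_0)$, and then obtain $\tilde f(t,x,z_0)=p(t,x,z_0)/S(t,z_0)$. The two ingredients are Girsanov's theorem (to remove the drift $m$) and the reflection principle for standard Brownian motion at the barrier $z_c$.

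First I would perform the change of measure. Writing $Z_t=z_0+mt+\sigma W_t$ under $\mathbb P$, I introduce the equivalent measure $\mathbb Q$ defined by $\frac{\dd\mathbb Q}{\dd\mathbb P}\big|_{\mathcal F_t}=\exp\bigl(-\tfrac{m}{\sigma}W_t-\tfrac{m^{2}}{2\sigma^{2}}t\bigr)$, so that under $\mathbb Q$ the process $Z$ is a driftless Brownian motion with volatility $\sigma$ started at $z_0$. For any bounded measurable functional $F$ of the path up to time $t$, the Girsanov identity gives
\begin{equation*}
E^{\mathbb P}[F]=E^{\mathbb Q}\!\left[F\cdot \exp\!\left(\tfrac{m}{\sigma^{2}}(Z_t-z_0)-\tfrac{m^{2}t}{2\sigma^{2}}\right)\right],
\end{equation*}
where I used $W_t=(Z_t-z_0-mt)/\sigma$ to rewrite the Radon--Nikodym derivative in terms of $Z_t$ only (the stochastic integral collapses because $Z_t$ has no drift quadratic variation beyond $\sigma^2t$).

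Second, I compute the $\mathbb Q$-joint sub-density of $(Z_t,\{\tau_c>t\})$. Under $\mathbb Q$, $Z$ is a driftless Brownian motion, and the reflection principle at the barrier $z_c$ (applied to the reflected starting point $2z_c-z_0$) yields, for $x>z_c$,
\begin{equation*}
P^{\mathbb Q}(Z_t\in \dd x,\,\tau_c>t\mid Z_0=z_0)=\tfrac{1}{\sigma\sqrt t}\!\left[\phi\!\left(\tfrac{z_0-x}{\sigma\sqrt t}\right)-\phi\!\left(\tfrac{-z_0-x+2z_c}{\sigma\sqrt t}\right)\right]\dd x,
\end{equation*}
using that $\phi$ is even. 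Combining this with the Girsanov identity applied to $F=\mathbf 1_{\{Z_t\in \dd x,\,\tau_c>t\}}$ produces the numerator of \eqref{ftilde}: the factor $\exp\bigl(-m(z_0-x)/\sigma^{2}-m^{2}t/(2\sigma^{2})\bigr)$ is exactly the Radon--Nikodym weight evaluated on $\{Z_t=x\}$.

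Third, I evaluate the denominator $S(t,z_0)$. This is the classical first-passage probability for a drifted Brownian motion and can be obtained either from Shreve~\citep[Ch.~8]{Shreve2004} directly, or by integrating the numerator just found over $x\in(z_c,\infty)$ and identifying the resulting Gaussian integrals. Either route gives
\begin{equation*}
S(t,z_0)=\Phi\!\left(\tfrac{z_0-z_c+mt}{\sigma\sqrt t}\right)-e^{-2m(z_0-z_c)/\sigma^{2}}\,\Phi\!\left(\tfrac{z_c-z_0+mt}{\sigma\sqrt t}\right),
\end{equation*}
which is precisely the denominator in the lemma. The conditional density is then $\tilde f(t,x,z_0)=p(t,x,z_0)/S(t,z_0)$, concluding the proof. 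The main technical obstacle is bookkeeping the signs and arguments in the reflection principle and verifying that the Girsanov weight depends only on $Z_t$ (so it factors through the event $\{Z_t\in \dd x\}$); after that, the computation is essentially symbolic.
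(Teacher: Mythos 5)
Your proposal is correct and follows essentially the same route as the paper: both decompose $\tilde f$ via Bayes' rule into the killed transition density $P(Z_t\in\dd x,\ \tau_c>t)$ divided by the survival probability $1-\pi(t,z_0-z_c)$, and both arrive at the identical numerator and denominator. The only difference is that where you derive the killed transition density from scratch via Girsanov plus the reflection principle, the paper simply cites the equivalent closed-form joint law of a drifted Brownian motion and its running maximum from Harrison (Section~1.8, Proposition~1) --- which is itself standardly proved by exactly your argument.
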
 \begin{proof} The proof of this lemma can be found in
the Appendix. \end{proof} Finally, we are now able time to compute
the conditional density $f(t,\cdot)$ for a general time $t>0$, $t_{n}<t<t_{n+1}$
such that $\tau_{c}>t$. Using the stationarity of $Z$, the $\mathcal{H}_{t}$-conditional
density of $Z_{t}$ can be written as 
\begin{align}
f(t,x)=\int_{z_{c}}^{\infty}\tilde{f}(t-t_{n},x,z_{n})g_{t_{n}}(z_{n}|Y^{(n)},\tau_{c}>t_{n})\dd z_{n}.\label{generaldensity}
\end{align}
Equation (\ref{generaldensity}) should be read as follows; until
time $t_{n}$ the process $Z$ has stayed above $z_{c}$ and ended
in $z_{n}$, then on the time interval $(t_{n},t)$, in which no new
accounting reports arrive, the process has to move from $z_{n}$ to
$x$ and stay above $z_{c}$. Although we do not have an analytical
expression for the density $f(t,\cdot)$, it is important to note
at this point that $f(t,\cdot)$ is written as the integral of $g_{t_{n}}$,
which is known up to normalizing constant, as can be seen from Equation~(\ref{finaldensity}).
This makes it possible to compute integrals with respect to $f(t,\cdot)$,
using Monte Carlo Markov Chain simulations, which means that results
that are stated as an integral weighted by the density $f(t,\cdot)$
can actually be computed. The necessary algorithms are described in
Section~\ref{Section_Simulations}.\\
 As a first use of the density $f(t,\cdot)$, we can for a time $s>t$,
where $t<\tau_{c}$, define the $\mathcal{H}_{t}$-(CoCo) survival
probability $p_{c}(t,s)=\mathbb{P}(\tau_{c}>s|\mathcal{H}_{t})$.
This probability is then given by 
\begin{align}
p_{c}(t,s)=\int_{z_{c}}^{\infty}(1-\pi(s-t,x-z_{c}))f(t,x)\dd x,\label{survprobCoCo}
\end{align}
where, as in \citet{Duffie2001}, $\pi(t,x)$ denotes the probability
that $Z$ hits $0$ before time $t$, starting from $x>0$. This probability
is given by the following lemma, which follows from the well known
expression for the distribution of a Brownian motion's running minimum (see e.g.\ \citet{Harrison1985}, Section 1.8, equation (11)).
\begin{lemma} \label{pi} The probability $\pi(t,x)$ that $Z$ hits
$0$ before time $t$, starting from $x>0$, is given by 
\[
\pi(t,x)=1-\Phi\left(\frac{x+mt}{\sigma\sqrt{t}}\right)+e^{-2mx/\sigma^{2}}\Phi\left(\frac{-x+mt}{\sigma\sqrt{t}}\right).
\]
\end{lemma}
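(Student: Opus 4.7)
Because $Z_s = x + ms + \sigma W_s$, the event $\{\min_{s\le t}Z_s \le 0\}$ coincides with $\{\min_{s\le t}(\sigma W_s + ms) \le -x\}$, so $\pi(t,x)$ is the probability that a drifted Brownian motion started at $0$ reaches the level $-x$ by time $t$. I would compute this via the standard Girsanov-plus-reflection argument. Under the measure $\mathbb{Q}$ on $\mathcal{F}_t$ with $d\mathbb{Q}/d\mathbb{P}=\exp(-(m/\sigma)W_t-\tfrac{1}{2}(m/\sigma)^{2}t)$, the process $\tilde W_s := W_s+(m/\sigma)s$ is a standard Brownian motion, and $\sigma W_s+ms=\sigma\tilde W_s$. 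So under $\mathbb{Q}$ the problem reduces to a first-passage problem for a (scaled) standard Brownian motion, for which the reflection principle supplies the joint law of $(\tilde W_t,\min_{s\le t}\tilde W_s)$.

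The next step is to express $\pi(t,x)$ as a $\mathbb{Q}$-expectation using the inverse Radon--Nikodym derivative $\exp((m/\sigma)\tilde W_t-\tfrac{1}{2}(m/\sigma)^{2}t)$, and split the integral at $\tilde W_t = -x/\sigma$. In the region $\{\tilde W_t\le -x/\sigma\}$ the minimum constraint is automatic and the relevant $\mathbb{Q}$-density is simply $\phi(y/\sqrt{t})/\sqrt{t}$; in the region $\{\tilde W_t>-x/\sigma\}$, reflection yields density $\phi((2x/\sigma+y)/\sqrt{t})/\sqrt{t}$.

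Completing the square in $y$ in each integrand converts it into a shifted Gaussian density centred at $mt/\sigma$. The first integral then evaluates immediately to $\Phi(-(x+mt)/(\sigma\sqrt{t}))=1-\Phi((x+mt)/(\sigma\sqrt{t}))$. In the second, the change of variables $u=2x/\sigma+y$ produces a leftover prefactor $e^{-2mx/\sigma^{2}}$, after which the same completion of squares reduces the remaining integral to $\Phi((-x+mt)/(\sigma\sqrt{t}))$. Summing the two pieces gives the stated formula. The only bookkeeping point that needs a bit of care is tracking the $e^{-2mx/\sigma^{2}}$ factor emerging from the reflected piece after the shift; otherwise the computation is entirely routine, and indeed the formula can simply be cited from the first-passage distribution of drifted Brownian motion in \citet{Harrison1985}, Section~1.8.
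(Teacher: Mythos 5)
Your derivation is correct: all the steps check out, including the completion of squares in each region and the emergence of the factor $e^{-2mx/\sigma^{2}}$ from the shift $u=2x/\sigma+y$ in the reflected piece, and the resulting sum matches the stated formula. For comparison, the paper does not actually prove Lemma~\ref{pi} at all; it simply records that the statement follows from the known distribution of the running minimum of a drifted Brownian motion and cites \citet{Harrison1985}, Section~1.8, equation~(11). Your Girsanov-plus-reflection argument is precisely the classical derivation that sits behind that citation, so in substance you have supplied the proof the paper delegates to the reference. A marginally shorter route, more in the spirit of how the paper handles the neighbouring Lemma~\ref{ftildelemma}, would be to start from the joint density of the drifted Brownian motion and its running extremum --- the paper's Equation~(\ref{harrlemma}), Harrison's Proposition~1 of Section~1.8 --- and integrate out the terminal value; this avoids redoing the change of measure, since that formula already packages the Girsanov correction. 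Either way the computation is routine and your version is complete and accurate.
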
 

\subsection{Valuation of CoCos}

In this subsection we will provide formulas for the market values
of the different type of CoCos. Firstly, in subsection~\ref{subsec_PWDreg},
CoCos with a regulatory trigger which suffer a principal write down
at conversion, are valued. Then, it is also shown how to incorporate
early cancelling of coupons, due to the MDA-regulations. In subsection
\ref{subsec_ce_reg}, we then extend the PWD-assumption to CoCos with
a conversion into shares. These first two cases are all for the regulatory
trigger and the results are all in the form of an integral weighted
by the the above derived conditional density $f(t,\cdot)$. It is
postponed to Section~\ref{Section_Simulations} to provide the necessary
Algorithms to compute this integrals. Then, in Section~\ref{subsec_acctrig},
PWD CoCos with only an accounting trigger are valued. 

\subsubsection{Valuation of PWD CoCos with a regulatory trigger}

\label{subsec_PWDreg} In this section we will value CoCos with a
regulatory trigger and a principal write down at conversion. At the
end of the section we will also incorporate the MDA-trigger. Recall that in case of a regulatory trigger, the conversion date
was defined as 
\[
\tau_{c}=\inf\{t\geq0:Z_{t}\leq z_{c}\}.
\]
Also, recall that the firm pays coupons continuously at rate $c_{2}$
until either maturity or conversion. We consider a principal write
down CoCo, which means a fraction $1-R$ of the principal value is
written down at conversion, while a fraction $R$ is recovered to
the bond holder, for $R\in[0,1)$. Furthermore it is assumed that
the risk free rate is constant, denoted by $r$. \\
 Now the value at time $t<\tau_{c}$ of the CoCos, given the imperfect
accounting information $\mathcal{H}_{t}$, is given by 
\begin{align}\label{CoCopricedufflando}
C(t) & =\mathbb{E}\left(P_{2}e^{-r(T-t)}\mathbf{1}_{\{\tau_{c}>T\}}|\mathcal{H}_{t}\right)+\mathbb{E}\left(\int_{t}^{T}c_{2}P_{2}e^{-r(u-t)}\mathbf{1}_{\{\tau_{c}>u\}}\dd u|\mathcal{H}_{t}\right)\nonumber \\
 & \quad+\mathbb{E}\left(RP_{2}e^{-r(\tau_{c}-t)}\mathbf{1}_{\{\tau_{c}\leq T\}}|\mathcal{H}_{t}\right)\nonumber \\
 & =P_{2}e^{-r(T-t)}p_{c}(t,T)+c_{2}P_{2}\int_{t}^{T}e^{-r(u-t)}p_{c}(t,u)\dd u-RP_{2}\int_{t}^{T}e^{-r(u-t)}p_{c}(t,\dd u).
\end{align}
Here the first term represents the payment of the principal, in case
conversion does not happen before maturity, while the second term
accounts for the payment of coupons until either conversion or maturity.
The last term values the recovery of the principal at conversion.
Note that every term is written in terms of the CoCo survival probability
$p_{c}(t,s)$, which was given as an integral, weighted by the density
$f(t,\cdot)$. Unsurprisingly, it turns out that these three terms
together can be written as one integral weighted by the conditional
density $f(t,\cdot)$, which was derived in the previous section.
This leads to the main result of this subsection, which is proved
in the Appendix. \begin{theorem}[Price of a PWD CoCo with a regulatory trigger]
\label{thmCoCoprice1} The secondary market price of the CoCo at time
$t<\tau_{c}$ is given by 
\begin{align}
C(t)=\int_{z_{c}}^{\infty}h(x)f(t,x)\dd x,\label{CoCoprice2}
\end{align}
where $h(x)$ is defined as 
\begin{align}
h(x):=\frac{r-c}{r}P_{2}e^{-r(T-t)}(1-\pi(T-t,x-z_{c}))+\frac{c_{2}}{r}P_{2}+\left(\frac{c_{2}P_{2}}{r}-RP_{2}\right)I(x),\label{hfunction}
\end{align}
in which $I(x)$ is given by 
\begin{align}\label{I(x)}
I(x) & :=\exp\left(-\frac{m(x-z_{c})+(x-z_{c})\sqrt{m^{2}+2r\sigma^{2}}}{\sigma^{2}}\right)\left(\Phi\left(\frac{x-z_{c}-\sqrt{m^{2}+2r\sigma^{2}}(T-t)}{\sigma\sqrt{T-t}}\right)-1\right)\nonumber \\
 & ~~~+\exp\left(-\frac{m(x-z_{c})-(x-z_{c})\sqrt{m^{2}+2r\sigma^{2}}}{\sigma^{2}}\right)\left(\Phi\left(\frac{x-z_{c}+\sqrt{m^{2}+2r\sigma^{2}}(T-t)}{\sigma\sqrt{T-t}}\right)-1\right),
\end{align}
where $\Phi$ denotes the normal cumulative distribution function.
\end{theorem}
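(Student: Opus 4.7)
The approach is to evaluate the three expectations in (\ref{CoCopricedufflando}) by substituting the integral representation of $p_c(t,s)$ from (\ref{survprobCoCo}) into each term, applying Fubini to interchange the $x$- and $u$-integrations, and then evaluating the resulting inner time-integrals in closed form. The third expectation in (\ref{CoCopricedufflando}) equals $RP_2\,\mathbb{E}[e^{-r(\tau_c-t)}\mathbf{1}_{\{\tau_c\le T\}}\mid\mathcal{H}_t]$, which after the same Fubini step contributes $RP_2\int_{z_c}^{\infty} L(x)\,f(t,x)\,\dd x$, where
\[
L(x):=\mathbb{E}\bigl[e^{-r(\tau_c-t)}\mathbf{1}_{\{\tau_c\le T\}}\mid Z_t=x\bigr]=\int_0^{T-t} e^{-rv}\,\frac{\partial\pi}{\partial v}(v,x-z_c)\,\dd v.
\]
Writing $A_1(x):=1-\pi(T-t,x-z_c)$ and $A_2(x):=\int_0^{T-t} e^{-rv}(1-\pi(v,x-z_c))\,\dd v$, the bracket multiplying $f(t,x)$ is $P_2 e^{-r(T-t)}A_1(x)+c_2 P_2 A_2(x)+RP_2 L(x)$.

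A single integration by parts in $A_2(x)$, using $\pi(0,\cdot)=0$, yields the identity $rA_2(x)=1-e^{-r(T-t)}A_1(x)-L(x)$. Using it to eliminate $A_2$ recasts the bracket as
\[
P_2\left[\frac{r-c_2}{r}\,e^{-r(T-t)}A_1(x)+\frac{c_2}{r}-\left(\frac{c_2}{r}-R\right)L(x)\right],
\]
which matches $h(x)$ from (\ref{hfunction}) precisely if $L(x)=-I(x)$. The remaining---and main---task is therefore to produce the explicit closed form of $L(x)$.

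This truncated Laplace transform governs the first-passage time $\tau$ to $0$ of the drifted Brownian motion $X_v=(x-z_c)+mv+\sigma W_v$, and its explicit evaluation is the main obstacle. I plan to obtain it by a single Girsanov change of measure with tilting parameter $\theta=(-m-\gamma)/\sigma$, where $\gamma:=\sqrt{m^2+2r\sigma^2}$; this particular $\theta$ satisfies the algebraic identity $\theta m/\sigma+\theta^2/2=r$, which (together with $X_\tau=0$) makes the product $(dP/dQ)|_{\mathcal{F}_\tau}\cdot e^{-r\tau}$ collapse to the deterministic constant $e^{-(m+\gamma)(x-z_c)/\sigma^2}$ on $\{\tau<\infty\}$, and it simultaneously replaces the drift of $X$ under $Q$ by $-\gamma$, so that $\tau<\infty$ holds $Q$-almost surely. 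Hence $L(x)=e^{-(m+\gamma)(x-z_c)/\sigma^2}\,Q(\tau\le T-t)$, and applying the two-term reflection-principle formula of Lemma~\ref{pi} to the $Q$-Brownian motion with drift $-\gamma$ in place of $m$ gives
\[
L(x)=e^{-(x-z_c)(m+\gamma)/\sigma^2}\,\Phi\!\left(\frac{-(x-z_c)+\gamma(T-t)}{\sigma\sqrt{T-t}}\right)+e^{-(x-z_c)(m-\gamma)/\sigma^2}\,\Phi\!\left(\frac{-(x-z_c)-\gamma(T-t)}{\sigma\sqrt{T-t}}\right).
\]
Applying $\Phi(y)-1=-\Phi(-y)$ to both terms in the expression for $I(x)$ in (\ref{I(x)}) confirms $L(x)=-I(x)$; substituting this back yields $C(t)=\int_{z_c}^{\infty} h(x)\,f(t,x)\,\dd x$, as claimed.
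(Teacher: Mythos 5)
Your proof is correct, and the first half (Fubini, the integration-by-parts identity $rA_2=1-e^{-r(T-t)}A_1-L$, and the resulting reorganization of the bracket into $h(x)$ with $L=-I$) is essentially identical to the paper's argument, which eliminates $\tilde I(x)=A_2(x)$ in exactly the same way. Where you genuinely diverge is in the closed-form evaluation of the truncated Laplace transform $L(x)=\mathbb{E}[e^{-r\tau}\mathbf{1}_{\{\tau\le T-t\}}]$. The paper computes $\partial_u\pi$ explicitly to get the first-passage density $\frac{x-z_c}{\sigma(u-t)^{3/2}}\phi(\cdot)$, substitutes $v=u^{-1/2}$, and evaluates $\int\exp(-Av^2-B/v^2)\,\dd v$ via the classical $u=\sqrt{A}v\pm\sqrt{B}/v$ splitting into complementary error functions. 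You instead use an exponential tilting with $\theta=-(m+\gamma)/\sigma$, $\gamma=\sqrt{m^2+2r\sigma^2}$; your algebra checks out ($\theta m/\sigma+\theta^2/2=(\gamma^2-m^2)/(2\sigma^2)=r$, and $X_\tau=0$ makes the Radon--Nikodym factor times $e^{-r\tau}$ collapse to $e^{-(m+\gamma)(x-z_c)/\sigma^2}$), so $L(x)=e^{-(m+\gamma)(x-z_c)/\sigma^2}Q(\tau\le T-t)$ with $Q$-drift $-\gamma$, and Lemma~\ref{pi} with $m\mapsto-\gamma$ reproduces the two-term expression; the sign flip $\Phi(y)-1=-\Phi(-y)$ then confirms $L=-I$. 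Your route buys a shorter, more conceptual derivation that avoids the somewhat delicate explicit integral manipulation and makes transparent why $\sqrt{m^2+2r\sigma^2}$ appears; the paper's direct calculus is more self-contained in that it needs no change-of-measure machinery and only elementary substitutions. (The one point worth making explicit in your write-up is that the identity $\mathbb{E}_P[e^{-r\tau}\mathbf{1}_{\{\tau\le s\}}]=\mathbb{E}_Q[(\dd P/\dd Q)|_{\mathcal{F}_\tau}e^{-r\tau}\mathbf{1}_{\{\tau\le s\}}]$ rests on optional stopping of the density martingale at the bounded time $\tau\wedge s$, which is legitimate precisely because you restrict to the event $\{\tau\le s\}$.)
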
 In the valuation of the firm's convertible debt in
Equation~(\ref{CoCopricedufflando}), it is assumed that coupons
are paid until conversion. However, as pointed out before, CoCos are
affected by the Maximum Distributable Amount (MDA), which requires
regulators to stop earnings distributions when the firm's total capital
falls below some trigger, higher than the conversion trigger. This
we will incorporate in the model by introducing a trigger $z_{cc}>z_{c}$.
If $Z$ is below $z_{cc}$ the firm will not pay coupons, while if
$Z$ is above $z_{cc}$ the firm still pays coupons. To value the
CoCo in this case, only the second term in Equation~(\ref{CoCopricedufflando})
needs to be adjusted. In this case, coupons are only paid at time
$u$ if $Z_{u}>z_{cc}$, so the term 
\[
\mathbb{E}\left(\int_{t}^{T}c_{2}P_{2}e^{-r(u-t)}\mathbf{1}_{\{\tau_{c}>u\}}\dd u|\mathcal{H}_{t}\right),
\]
needs to be replaced with 
\begin{align}
\mathbb{E}\left(\int_{t}^{T}c_{2}P_{2}e^{-r(u-t)}\mathbf{1}_{\{\tau_{c}>u,Z_{u}>z_{cc}\}}\dd u|\mathcal{H}_{t}\right).\label{newterm}
\end{align}
For $\tau_{c}>t$ and $t_{n}\leq t<t_{n+1}$ this term equals 
\[
c_{2}P_{2}\int_{t}^{T}e^{-r(u-t)}\mathbb{P}\left(\tau_{c}>u,Z_{u}>z_{cc}|Y^{(n)},\tau_{c}>t\right)\dd u.
\]
Thus, to value the CoCos while including the effects of the MDA-trigger,
the quantity we need to compute is $\mathbb{P}\left(\tau_{c}>u,Z_{u}>z_{cc}|Y^{(n)},\tau_{c}>t\right)$,
which can be written in a similar way as the CoCo survival probability
$p_{c}(t,s)$. In order to compute this conditional probability, we
first need the following well known result :
the joint distribution of a drifted Brownian motion and its running
minimum (see e.g.~\citet{Harrison1985}, Section 1.8, Corollary 7). \begin{lemma} The joint probability $\tilde{\pi}(t,x,y)$
that $Z$, starting from $x>0$, does not hit 0 before time $t$ and
that $Z_{t}>y$ is given by 
\begin{align}
\tilde{\pi}(t,x,y):=\mathbb{P}(\inf_{0\leq s\leq t}Z_{s}>0,Z_{t}>y)=\Phi\left(\frac{x-y+mt}{\sigma\sqrt{t}}\right)-e^{-2mx/\sigma^{2}}\Phi\left(\frac{-x-y+mt}{\sigma\sqrt{t}}\right).\label{pitilde}
\end{align}
\end{lemma}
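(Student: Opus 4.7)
The plan is to reduce the claim to the joint distribution of a driftless Brownian motion and its running minimum via a Girsanov change of measure, and then invoke the reflection principle. First I would write $Z_s = x + ms + \sigma W_s$ for a standard Brownian motion $W$ and set $\tilde{W}_s := W_s + (m/\sigma)s$, so that
\[
\tilde{\pi}(t,x,y) = \mathbb{P}\Bigl(\inf_{0 \leq s \leq t}\tilde{W}_s > -x/\sigma,\; \tilde{W}_t > (y-x)/\sigma\Bigr).
\]
Under the measure $\mathbb{Q}$ defined by $\dd\mathbb{Q}/\dd\mathbb{P} = \exp(-(m/\sigma)W_t - m^2 t/(2\sigma^2))$, Girsanov's theorem asserts that $\tilde{W}$ is a standard $\mathbb{Q}$-Brownian motion; inverting the density and changing back to $\mathbb{P}$ gives
\[
\tilde{\pi}(t,x,y) = \mathbb{E}_{\mathbb{Q}}\!\left[\exp\!\Bigl(\tfrac{m}{\sigma}\tilde{W}_t - \tfrac{m^2 t}{2\sigma^2}\Bigr)\mathbf{1}_{\{\inf_{s \leq t}\tilde{W}_s > -x/\sigma,\, \tilde{W}_t > (y-x)/\sigma\}}\right].
\]

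Next, I would use the classical joint density of a standard Brownian motion and its running minimum, obtained via the reflection principle: for $v \leq \min(0,u)$,
\[
\mathbb{Q}\bigl(\tilde{W}_t \in \dd u,\, \inf_{0 \leq s \leq t}\tilde{W}_s \in \dd v\bigr) = \frac{2(u - 2v)}{\sqrt{2\pi t^3}}\,\exp\!\Bigl(-\tfrac{(u-2v)^2}{2t}\Bigr)\,\dd v\,\dd u.
\]
Plugging this into the $\mathbb{Q}$-expectation and integrating out $v$ over $(-x/\sigma,\,\min(0,u))$ gives, after the obvious substitution inside the exponent, $\exp(-u^2/(2t)) - \exp(-(u + 2x/\sigma)^2/(2t))$; hence the expectation reduces to a one-dimensional integral over $u \in ((y-x)/\sigma,\infty)$ of
\[
\frac{1}{\sqrt{2\pi t}}\, e^{(m/\sigma)u - m^2 t/(2\sigma^2)}\Bigl(e^{-u^2/(2t)} - e^{-(u + 2x/\sigma)^2/(2t)}\Bigr).
\]

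Finally, I would split this into two Gaussian integrals and complete the square in each exponent. The first integral yields $\Phi((x-y+mt)/(\sigma\sqrt{t}))$ after using $1-\Phi(\cdot)=\Phi(-\cdot)$, while the second, after completion of the square, produces a constant prefactor $e^{-2mx/\sigma^2}$ multiplied by $\Phi((-x-y+mt)/(\sigma\sqrt{t}))$. This prefactor is the characteristic signature of combining the Girsanov exponential $e^{(m/\sigma)u}$ with the reflected Gaussian $e^{-(u+2x/\sigma)^2/(2t)}$. The main obstacle is purely bookkeeping at this last step: tracking the constants that emerge from completing the square so that they collapse exactly into the claimed $e^{-2mx/\sigma^2}$ prefactor, with the $\Phi$-arguments matching the asserted ones. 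Everything else follows from standard tools (Girsanov plus reflection).
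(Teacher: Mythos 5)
Your proposal is correct: the Girsanov change of measure to remove the drift, followed by the reflection-principle joint density of a standard Brownian motion and its running minimum, integration over the minimum, and completion of squares, does produce exactly the two terms $\Phi\bigl(\tfrac{x-y+mt}{\sigma\sqrt{t}}\bigr)$ and $e^{-2mx/\sigma^{2}}\Phi\bigl(\tfrac{-x-y+mt}{\sigma\sqrt{t}}\bigr)$; I checked the bookkeeping and the constants collapse as you claim. The paper, however, does not prove this lemma at all: it simply cites Harrison (1985), Section 1.8, Corollary 7, and elsewhere (in the proof of Lemma \ref{ftildelemma}) it imports the companion density result for a \emph{drifted} Brownian motion and its running maximum, Equation (\ref{harrlemma}), from which the present lemma also follows by a single integration, namely $\tilde{\pi}(t,x,y)=\int_{y}^{\infty}\hat{f}(x,0,\tilde{z},t)\,\dd\tilde{z}$ in the paper's notation. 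So your route is genuinely different in that it is self-contained (reflection principle only for driftless Brownian motion, with the drift handled explicitly by Girsanov), at the cost of redoing the measure change that Harrison's cited result already encapsulates; the paper's route buys brevity and consistency with the density it uses elsewhere. One minor caveat worth recording: your reduction to a single integral over $u\in((y-x)/\sigma,\infty)$ implicitly uses $y\ge 0$ (so that the constraint $u>-x/\sigma$ is not binding); for $y<0$ the event $\{Z_{t}>y\}$ is implied by $\{\inf_{s\le t}Z_{s}>0\}$ and the formula degenerates to $1-\pi(t,x)$. This restriction is harmless here since the paper applies the lemma with $y=z_{cc}-z_{c}>0$.
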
 Now, similarly to Equation~(\ref{survprobCoCo}), we
can write 
\begin{align}
\mathbb{P}\left(\tau_{c}>u,Z_{u}>z_{cc}|Y^{(n)},\tau_{c}>t\right)=\int_{z_{c}}^{\infty}\tilde{\pi}(u-t,x-z_{c},z_{cc}-z_{c})f(t,x)\dd x,\label{MDA_reg_PWD}
\end{align}
such that we again found the solution as an integral weighted by the
density $f(t,\cdot)$. The other two terms in Equation~(\ref{CoCopricedufflando})
do not change, so the CoCo price at time $t<\tau_{c}$ is given by
the sum of the new term in (\ref{newterm}) and the unchanged part
\[
Pe^{-r(T-t)}p_{c}(t,T)-RP\int_{t}^{T}e^{-r(u-t)}p_{c}(t,\dd u).
\]
By an adaption of Equation~(\ref{CoCoprice2}) it is seen that this
unchanged part can be written as 
\[
\int_{z_{c}}^{\infty}\tilde{h}(x)f(t,x)\dd x,
\]
where 
\[
\tilde{h}(x)=Pe^{-r(T-t)}(1-\pi(T-t,x-z_{c}))-RPI(x),
\]
in which $I(x)$ is given by Equation~(\ref{I(x)}). 

\subsubsection{Valuation of CoCos with a conversion into shares and a regulatory
trigger}

\label{subsec_ce_reg} \label{secconversion} In this section we consider
the valuation of contingent convertible bonds which convert into equity
at the conversion date. To recall, we assumed the firm issues two
types of debt; straight debt and contingent convertible debt. The
total par value of straight debt outstanding is denoted by $P_{1}$,
over which coupons are paid continuously at rate $c_{1}$. Furthermore,
the straight bonds have a perpetual maturity and default occurs at
\[
\tau_{b}=\inf\{t\geq0:Z_{t}\leq z_{b}\}.
\]
At the moment of default a fraction $(1-\alpha)$, for $\alpha\in(0,1)$,
of the firm's asset value is lost to bankruptcy costs, so a fraction
$\alpha$ of the asset value is recovered and distributed among the
senior debt holders.\\
 \indent The total par value of CoCos outstanding is denoted by $P_{2}$,
over which coupons are paid continuously at rate $c_{2}$. Furthermore,
the maturity of the contingent convertible bonds is denoted by $T$.
We consider a regulatory trigger, which means the conversion date
is defined as 
\[
\tau_{c}=\inf\{t\geq0:Z_{t}\leq z_{c}\},
\]
where $z_{c}>z_{b}$, to ensure that conversion happens before default.
Following \citet{Chen2013}, we will assume the CoCo holders receive
$\Delta$ shares for every dollar of principal at the moment of conversion.
This means that, if we normalize the number of shares before conversion
to 1, the CoCo holders own a fraction $\rho = \frac{\Delta P_{2}}{\Delta P_{2}+1}$
of the firm's equity after conversion. \\
 To recall, the information in the market at time $t$ is described
by the filtration 
\[
\mathcal{H}_{t}=\sigma(\{Y_{t_{1}},\dots Y_{t_{n}},\mathbf{1}_{\{\tau_{c}\leq s\}},\mathbf{1}_{\{\tau_{b}\leq s\}}:s\leq t\}),\text{ for \ensuremath{t_{n}\leq t<t_{n+1}}}.
\]
In analogy to Equation~(\ref{CoCopricedufflando}), the market price
of the CoCos is given by 
\begin{align}\label{priceeqconv}
C(t) & =\mathbb{E}\left(P_{2}e^{-r(T-t)}\mathbf{1}_{\{\tau_{c}>T\}}|\mathcal{H}_{t}\right)+\mathbb{E}\left(\int_{t}^{T}c_{2}P_{2}e^{-r(u-t)}\mathbf{1}_{\{\tau_{c}>u\}}\dd u|\mathcal{H}_{t}\right)\nonumber \\
 & ~~~+\mathbb{E}\left(\frac{\Delta P_{2}}{\Delta P_{2}+1}E^{PC}(\tau_{c})e^{-r(\tau_{c}-t)}\mathbf{1}_{\{\tau_{c}\leq T\}}|\mathcal{H}_{t}\right).
\end{align}
Of course only the third term has changed compared to Equation~(\ref{CoCopricedufflando}),
because this term describes what happens at the moment of conversion
(note that the second term needs to be replaced by the corresponding
term in Equation~(\ref{newterm}), if we want to include early cancelling
of coupons). The third term now describes that the CoCo holders obtain
a fraction $\frac{\Delta P_{2}}{\Delta P_{2}+1}$ of the firms post-conversion
equity, denoted by $E^{PC}(\tau_{c})$. This post conversion equity
satisfies 
\[
E^{PC}(\tau_{c})=V_{\tau_{c}}-D(\tau_{c})-\mathbb{E}\left(e^{-r(\tau_{b}-\tau_{c})}(1-\alpha)V_{\tau_{b}}|\mathcal{H}_{\tau_{c}}\right).
\]
That is, the firm's value of assets minus the value of straight debt,
denoted by $D(\tau_{c})$, and bankruptcy costs, described by the
last term. Note that the value of straight debt at conversion is given
by 
\[
D(\tau_{c})=\mathbb{E}\left(\int_{\tau_{c}}^{\infty}c_{1}P_{1}e^{-r(u-\tau_{c})}\mathbf{1}_{\{\tau_{b}>u\}}\dd u|\mathcal{H}_{\tau_{c}}\right)+\mathbb{E}\left(\alpha V_{\tau_{b}}e^{-r(\tau_{b}-\tau_{c})}|\mathcal{H}_{\tau_{c}}\right),
\]
where the first term accounts for the continuous payment of coupons
and the second term describes the payment at default. It follows that
the post-conversion equity value is given by 
\begin{align*}
E^{PC}(\tau_{c}) & =V_{\tau_{c}}-\mathbb{E}\left(\int_{\tau_{c}}^{\infty}c_{1}P_{1}e^{-r(u-\tau_{c})}\mathbf{1}_{\{\tau_{b}>u\}}\dd u|\mathcal{H}_{\tau_{c}}\right)-\mathbb{E}\left(e^{-r(\tau_{b}-\tau_{c})}V_{\tau_{b}}|\mathcal{H}_{\tau_{c}}\right)\\
 & =e^{z_{c}}-\mathbb{E}\left(\int_{t}^{\infty}c_{1}P_{1}e^{-r(u-\tau_{c})}\mathbf{1}_{\{\tau_{c}\leq u,\tau_{b}>u\}}\dd u|\mathcal{H}_{\tau_{c}}\right)-e^{z_{b}}\mathbb{E}\left(e^{-r(\tau_{b}-\tau_{c})}|\mathcal{H}_{\tau_{c}}\right).
\end{align*}
So for $\tau_{c}>t$, the third term in Equation~(\ref{priceeqconv})
can be written as 
\begin{align}
\hspace{2em} & \hspace{-2em}\mathbb{E}\left(\frac{\Delta P_{2}}{\Delta P_{2}+1}E^{PC}(\tau_{c})e^{-r(\tau_{c}-t)}\mathbf{1}_{\{\tau_{c}\leq T\}}|\mathcal{H}_{t}\right)\nonumber \\
 & =\frac{\Delta P_{2}}{\Delta P_{2}+1}e^{z_{c}}\int_{t}^{T}e^{-r(u-t)}\mathbb{P}(\tau_{c}\in\dd u|\tau_{c}>t,Y^{(n)})\nonumber \\
 & ~~~-\frac{\Delta P_{2}c_{1}P_{1}}{\Delta P_{2}+1}\int_{t}^{\infty}e^{-r(u-t)}\mathbb{P}(\tau_{c}\leq T\wedge u,\tau_{b}>u|\tau_{c}>t,Y^{(n)})\dd u\nonumber \\
 & ~~~-\frac{\Delta P_{2}}{\Delta P_{2}+1}e^{z_{b}}\int_{t}^{\infty}e^{-r(u-t)}\mathbb{P}(\tau_{c}\leq T,\tau_{b}\in\dd u|\tau_{c}>t,Y^{(n)}).
\end{align}
So in this case, the key to valuation is finding an expression for
the joint conditional distribution of $\tau_{c}$ and $\tau_{b}$,
as needed in the above integrals. These expressions can again be written
as (double) integrals, weighted by the density $f(t,\cdot)$. Which
leads to the following theorem, for which a proof is provided in the
Appendix. \begin{theorem}[Price of a CoCo with a regulatory trigger and a conversion into shares]
\label{thmCoCoprice2} The secondary market price at time $t<\tau_{c}$
of the CoCo with a regulatory trigger and a conversion into shares
is given by 
\begin{align}
C(t)=\int_{z_{c}}^{\infty}(h_{0}(x)+h_{1}(x))f(t,x)\dd x+\int_{z_{c}}^{\infty}\int_{z_{c}}^{\infty}f(t,x)\hat{f}(x,z_{c},\tilde{z},T-t)h_{2}(\tilde{z})\dd\tilde{z}\dd x,\label{CoCoprice3}
\end{align}
where $\hat{f}(x,y,\tilde{z},t_{2})$ is given by 
\begin{align}
\hat{f}(x,y,\tilde{z},t_{2})=\frac{1}{\sigma\sqrt{t_{2}}}\exp\left(\frac{-m(x-\tilde{z})}{\sigma^{2}}-\frac{m^{2}t_{2}}{2\sigma^{2}}\right)\left(\phi\left(\frac{x-\tilde{z}}{\sigma\sqrt{t_{2}}}\right)-\phi\left(\frac{-x-\tilde{z}+2y}{\sigma\sqrt{t_{2}}}\right)\right)\label{fhatfunctie}
\end{align}
and where 
\begin{align*}
 & h_{0}(x)=\frac{r-c_{2}}{r}P_{2}e^{-r(T-t)}(1-\pi(T-t,x-z_{c}))+\frac{c_{2}P_{2}}{r}+\frac{c_{2}P_{2}}{r}I(x),\\
 & h_{1}(x)=\frac{\Delta P_{2}}{\Delta P_{2}+1}\left(e^{z_{b}}J_{b}(x)+c_{1}P_{1}\tilde{I}(x)-c_{1}P_{1}\tilde{J}_{b}(x)-e^{z_{c}}I(x)\right),\\
 & h_{2}(\tilde{z})=\frac{\Delta P_{2}}{\Delta P_{2}+1}e^{-r(T-t)}(c_{1}P_{1}\tilde{J}_{b}(\tilde{z})-e^{z_{b}}J_{b}(\tilde{z})),
\end{align*}
in which $I(x)$ is given by Equation~(\ref{I(x)}), $\tilde{I}(x)$
equals 
\[
\tilde{I}(x)=-\frac{1}{r}e^{-r(T-t)}(1-\pi(T-t,x-z_{c}))+\frac{1}{r}+\frac{1}{r}I(x),
\]
$J_{b}(x)$ is given by 
\begin{align*}
J_{b}(x) & =-\exp\left(-\frac{m(x-z_{b})+(x-z_{b})\sqrt{m^{2}+2r\sigma^{2}}}{\sigma^{2}}\right)
\end{align*}
and $\tilde{J}_{b}(x)=\frac{1}{r}+\frac{1}{r}J_{b}(x)$.

\end{theorem}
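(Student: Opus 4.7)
The plan is to derive~(\ref{CoCoprice3}) by evaluating~(\ref{priceeqconv}) term by term, leaning on Theorem~\ref{thmCoCoprice1} for the first two terms and reducing the third to three standard Brownian-motion calculations. The first two terms of~(\ref{priceeqconv}) coincide exactly with the corresponding terms of the principal-write-down CoCo with recovery $R=0$. Theorem~\ref{thmCoCoprice1} at $R=0$ therefore immediately gives the contribution $\int_{z_{c}}^{\infty}h_{0}(x)f(t,x)\,\dd x$, where $h_{0}(x)$ matches the stated expression.

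For the third term I would substitute the explicit form of $E^{PC}(\tau_{c})$, use the tower property together with linearity, and rewrite it as $\frac{\Delta P_{2}}{\Delta P_{2}+1}\bigl(e^{z_{c}}A_{1}-c_{1}P_{1}A_{2}-e^{z_{b}}A_{3}\bigr)$, with $A_{1}=\mathbb{E}(e^{-r(\tau_{c}-t)}\mathbf{1}_{\{\tau_{c}\leq T\}}|\mathcal{H}_{t})$, $A_{2}=\mathbb{E}(\int_{\tau_{c}}^{\infty}e^{-r(u-t)}\mathbf{1}_{\{\tau_{b}>u\}}\,\dd u\cdot\mathbf{1}_{\{\tau_{c}\leq T\}}|\mathcal{H}_{t})$ and $A_{3}=\mathbb{E}(e^{-r(\tau_{b}-t)}\mathbf{1}_{\{\tau_{c}\leq T\}}|\mathcal{H}_{t})$, exactly the three quantities isolated in the decomposition preceding the theorem. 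Each is evaluated by conditioning on $Z_{t}=x$ against the density $f(t,\cdot)$.

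Term $A_{1}$ is already the Laplace-transform ingredient of Theorem~\ref{thmCoCoprice1}: conditional on $Z_{t}=x$, the standard first-passage-time Laplace transform for drifted Brownian motion gives $\mathbb{E}(e^{-r(\tau_{c}-t)}\mathbf{1}_{\{\tau_{c}\leq T\}}|Z_{t}=x)=-I(x)$, producing the $-e^{z_{c}}I(x)$ contribution to $h_{1}$. For $A_{3}$, I would split $\mathbf{1}_{\{\tau_{c}\leq T\}}=1-\mathbf{1}_{\{\tau_{c}>T\}}$; the unconditional Laplace transform of first passage to $z_{b}$ from $x$ gives $\mathbb{E}(e^{-r(\tau_{b}-t)}|Z_{t}=x)=-J_{b}(x)$, and since $\{\tau_{c}>T\}\subset\{\tau_{b}>T\}$, the restricted piece is handled by conditioning on $Z_{T}=\tilde{z}$ at $T$ and applying the Markov property after $T$. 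The joint sub-density of $(Z_{T},\{\tau_{c}>T\})$ given $Z_{t}=x$ is $\hat{f}(x,z_{c},\tilde{z},T-t)$, whose closed form comes from the same reflection-principle computation that underlies Lemma~\ref{ftildelemma} but without normalising by the survival probability. This produces the $e^{z_{b}}J_{b}(x)$ term of $h_{1}$ and the $-e^{z_{b}}J_{b}(\tilde{z})$ part of $h_{2}$.

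The main obstacle is $A_{2}$. After Fubini, write $A_{2}=\int_{t}^{\infty}e^{-r(u-t)}\mathbb{P}(\tau_{c}\leq T\wedge u,\tau_{b}>u|\mathcal{H}_{t})\,\dd u$ and apply the set identity $\{\tau_{c}\leq T\wedge u,\tau_{b}>u\}=\{\tau_{b}>u\}\setminus\{\tau_{c}>T\wedge u,\tau_{b}>u\}$, splitting the $u$-integral at $T$. For $u\leq T$ the inclusion $z_{b}<z_{c}$ forces $\{\tau_{c}>u\}\subset\{\tau_{b}>u\}$, so the subtracted event is simply $\{\tau_{c}>u\}$; integration by parts, combined with the identification $I(x)=-\mathbb{E}(e^{-r(\tau_{c}-t)}\mathbf{1}_{\{\tau_{c}\leq T\}}|Z_{t}=x)$, shows $\int_{t}^{T}e^{-r(u-t)}\mathbb{P}(\tau_{c}>u|Z_{t}=x)\,\dd u=\tilde{I}(x)$. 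The perpetual piece $\int_{t}^{\infty}e^{-r(u-t)}\mathbb{P}(\tau_{b}>u|Z_{t}=x)\,\dd u$ reduces analogously to $\tilde{J}_{b}(x)$. For the tail $u>T$, conditioning on $Z_{T}=\tilde{z}$ with sub-density $\hat{f}(x,z_{c},\tilde{z},T-t)$ and applying Markov at $T$ collapses the remaining $u$-integral to $e^{-r(T-t)}\tilde{J}_{b}(\tilde{z})$. Collecting all single-$x$ contributions produces $h_{1}(x)$ and all $\hat{f}$-weighted contributions produce $h_{2}(\tilde{z})$ as stated. The real work lies in the bookkeeping for $A_{2}$---the case split at $T$ and the integration-by-parts identification of $\tilde{I}$ and $\tilde{J}_{b}$; everything else reduces to routine Laplace-transform calculations for drifted Brownian motion.
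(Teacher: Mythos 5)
Your proposal is correct and follows essentially the same route as the paper's proof: the first two terms via Theorem~\ref{thmCoCoprice1} with $R=0$, the third term decomposed into exactly the three integrals of the paper's Equation~(\ref{newterm2.1}), and the post-$T$ contributions handled by conditioning on $Z_T=\tilde z$ against the sub-density $\hat f(x,z_c,\tilde z,T-t)$ and invoking the Markov property. The only (cosmetic) divergence is in the $\tau_b$-term: you split $\mathbf{1}_{\{\tau_c\leq T\}}=1-\mathbf{1}_{\{\tau_c>T\}}$ and apply the first-passage Laplace transform directly, whereas the paper differentiates the joint probability $\gamma(x,z_c,z_b,u-t,T-t)$ in $u$ and splits the $u$-integral at $T$, introducing an intermediate quantity $I_b(x)$ that subsequently cancels -- your bookkeeping is marginally leaner but the computation is the same.
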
 

\subsubsection{Valuation of PWD CoCos with an accounting trigger}

\label{subsec_acctrig} In this section we will consider PWD CoCos
which conversion trigger solely depends on accounting reports, for
example the CoCos issued by Barclays. This means that conversion happens
when the reported value of the capital ratio falls below some threshold
and hence conversion can only happen at one of the accounting report
dates $t_{1},t_{2},\dots$. This corresponds to a setting in which
the conversion time is defined as 
\[
\tau_{c}^{A}=\inf\{t_{i}\geq0:Y_{t_{i}}\leq y_{c}\},
\]
for some threshold $y_{c}\geq0$. In this case the available information
at time $t$ would reduce to 
\[
\mathcal{H}_{t}=\sigma(Y_{t_{1}},\dots,Y_{t_{n}}),
\]
for the largest $n$ such that $t_{n}\leq t$. This means that we
are interested in the probability that, given $n$ accounting reports
and $\tau_{c}^{A}>t_{n}$, the $(n+i)$th accounting report will cause
a trigger event, for $i=1,2,\dots$. This probability is given in
the next proposition, of which the proof can be found in the Appendix.
\begin{proposition} \label{prop_istep} The $i$-step conditional
survival probability (concerning $\tau_{c}^{A}$), conditional on
$n$ previous accounting reports, is given by 
\begin{align}
\mathbb{P}\left(\tau_{c}^{A}>t_{n+i}|Y^{(n)}=y^{(n)}\right) & =\int_{\mathbb{R}^{n}}\mathbb{P}(\xi(z_{n})\in(y_{c},\infty)^{i})p_{Z}(z^{(n)}|y^{(n)})\dd z^{(n)},\label{survprobonlyrep}
\end{align}
where 
\begin{align}
p_{Z}(z^{(n)}|y^{(n)}) & =\frac{\prod_{i=1}^{n}p_{Z}(z_{i}|z_{i-1})p_{U}(y_{i}-z_{i}|y_{i-1}-z_{i-1})}{p_{Y}(y_{n}|y^{(n-1)})},\label{neededdensity}
\end{align}
and where $\xi(z_{n})$ denotes a multivariate normal distributed
random variable with mean vector $\hat{\mu}_{i}$ and covariance matrix
$\Sigma_{i}$, for which formulas, depending on $z_{n}$, are provided
in the Appendix. \end{proposition}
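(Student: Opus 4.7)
The plan is to reduce the survival event to a joint tail statement about the next $i$ accounting reports, condition on the hidden vector $Z^{(n)}$ so that those reports become jointly Gaussian given the state at time $t_n$, and then apply Bayes' rule to obtain the filter density $p_Z(z^{(n)}\mid y^{(n)})$.

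First, by the very definition $\tau_c^A=\inf\{t_i\ge 0: Y_{t_i}\le y_c\}$, conditioning on $Y^{(n)}=y^{(n)}$ with $y_j>y_c$ for $j\le n$ automatically encodes $\tau_c^A>t_n$, and
\[
\{\tau_c^A>t_{n+i}\}=\{Y_{n+1}>y_c,\ldots,Y_{n+i}>y_c\}.
\]
So the conditional survival probability equals $\mathbb{P}(Y_{n+1},\ldots,Y_{n+i}\in(y_c,\infty)\mid Y^{(n)}=y^{(n)})$. Next I condition on the latent vector $Z^{(n)}$ and use the tower property to write this as
\[
\int_{\mathbb{R}^n}\mathbb{P}\bigl(Y_{n+1},\ldots,Y_{n+i}>y_c\,\big|\,Y^{(n)}=y^{(n)},Z^{(n)}=z^{(n)}\bigr)\,p_Z(z^{(n)}\mid y^{(n)})\,\dd z^{(n)}.
\]

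For the inner probability, the Markov properties of $(Z_n)$ and $(U_n)$ imply that the future $(Z_{n+j},U_{n+j})_{j=1}^i$ depends on $(Z^{(n)},U^{(n)})$ only through $(Z_n,U_n)$; but on the conditioning set $Y^{(n)}=y^{(n)}$, $U_n=y_n-z_n$ is a deterministic function of $z_n$. Writing
\[
Z_{n+j}=z_n+m(t_{n+j}-t_n)+\sigma(W_{t_{n+j}}-W_{t_n}),\qquad U_{n+j}=\kappa^{j}u_n+\sum_{\ell=1}^{j}\kappa^{j-\ell}\epsilon_{n+\ell},
\]
the vector $(Y_{n+1},\ldots,Y_{n+i})=(Z_{n+j}+U_{n+j})_{j=1}^i$ is therefore jointly Gaussian; I denote it $\xi(z_n)$. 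Its mean $\hat\mu_i$ combines $z_n+m(t_{n+j}-t_n)$ from the $Z$-part and $\kappa^{j}(y_n-z_n)+\mu_\epsilon\sum_{\ell=1}^{j}\kappa^{j-\ell}$ from the $U$-part. Its covariance $\Sigma_i$ is the sum of the independent $Z$- and $U$-covariances: the $Z$-part contributes $\sigma^2(t_{n+\min(j,k)}-t_n)$, while the $U$-part contributes $\sigma_\epsilon^2\sum_{\ell=1}^{\min(j,k)}\kappa^{j+k-2\ell}$. With this, the inner probability equals $\mathbb{P}(\xi(z_n)\in(y_c,\infty)^i)$.

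Finally, $p_Z(z^{(n)}\mid y^{(n)})$ is obtained from Bayes' rule. The independence of $Z$ and $U$ together with their Markovianity gives the joint density
\[
p(z^{(n)},y^{(n)})=\prod_{i=1}^{n}p_Z(z_i\mid z_{i-1})\,p_U(y_i-z_i\mid y_{i-1}-z_{i-1}),
\]
and dividing by the marginal $p_Y(y^{(n)})$ (equivalently, using the one-step recursive factorization through $p_Y(y_n\mid y^{(n-1)})$, in analogy with the derivation that leads to \eqref{killeddensity}) yields \eqref{neededdensity}. Inserting this into the outer integral produces \eqref{survprobonlyrep}.

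The main obstacle is the explicit bookkeeping of $\hat\mu_i$ and $\Sigma_i$: one has to separate out the Brownian increments beyond $t_n$ (which are independent of $\mathcal{F}_{t_n}$) from the AR(1) innovations $\epsilon_{n+1},\ldots,\epsilon_{n+i}$, verify that cross-covariances between the two groups vanish, and then collect the $\kappa$-geometric sums. This is routine linear-Gaussian algebra but must be carried out carefully; it is naturally relegated to the Appendix.
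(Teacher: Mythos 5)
Your proposal is correct and follows essentially the same route as the paper's proof: reduce the survival event to the joint tail of the next $i$ reports, condition on $Z^{(n)}$ (whose influence on the future passes only through $z_n$, with $u_n=y_n-z_n$ fixed by the conditioning), identify the conditional law of $(Y_{n+1},\dots,Y_{n+i})$ as the Gaussian $\xi(z_n)$, and obtain $p_Z(z^{(n)}\mid y^{(n)})$ by the same Bayes/recursive factorization used for $b_n$. The paper packages the mean and covariance as $\hat\mu_i=M\mu_i'+\cdots$ and $\Sigma_i=M\Sigma_i'M^{\top}$ for a lower-triangular-type matrix $M$, while you write the same quantities entrywise; the two are identical.
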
 Note that that $p_{Z}(z_{i}|z_{i-1})$
is a Gaussian density with mean $z_{i-1}+m\Delta t$ and variance
$\sigma^{2}\Delta t$ and that $p_{U}(u_{i}|u_{i-1})$ is a Gaussian
density with mean $\kappa u_{i-1}+\mu_{\epsilon}$ and variance $\sigma_{\epsilon}^{2}$.
We did not provide a formula for $p_{Y}(y_{n}|y^{n-1})$, but it turns
out in Section~\ref{Section_Simulations} that we do not need this
to compute the integral of Equation~(\ref{survprobonlyrep}).\\
 Using this proposition, it is now possible to value the contingent
convertible bond with, as before, principal $P_{2}$, continuous coupon
rate $c_{2}$, maturity $T$ and a principal write-down with recovery
rate $R$. As in Equation~(\ref{CoCopricedufflando}), this CoCo
has secondary market price 
\begin{align}\label{CoCopriceacc}
C'(t) & =\mathbb{E}\left(Pe^{-r(T-t)}\mathbf{1}_{\{\tau_{c}^{A}>T\}}|\mathcal{H}_{t}\right)+\mathbb{E}\left(\int_{t}^{T}cPe^{-r(u-t)}\mathbf{1}_{\{\tau_{c}^{A}>u\}}\dd u|\mathcal{H}_{t}\right)\nonumber \\
 & ~~~+\mathbb{E}\left(RPe^{-r(\tau_{c}^{A}-t)}\mathbf{1}_{\{\tau_{c}^{A}\leq T\}}|\mathcal{H}_{t}\right),
\end{align}
which can be written in terms of the above derived $i$-step survival
probability, as is stated in the next result, which is proved in the
Appendix. \begin{theorem}[Price of a PWD CoCo with a sole accounting trigger]\label{thmCoCoprice3}
For $t_{n}\leq t<t_{n+1}$, $T=t_{n+m}$ for some $m\in\mathbb{N}$
and $Y^{(n)}=y^{(n)}$, where $y_{i}>y_{c},1\leq i\leq n$, the market
price $C'(t)$ of the CoCos is given by 
\begin{align}\label{CoCopriceacc2}
C'(t) & =(1-R)Pe^{-r(T-t)}\mathbb{P}(\tau_{c}^{A}>t_{n+m}|Y^{(n)}=y^{(n)})\nonumber \\
 & ~~~~+\sum_{i=1}^{m-1}\left(\frac{cP}{r}-RP\right)(e^{-r(t_{n+i}-t)}-e^{-r(t_{n+i+1}-t)})\mathbb{P}(\tau_{c}^{A}>t_{n+i}|Y^{(n)}=y^{(n)})\nonumber \\
 & ~~~~+\frac{cP}{r}(1-e^{-r(t_{n+1}-t)})+RPe^{-r(t_{n+1}-t)},
\end{align}
\end{theorem}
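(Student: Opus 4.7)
The plan is to evaluate the three conditional expectations in equation~(\ref{CoCopriceacc}) separately, exploiting the key structural feature that the accounting trigger time $\tau_c^A$ is supported on the discrete set $\{t_{n+1},t_{n+2},\dots\}$. Throughout I write $p_i := \mathbb{P}(\tau_c^A>t_{n+i}\mid Y^{(n)}=y^{(n)})$ with the convention $p_0=1$ (since $\tau_c^A>t_n$ by hypothesis); all three terms will ultimately be expressed in terms of the $p_i$, which are delivered by Proposition~\ref{prop_istep}.

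The first term is immediate: it equals $Pe^{-r(T-t)}p_m$. For the coupon term, I would use that $u\mapsto\mathbf{1}_{\{\tau_c^A>u\}}$ is piecewise constant between accounting dates. Splitting the integral $\int_t^T$ as $\int_t^{t_{n+1}}+\sum_{i=1}^{m-1}\int_{t_{n+i}}^{t_{n+i+1}}$, the indicator equals $1$ on the first piece and equals $\mathbf{1}_{\{\tau_c^A>t_{n+i}\}}$ on the $i$-th later piece. Taking conditional expectations and evaluating the deterministic exponential integrals gives
\[
\mathbb{E}\Bigl(\int_t^T cPe^{-r(u-t)}\mathbf{1}_{\{\tau_c^A>u\}}\dd u\Bigm|\mathcal{H}_t\Bigr)=\tfrac{cP}{r}(1-e^{-r(t_{n+1}-t)})+\tfrac{cP}{r}\sum_{i=1}^{m-1}(e^{-r(t_{n+i}-t)}-e^{-r(t_{n+i+1}-t)})p_i.
\]

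For the recovery term, the discreteness of $\tau_c^A$ lets me write
\[
\mathbb{E}\bigl(RPe^{-r(\tau_c^A-t)}\mathbf{1}_{\{\tau_c^A\le T\}}\bigm|\mathcal{H}_t\bigr)=RP\sum_{i=1}^{m}e^{-r(t_{n+i}-t)}(p_{i-1}-p_i),
\]
since $\mathbb{P}(\tau_c^A=t_{n+i}\mid\mathcal{H}_t)=p_{i-1}-p_i$. The main (though purely mechanical) work is then an Abel summation: re-index the sum so that the coefficient of each $p_i$ (for $1\le i\le m-1$) combines the $i$-th and $(i+1)$-th exponentials, separate out the boundary contributions $RPe^{-r(t_{n+1}-t)}$ (from $p_0=1$) and $-RPe^{-r(T-t)}p_m$ (from $i=m$), and add the first term $Pe^{-r(T-t)}p_m$ to the $p_m$ boundary to produce the $(1-R)Pe^{-r(T-t)}p_m$ appearing in~(\ref{CoCopriceacc2}). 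Adding the coupon contribution, the coefficient of $p_i$ for $1\le i\le m-1$ becomes exactly $(\tfrac{cP}{r}-RP)(e^{-r(t_{n+i}-t)}-e^{-r(t_{n+i+1}-t)})$, and the remaining constant terms collect to $\tfrac{cP}{r}(1-e^{-r(t_{n+1}-t)})+RPe^{-r(t_{n+1}-t)}$, matching~(\ref{CoCopriceacc2}) term by term.

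No genuinely hard step arises: the only subtlety is the Abel rearrangement that fuses the recovery sum with the coupon sum so that the $\tfrac{cP}{r}-RP$ prefactor emerges, and the correct handling of the two boundary indices $i=0$ and $i=m$. Once that bookkeeping is done, substituting Proposition~\ref{prop_istep} for $p_i$ completes the proof.
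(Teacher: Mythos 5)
Your proposal is correct and follows essentially the same route as the paper's proof: decompose the price into the three expectations, exploit the piecewise constancy of $u\mapsto\mathbf{1}_{\{\tau_c^A>u\}}$ between accounting dates to evaluate the coupon integral, write the recovery term as $RP\sum_{i=1}^m e^{-r(t_{n+i}-t)}(p_{i-1}-p_i)$, and regroup by summation by parts to isolate the $(1-R)Pe^{-r(T-t)}p_m$ and $(\tfrac{cP}{r}-RP)$ coefficients. Your handling of the boundary indices $i=0$ and $i=m$ matches the paper's computation exactly.
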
 It should be noted that the only things left to compute
are the $i$-step survival probabilities $\mathbb{P}(\tau_{c}^{A}>t_{n+i}|Y^{(n)}=y^{(n)})$,
for which the formula is provided in Proposition~\ref{prop_istep}
in terms of an integral, which can be evaluated using the method described
in Section~\ref{Section_Simulations}.\\
 \indent As in the case of the regulatory trigger, we can also incorporate
the MDA-regulations, which imply that coupons are already cancelled
at a moment before the conversion date. It is now assumed that coupons
over the time interval $[t_{i},t_{i+1})$ are only paid if $Y_{i}>y_{cc}$,
for some trigger level $y_{cc}>y_{c}$. To value the CoCo in this
case, the second term in Equation~(\ref{CoCopriceacc}) needs to
be changed to 
\[
\mathbb{E}\left(\sum_{i=1}^{m-1}\int_{t_{n+i}}^{t_{n+i+1}}cPe^{-r(u-t)}\mathbf{1}_{\{\tau_{c}^{A}>u,Y_{n+i}>y_{cc}\}}\dd u+\mathbf{1}_{\{Y_{n}>y_{cc}\}}\int_{t}^{t_{n+1}}cPe^{-r(u-t)}\dd u\Big|\mathcal{H}_{t}\right),
\]
where $t_{n}\leq t<t_{n+1}$, $T=t_{n+m}$ for some $m\in\mathbb{N}$.
\\
 This leads us to the next result, stating the value of PWD CoCo with
an trigger, when we also take into account the early cancelling of
coupons, due to the MDA regulations. \begin{theorem}[Price of PWD CoCo with a sole accounting trigger, including MDA regulations]\label{thmCoCoprice4}
When we include the MDA trigger, the CoCo price of Equation~(\ref{CoCopriceacc2})
modifies into 
\begin{align}
C'(t) & =Pe^{-r(T-t)}\mathbb{P}(\tau_{c}^{A}>t_{n+m}|Y^{(n)}=y^{(n)})+\mathbf{1}_{\{Y_{n}>y_{cc}\}}\frac{cP}{r}(1-e^{-r(t_{n+1}-t)})\nonumber \\
 & ~~~~+\sum_{i=1}^{m-1}\frac{cP}{r}(e^{-r(t_{n+i}-t)}-e^{-r(t_{n+i+1}-t)})\mathbb{P}(\tau_{c}^{A}>t_{n+i},Y_{n+i}>y_{cc}|Y^{(n)}=y^{(n)})\nonumber \\
\nonumber \\
 & ~~~~+RP\sum_{i=1}^{m}e^{-r(t_{n+j}-t)}\left(\mathbb{P}(\tau_{c}^{A}>t_{n+i-1}|Y^{(n)}=y^{(n)})-\mathbb{P}(\tau_{c}^{A}>t_{n+i}|Y^{(n)}=y^{(n)})\right),\nonumber \\
\end{align}
where, similar to Equation~(\ref{survprobonlyrep}), 
\begin{align}
\mathbb{P}(\tau_{c}^{A}>t_{n+i},Y_{n+i}>y_{cc}|Y^{(n)}=y^{(n)}) & =\int_{\mathbb{R}^{n}}\mathbb{P}(\xi(z_n) \in(y_{c},\infty)^{i-1}\times(y_{cc},\infty))p_{Z}(z^{(n)}|y^{(n)})\dd z^{(n)}.\label{survprobonlyrep2}
\end{align}
\end{theorem}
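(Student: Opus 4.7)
The plan is to start from the MDA-modified version of the pricing identity for $C'(t)$ that appears just above the theorem, namely the principal term $\mathbb{E}(Pe^{-r(T-t)}\mathbf{1}_{\{\tau_c^A>T\}}|\mathcal{H}_t)$, the MDA-adjusted coupon expectation displayed there, and the recovery term $\mathbb{E}(RPe^{-r(\tau_c^A-t)}\mathbf{1}_{\{\tau_c^A\le T\}}|\mathcal{H}_t)$, and to reduce each of them to the $i$-step conditional probabilities supplied by Proposition~\ref{prop_istep} together with its straightforward generalisation~(\ref{survprobonlyrep2}). The key observation that drives the entire argument is that since $\mathcal{H}_t=\sigma(Y_{t_1},\dots,Y_{t_n})$ and $\tau_c^A$ takes values only in the grid $\{t_1,t_2,\dots\}$, under $t_n\le t<t_{n+1}$ one has $\mathbf{1}_{\{\tau_c^A>u\}}=\mathbf{1}_{\{\tau_c^A>t_{n+i}\}}$ for every $u\in[t_{n+i},t_{n+i+1})$; moreover, under the standing hypothesis $y_i>y_c$ for $1\le i\le n$, the conditioning event already forces $\tau_c^A>t_n$.

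For the principal term, $T=t_{n+m}$ immediately gives $Pe^{-r(T-t)}\mathbb{P}(\tau_c^A>t_{n+m}|Y^{(n)}=y^{(n)})$, the first summand of the claim. The MDA-adjusted coupon expectation splits naturally into a stub piece $\mathbf{1}_{\{Y_n>y_{cc}\}}\int_t^{t_{n+1}}cPe^{-r(u-t)}\,\dd u$, which is $\mathcal{H}_t$-measurable and integrates to $\mathbf{1}_{\{Y_n>y_{cc}\}}(cP/r)(1-e^{-r(t_{n+1}-t)})$, and full-period pieces $\int_{t_{n+i}}^{t_{n+i+1}}cPe^{-r(u-t)}\mathbf{1}_{\{\tau_c^A>u,\,Y_{n+i}>y_{cc}\}}\,\dd u$ for $i=1,\dots,m-1$. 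On each such interval both indicators are constant in $u$, so by Fubini the conditional expectation factors as $(cP/r)(e^{-r(t_{n+i}-t)}-e^{-r(t_{n+i+1}-t)})\,\mathbb{P}(\tau_c^A>t_{n+i},Y_{n+i}>y_{cc}|Y^{(n)}=y^{(n)})$, which after summation over $i$ produces the third summand.

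For the recovery term, $\tau_c^A\in\{t_{n+1},\dots,t_{n+m}\}$ on $\{\tau_c^A\le T\}$, hence
\begin{align*}
\mathbb{E}\Bigl(RPe^{-r(\tau_c^A-t)}\mathbf{1}_{\{\tau_c^A\le T\}}\Big|\mathcal{H}_t\Bigr)=RP\sum_{i=1}^m e^{-r(t_{n+i}-t)}\,\mathbb{P}(\tau_c^A=t_{n+i}|Y^{(n)}=y^{(n)}),
\end{align*}
and writing $\mathbb{P}(\tau_c^A=t_{n+i}|Y^{(n)})=\mathbb{P}(\tau_c^A>t_{n+i-1}|Y^{(n)})-\mathbb{P}(\tau_c^A>t_{n+i}|Y^{(n)})$ yields the last summand (the $t_{n+j}$ in the statement being a typographical slip for $t_{n+i}$). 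Identity~(\ref{survprobonlyrep2}) is then obtained by repeating the derivation of Proposition~\ref{prop_istep} verbatim: condition on $Z^{(n)}$, use the Markov structure of $(Z_i,U_i)$ to see that the conditional law of $(Y_{n+1},\dots,Y_{n+i})$ is the Gaussian $\xi(z_n)$ with the mean $\hat\mu_i$ and covariance $\Sigma_i$ supplied in the Appendix, and simply replace the target set $(y_c,\infty)^i$ by $(y_c,\infty)^{i-1}\times(y_{cc},\infty)$ to reflect the extra MDA condition on the last coordinate.

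The main obstacle is bookkeeping rather than substance: one must handle carefully the boundary case $i=1$ (where the factor $(y_c,\infty)^{i-1}$ degenerates so that the event reduces to $\{Y_{n+1}>y_{cc}\}$ and one must avoid double-counting between the stub period and the first full interval), and one must resist the temptation to factor $\mathbb{P}(\tau_c^A>t_{n+i},Y_{n+i}>y_{cc}|\cdot)$ as a product of marginals, since $\{\tau_c^A>t_{n+i}\}$ already constrains $Y_{n+i}>y_c$ and $y_{cc}>y_c$ means the two constraints overlap nontrivially. Once these points are tracked correctly, the four summands in the statement drop out term-by-term from the three expectations.
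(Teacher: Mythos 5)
Your proposal is correct and follows essentially the same route as the paper: decompose the price into principal, MDA-adjusted coupon, and recovery terms, exploit the fact that $\tau_c^A$ lives on the reporting grid so the indicators are constant on each interval $[t_{n+i},t_{n+i+1})$, integrate the discount factor interval by interval, and obtain (\ref{survprobonlyrep2}) by rerunning the argument of Proposition~\ref{prop_istep} with the last coordinate's target set replaced by $(y_{cc},\infty)$. The only difference is presentational — the paper cites Theorem~\ref{thmCoCoprice3} for the unchanged principal and recovery terms and works out only the modified coupon term, whereas you re-derive all three — and your observations about the $t_{n+j}$ typo and the absorption of $\{Y_{n+i}>y_c\}$ into $\{Y_{n+i}>y_{cc}\}$ are both accurate.
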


\section{MCMC algorithms for simulating the model}

\label{Section_Simulations} \setcounter{equation}{0}

In this section the algorithms that are necessary to compute all the
derived CoCo values, are provided. The results in the previous section
contain three kind of expressions, for which three different algorithms
are proposed in this subsection. The first expressions we will consider
are those of the form 
\[
\int_{z_{c}}^{\infty}h(x)f(t,x)\dd x.
\]
That is, integrals of a function $h$, weighted by the density $f(t,\cdot)$.
This type of expression is needed in the valuation of a PWD CoCo with
a regulatory trigger (cf.\ Theorem~\ref{thmCoCoprice1}), when we
include the MDA trigger (cf.\ Equation~(\ref{MDA_reg_PWD})) and
in the first part of the formula for the value of a CoCo with a conversion
into shares (cf.\ Theorem~\ref{thmCoCoprice2}).\\
 First note that we can write 
\begin{align}\label{uitschrijven}
\int_{z_{c}}^{\infty}h(x)f(t,x)\dd x & =\int_{z_{c}}^{\infty}h(x)\int_{z_{c}}^{\infty}\tilde{f}(t-t_{n},x,z_{n})g_{t_{n}}(z_{n}|Y^{(n)},\tau_{c}>t_{n})\dd z_{n}\dd x\nonumber \\
 & =\int_{z_{c}}^{\infty}\int_{(z_{c},\infty)^{n}}h(x)\tilde{f}(t-t_{n},x,z_{n})g_{t_{n}}(z^{(n)}|Y^{(n)},\tau_{c}>t_{n})\dd z^{(n)}\dd x\nonumber \\
 & =\int_{(z_{c},\infty)^{n+1}}h(z_{n+1})\tilde{f}(t-t_{n},z_{n+1},z_{n})g_{t_{n}}(z^{(n)}|Y^{(n)},\tau_{c}>t_{n})\dd z^{(n+1)}.
\end{align}
So we will need a sample $((z^{(n+1)})^{1},\dots,(z^{n+1})^{G})$
from the $(n+1)$-dimensional distribution on $(z_{c},\infty)^{n+1}$
with density $\tilde{f}(t-t_{n},z_{n+1},z_{n})g_{t_{n}}(z^{(n)}|Y^{(n)},\tau_{c}>t_{n})$,
in order to approximate this integral as 
\begin{align}
C(t)\approx\frac{1}{G}\sum_{g=1}^{G}h(z_{n+1}^{g}).\label{capprox2}
\end{align}
The algorithm used to obtain the sample, is the following MCMC-algorithm.
\newpage
\begin{algorithm}
\label{rwmh}\hfill{}
\begin{enumerate}
\item In each iteration $g$, $g=1,\dots n_{0}+G$, given the current value
$(z^{(n+1)})^{g}$, the proposal $(z^{n+1})'$ is drawn according
to 
\[
(z^{(n+1)})'=(z^{(n+1)})^{g}+X,\text{ for }X\sim\text{N}_{n+1}(0,\Sigma),
\]
where the $(n+1)\times(n+1)$-covariance matrix $\Sigma$ is chosen
to reach some desired acceptance rate. 
\item Set 
\[
(z^{(n+1)})^{(g+1)}=\left\{ \begin{matrix}(z^{(n+1)})' & \text{ with prob.} & \alpha((z^{(n+1)})^{g},(z^{(n+1)})')\\
z^{(n+1)} & \text{ with prob.} & 1-\alpha((z^{(n+1)})^{g},(z^{(n+1)})')
\end{matrix}\right.,
\]
where the acceptance-probability $\alpha(z^{(n+1)},(z^{(n+1)})')$
is given by 
\begin{align*}
\alpha(z^{(n+1)},(z^{(n+1)})') & =\min\left\{ 1,\frac{\tilde{f}(t-t_{n},z_{n+1}',z_{n}')g_{t_{n}}((z^{(n)})'|y^{(n)},\tau_{c}>t_{n})}{\tilde{f}(t-t_{n},z_{n+1},z_{n})g_{t_{n}}(z^{(n)}|y^{(n)},\tau_{c}>t_{n})}\right\} \\
 & =\min\left\{ 1,\frac{\tilde{f}(t-t_{n},z_{n+1}',z_{n}')b_{n}((z^{(n)})'|y^{(n)})}{\tilde{f}(t-t_{n},z_{n+1},z_{n})b_{n}(z^{(n)}|y^{(n)})}\right\} .
\end{align*}
\item Discard the draws from the first $n_{0}$ iterations and save the
sample 
 $(z^{(n+1)})^{n_{0}+1},\dots,(z^{(n+1)})^{n_{0}+G}$. 
\end{enumerate}
\end{algorithm}

\noindent The acceptance probability involves the term ${\displaystyle \frac{b_{n}((z^{(n)})'|y^{(n)})}{b_{n}(z^{(n)}|y^{(n)})}}$.
It follows from Equation~(\ref{killeddensity}) that this fraction
is explicitly given by 
\begin{align*}
\frac{b_{n}((z^{(n)})'|y^{(n)})}{b_{n}(z^{(n)}|y^{(n)})}=\frac{\prod_{i=1}^{n}\psi(z'_{i-1}-z_{c},z'_{i}-z_{c},\sigma\sqrt{t_{i}-t_{i-1}})p_{Z}(z_{i}'|z'_{i-1})p_{U}(y_{i}-z'_{i}|y_{i-1}-z'_{i-1})}{\prod_{i=1}^{n}\psi(z_{i-1}-z_{c},z_{i}-z_{c},\sigma\sqrt{t_{i}-t_{i-1}})p_{Z}(z_{i}|z_{i-1})p_{U}(y_{i}-z_{i}|y_{i-1}-z_{i-1})},
\end{align*}
under the convention that $t_{0}=0$ and $p_{U}(\cdot|u_{0})=p_{U}(\cdot)$
is a Gaussian density with mean $\mu_{\epsilon}$ and variance $\sigma_{\epsilon}^{2}$.
Note that $p_{Z}(z_{i}|z_{i-1})$ is a Gaussian density with mean
$z_{i-1}+m(t_{i}-t_{i-1})$ and variance $\sigma^{2}(t_{i}-t_{i-1})$,
that $p_{U}(u_{i}|u_{i-1})$ is a Gaussian density with mean $\kappa u_{i-1}+\mu_{\epsilon}$
and variance $\sigma_{\epsilon}^{2}$ and that an expression for $\psi$
is provided in Lemma~\ref{lemmapsi}. Algorithm~\ref{rwmh}, in combination
with Equations (\ref{capprox2}) and (\ref{uitschrijven}), allows
us two compute all expressions which are of the form of an integral
of a function, weighted by the density $f(t,\cdot)$.\\
 \indent The second expression that occurs in the valuation of CoCos
in the previous section, is the expression we see in the second part
of the solution for a CoCo with a regulatory trigger and a conversion
into shares, as in Theorem~\ref{thmCoCoprice2}. Which is the following
double integral 
\[
\int_{z_{c}}^{\infty}\int_{z_{c}}^{\infty}f(t,x)\hat{f}(x,z_{c},\tilde{z},T-t)h_{2}(\tilde{z})\dd\tilde{z}\dd x.
\]
Note that this integral can be, similarly to the above, written as
\begin{align}\label{lastterm}
\hspace{2em} & \hspace{-2em}\int_{z_{c}}^{\infty}\int_{(z_{c},\infty)^{n+1}}h_{2}(\tilde{z})\hat{f}(z_{n+1},z_{c},\tilde{z},T-t)\tilde{f}(t-t_{n},z_{n+1},z_{n})g_{t_{n}}(z^{(n)}|Y^{(n)},\tau_{c}>t_{n})\dd z^{(n+1)}\dd\tilde{z}\nonumber \\
 & =\int_{(z_{c},\infty)^{n+2}}h_{2}(z_{n+2})\hat{f}(z_{n+1},z_{c},z_{n+2},T-t)\tilde{f}(t-t_{n},z_{n+1},z_{n})g_{t_{n}}(z^{(n)}|Y^{(n)},\tau_{c}>t_{n})\dd z^{(n+2)}.
\end{align}
Now note that, by definition of $\hat{f}$, it holds that 
\begin{align*}
\int_{z_{c}}^{\infty}\hat{f}(z_{n+1},z_{c},z_{n+2},T-t)\dd z_{n+2} & =\int_{z_{c}}^{\infty}\mathbb{P}\left(\inf_{0\leq s\leq T-t}Z_{s}>z_{c},Z_{T-t}\in\dd z_{n+2}\Big|Z_{0}=z_{n+1}\right)\\
 & =\mathbb{P}\left(\inf_{0\leq s\leq T-t}Z_{s}>z_{c}|Z_{0}=z_{n+1}\right)\\
 & =1-\pi(T-t,z_{n+1}-z_{c}).
\end{align*}
Hence, $\hat{f}(z_{n+1},z_{c},z_{n+2},T-t)\tilde{f}(t-t_{n},z_{n+1},z_{n})g_{t_{n}}(z^{(n)}|Y^{(n)},\tau_{c}>t_{n})$
is not a density function on $(z_{c},\infty)^{n+2}$, so it is not
possible to proceed in the same way as in the previous case. However,
by the above we know that 
\[
\frac{\hat{f}(z_{n+1},z_{c},z_{n+2},T-t)\tilde{f}(t-t_{n},z_{n+1},z_{n})g_{t_{n}}(z^{(n)}|Y^{(n)},\tau_{c}>t_{n})}{1-\pi(T-t,z_{n+1}-z_{c})}
\]
is a density function on $(z_{c},\infty)^{n+2}$. \\
 \indent So if we have a sample $((z^{(n+2)})^{1},\dots,(z^{n+2})^{G})$
from the $(n+2)$-dimensional distribution with this density, we can
approximate the integral in Equation~(\ref{lastterm}) by 
\begin{align}
\frac{1}{G}\sum_{g=1}^{G}h_{2}(z_{n+2}^{g})(1-\pi(T-t,z_{n+1}^{g}-z_{c})).\label{approx2ndterm}
\end{align}
This sample is, in analogy to Algorithm~\ref{rwmh}, obtained by the
following MCMC-algorithm. 
\begin{algorithm}
\label{rwmh3} \hfill{}
\begin{enumerate}
\item In each iteration $g$, $g=1,\dots n_{0}+G$, given the current value
$(z^{(n+2)})^{g}$, the proposal $(z^{n+2})'$ is drawn according
to 
\[
(z^{(n+2)})'=(z^{(n+2)})^{g}+X,\text{ for }X\sim\text{N}_{n+2}(0,\Sigma),
\]
where the $(n+2)\times(n+2)$-covariance matrix $\Sigma$ is chosen
to reach some desired acceptance rate. 
\item Set 
\[
(z^{(n+2)})^{(g+1)}=\left\{ \begin{matrix}(z^{(n+2)})' & \text{ with prob.} & \alpha((z^{(n+2)})^{g},(z^{(n+2)})')\\
z^{(n+2)} & \text{ with prob.} & 1-\alpha((z^{(n+2)})^{g},(z^{(n+2)})')
\end{matrix}\right.,
\]
where the acceptance-probability $\alpha(z^{(n+2)},(z^{(n+2)})')$
is given by 
\begin{align*}
\min\left\{ 1,\frac{\hat{f}(z_{n+1}',z_{c},z_{n+2}',T-t)\tilde{f}(t-t_{n},z_{n+1}',z_{n}')b_{n}((z^{(n)})'|y^{(n)})(1-\pi(T-t,z_{n+1}-z_{c}))}{\hat{f}(z_{n+1},z_{c},z_{n+2},T-t)\tilde{f}(t-t_{n},z_{n+1},z_{n})b_{n}(z^{(n)}|y^{(n)})(1-\pi(T-t,z'_{n+1}-z_{c}))}\right\} .
\end{align*}
\item Discard the draws from the first $n_{0}$ iterations and save the
sample 
 $(z^{(n+2)})^{n_{0}+1},\dots,(z^{(n+2)})^{n_{0}+G}$. 
\end{enumerate}
\end{algorithm}

The last type of expression that occurs in the valuation of CoCos
in the previous section, are the $i$-step survival probabilities
in the valuation of a CoCo with an accounting report trigger, as in
Theorem~\ref{thmCoCoprice3} and Theorem~\ref{thmCoCoprice4}. This
expressions are in Equation~(\ref{survprobonlyrep}) and Equation~(\ref{survprobonlyrep2})
given in the form
\begin{align*}
\int_{\mathbb{R}^{n}}\mathbb{P}(\xi(z_{n})\in\Xi_{i})p_{Z}(z^{(n)}|y^{(n)})\dd z^{(n)},
\end{align*}
for a set $\Xi_{i}$ which equals $(y_{c},\infty)^{i}$ or $(y_{c},\infty)^{i-1}\times(y_{cc},\infty)$
and a multivariate normally distributed random variable $\xi(z_{n})$.
For a sample $((z^{(n)})^{1},\dots,(z^{(n)})^{G})$ from $p_{Z}(z^{(n)}|y^{(n)})$,
this type of integral can be approximated by 
\begin{align}
\frac{1}{G}\sum_{g=1}^{G}\mathbb{P}(\xi(z_{n}^{g})\in\Xi_{i}).\label{survapprox}
\end{align}
The necessary sample is again obtained using a MCMC-algorithm, as
follows. 
\begin{algorithm}
\label{rwmh4}\hfill{}
\begin{enumerate}
\item In each iteration $g$, $g=1,\dots,n_{0}+G$, given the current value
$(z^{(n)})^{g}$, the proposal $(z^{n})'$ is drawn according to 
\[
(z^{(n)})'=(z^{(n)})^{g}+X,\text{ for }X\sim\text{N}_{n}(0,\Sigma),
\]
where the $n\times n$-covariance matrix $\Sigma$ is chosen to reach
some desired acceptance rate. 
\item Set 
\[
(z^{(n)})^{(g+1)}=\left\{ \begin{matrix}(z^{(n)})' & \text{ with prob.} & \alpha((z^{(n)})^{g},(z^{(n)})')\\
z^{(n)} & \text{ with prob.} & 1-\alpha((z^{(n)})^{g},(z^{(n)})')
\end{matrix}\right.,
\]
where the acceptance-probability $\alpha(z^{(n)},(z^{(n)})')$ is
given by 
\begin{align*}
\alpha(z^{(n)},(z^{(n)})') & =\min\left\{ 1,\frac{p_{Z}((z^{(n)})'|y^{(n)})}{p_{Z}(z^{(n)}|y^{(n)})}\right\} \\
 & =\min\left\{ 1,\frac{\prod_{i=1}^{n}p_{Z}(z_{i}'|z'_{i-1})p_{U}(y_{i}-z'_{i}|y_{i-1}-z'_{i-1})}{\prod_{i=1}^{n}p_{Z}(z_{i}|z_{i-1})p_{U}(y_{i}-z_{i}|y_{i-1}-z_{i-1})}\right\} .
\end{align*}
\item Discard the draws from the first $n_{0}$ iterations (because the
Markov chain needs a burn-in period to converge to the target distribution)
and save the sample 
 $(z^{(n)})^{n_{0}+1},\dots,(z^{(n)})^{n_{0}+G}$. 
\end{enumerate}
\end{algorithm}

\section{Applying the model}

\label{sec:application} \setcounter{equation}{0}

In this section we use the model to shed light on a variety of questions
related to the basic valuation model itself and its sensitivity to
design and ``environmental'' variables such as volatility shocks.
We then explore the interaction between CoCos and other elements of
the capital structure, and in particular look at risk taking and investment
incentives when CoCos are used instead of other types of funding,
like straight debt or equity. Finally we use the fact that we incorporate
the MDA trigger and the coupon payment contingency by comparing the
Deutsche Bank profit scare and its impact on CoCo prices with model
predictions we obtain with our valuation model.

\begin{table}[h]
\centering{}%
\begin{tabular}{|c|c|}
\hline 
Parameter  & Value\tabularnewline
\hline 
\hline 
Initial asset value $V_{0}$  & 100\tabularnewline
\hline 
$n$, the number of accounting reports until time $t$  & 2\tabularnewline
\hline 
Conversion trigger $v_{c}$  & 80\tabularnewline
\hline 
Default trigger $v_{b}$  & 65\tabularnewline
\hline 
Recovery rate at default $\alpha$  & 0.5\tabularnewline
\hline 
Total principal straight debt $P_{1}$  & 50\tabularnewline
\hline 
Coupon straight debt  & 0.04\tabularnewline
\hline 
Total principal CoCos $P_{2}$  & 5\tabularnewline
\hline 
Coupon CoCos $c_{2}$  & 0.07\tabularnewline
\hline 
Maturity CoCos $T$  & t+5\tabularnewline
\hline 
Drift asset process $m$  & 0.01\tabularnewline
\hline 
Volatility asset process $\sigma$  & 0.1\tabularnewline
\hline 
Mean accounting noise $\mu_{\epsilon}$  & 0\tabularnewline
\hline 
Volatility accounting noise $\sigma_{\epsilon}$  & 0.1\tabularnewline
\hline 
Risk free rate  & 0.03\tabularnewline
\hline 
\end{tabular}\label{BaseCaseParameters}\caption{Base Case Parameters}
\end{table}

\subsection{Parametrization of the base case}

Table~\ref{BaseCaseParameters} lists the values of the base case
parameters. For the choice of the base case parameters, some restraints
should be taken into account. For example, the conversion trigger
should be higher than the default trigger. Also, a CoCo should pay
a higher coupon than straight debt, to compensate for the higher risk.
Furthermore, we have no empirical evidence for a reasonable level
of accounting noise, so we set the volatility of accounting noise equal
to the base case parameter chosen by \citet{Duffie2001}, where the
accounting noise variance is chosen to match short run default probabilities
implicit in short run CDS spreads.

In the base case, we will assume the CoCo has a regulatory trigger,
i.e.\ the regulator has access to the true state of the bank and
conversion can take place at any time, not just at accounting dates,
but the market has to evaluate conversion probabilities given this
trigger rule using accounting information only (cf.~Section~\ref{secconversion}
for the mathematics of this trigger). We will also explore other trigger
mechanisms.\\
 Furthermore, we define the dilution ratio $\rho$ (cf.~Section~\ref{secconversion})
as the fraction shares owned by the CoCo holder post-conversion: 
\begin{eqnarray}
\rho & = & \frac{\Delta P_{2}}{\Delta P_{2}+1}
\end{eqnarray}
where $P_{2}$ is the face value of the CoCo before conversion, and
$\Delta$ equals the number of shares the CoCo holder receives at
conversion. The number of old shares is normalized to 1. A dilution
ratio of $\rho=0$ means that the CoCo suffers a principal write-down
(PWD) at conversion, while $\rho=1$ corresponds to the extreme case
that the original shareholders are completely wiped out at conversion.\\
 To compute prices for PWD CoCos, we make use of Theorem~\ref{thmCoCoprice1}.
The integral involved is approximated as in Equation~(\ref{capprox2}),
for which the necessary sample is obtained by using Algorithm~\ref{rwmh}.
To compute prices for CoCos with a conversion into shares, we make
use of Theorem~\ref{thmCoCoprice2}, where the first term in the pricing
formula follows again by using Algorithm~\ref{rwmh} and the second
term is approximated as in Equation~(\ref{approx2ndterm}), for which
the necessary sample is obtained by execution of Algorithm~\ref{rwmh3}.
Then the figures are produced by repeatedly following this procedures
for different values of the parameters.

\begin{figure}[h]
\begin{center}
\includegraphics[width=0.75\textwidth,height=0.35\textheight]{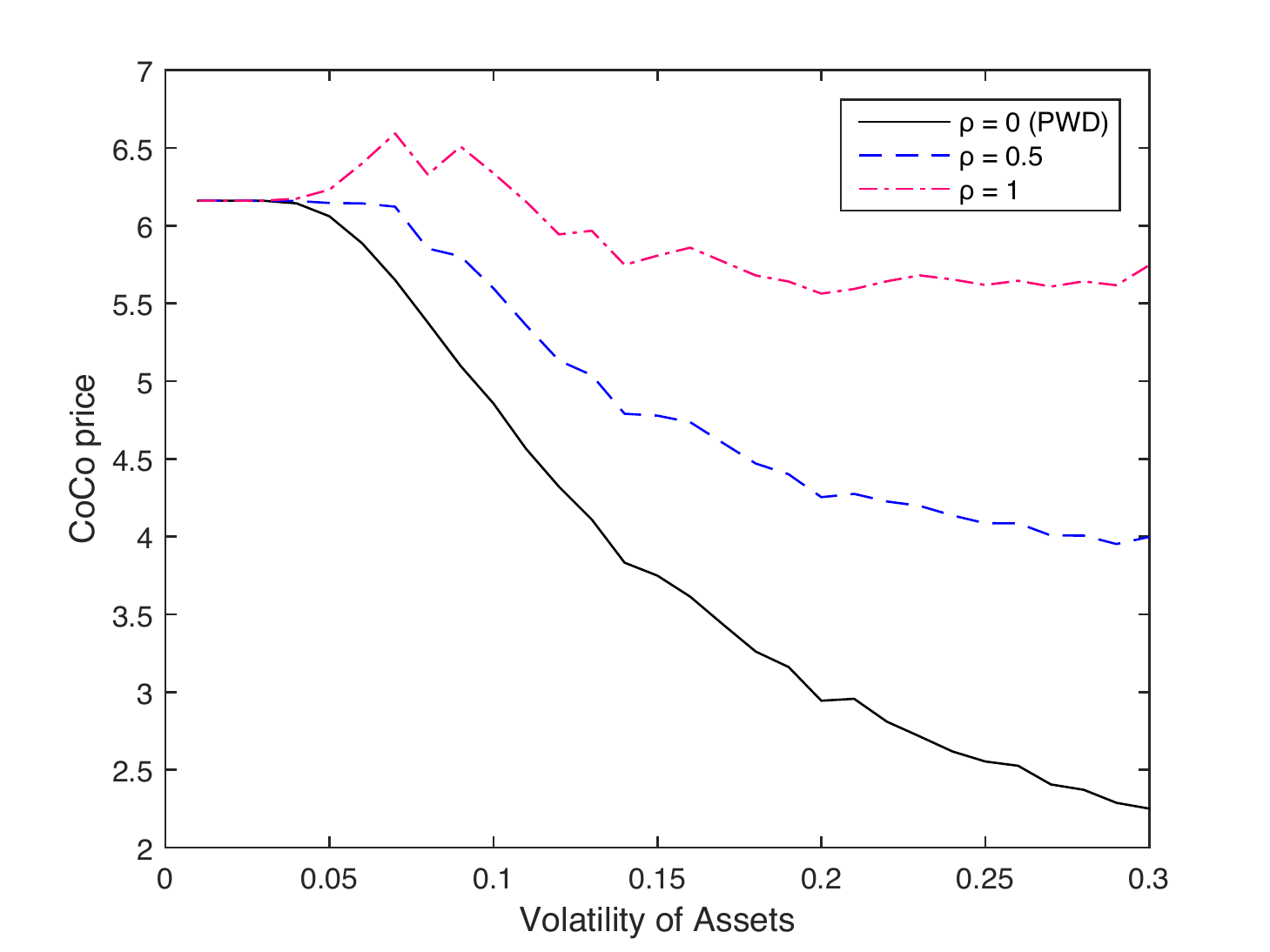} 
\caption{CoCo Prices and asset volatility for different CoCo design parameters}
\label{Fig asset volatility and CoCo pricces} 
\end{center}
\end{figure}

\subsection{Asset Volatility, Accounting Noise and CoCo design parameters}

\begin{figure}[h]
\begin{center}
\includegraphics[width=0.75\textwidth,height=0.35\textheight]{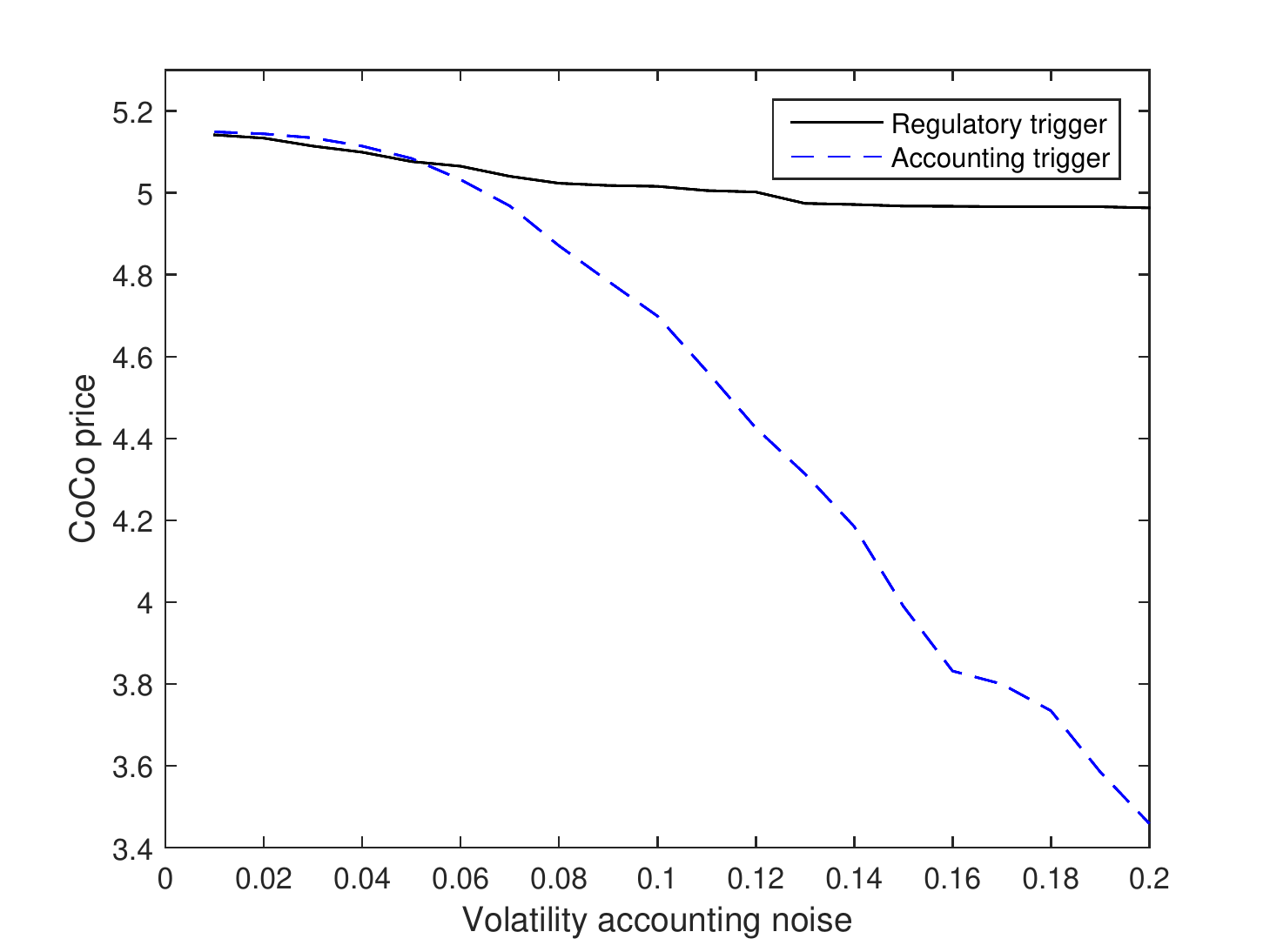}\caption{Accounting noise and trigger design}
\label{Accounting_Noise_Trigger_Design} 
\end{center}
\end{figure}

In this subsection we study the impact of changes in asset volatility
and accounting noise on CoCo prices as a function of different design
parameters.

\subsubsection{Asset Volatility shocks}

We first look at the price impact of changes in volatility of the
underlying asset value process for different CoCo designs.
In Figure~\ref{Fig asset volatility and CoCo pricces} several CoCo
prices are plotted against the volatility of assets $\sigma$, see
Equation~\eqref{eq:model}. The solid line corresponds to a PWD CoCo.
Clearly, the price of a PWD CoCo decreases when assets become more
volatile. This is of course as one would expect, as a higher $\sigma$
increases the probability of the principal write-down happening, causing
the CoCo price to decrease. The dashed line, corresponding to $\rho=0.5$,
shows already that this negative effect from volatility on the CoCo
price is weaker when terms of conversion are more favorable to the
CoCo investor in that her loss is lower, at least some shares are
received after conversion, although not yet enough to compensate for
the loss of principal. In the extreme case that shareholders are completely
wiped out at conversion, corresponding to the dashed-dotted line,
this negative effect is even partially reversed. In this case, the
price first increases with volatility as the (now favorable) conversion
becomes more likely. However, for higher volatility levels the increasing
probability of default and associated costs of bankruptcy push the
price down again.

\subsubsection{Accounting Noise shocks}

We next consider the relationship between accounting noise $\sigma_{\epsilon}$
and the price of a CoCo. In Figure~\ref{Accounting_Noise_Trigger_Design},
CoCos with a regulatory trigger and CoCos with an accounting based
trigger are considered. The book value CoCo is priced by the formula
given in Equation~(\ref{CoCopriceacc2}); This value is computed
using the approximation in Equation~(\ref{survapprox}), for which
the necessary samples are obtained by using~\ref{rwmh4}.

Figure~\ref{Accounting_Noise_Trigger_Design} shows the importance of taking into account the trigger design for the pricing of the CoCo. The
increase in accounting volatility has almost no impact on the value
of the CoCo with a regulatory trigger (the solid line in Figure~\ref{Accounting_Noise_Trigger_Design});
but the dashed line shows that the value of CoCos with a trigger that
depends on accounting reports, the CoCo price is seriously (and obviously
negatively) affected by accounting noise. This is in line with the
results of ~\citet{Duffie2001}: they find that the default probability
increases when the reports become more noisy. In our CoCo setting,
this means that the probability of conversion increases when $\sigma_{\epsilon}$
increases, causing the CoCo price to go down. Figure~\ref{Accounting_Noise_Trigger_Design}
shows that the price of a CoCo with the trigger depending on accounting
reports (slotted line) is much more sensitive to accounting noise
than the price of a CoCo with a regulatory (PONV) trigger (solid line).

\subsubsection{Accounting news and correlation in the accounting report error}
Consider next the impact of the correlation coefficient  $\kappa$ in the accounting
noise error term. In Figure~\ref{fig:kappa plot} we show the price response of a PWD CoCo to a bad news accounting report. The set up
is as follows. After the first report ($Y_{1}=\log 100)$, a second report
is issued: $Y_{2}= \log 85$. The conversion trigger is set at $\log 80$, with a PONV trigger type. The plots show a clear and immediate price response
to the arrival of the bad news.
\begin{figure}[H]
\begin{center}
\includegraphics[width=0.75\textwidth,height=0.35\textheight]{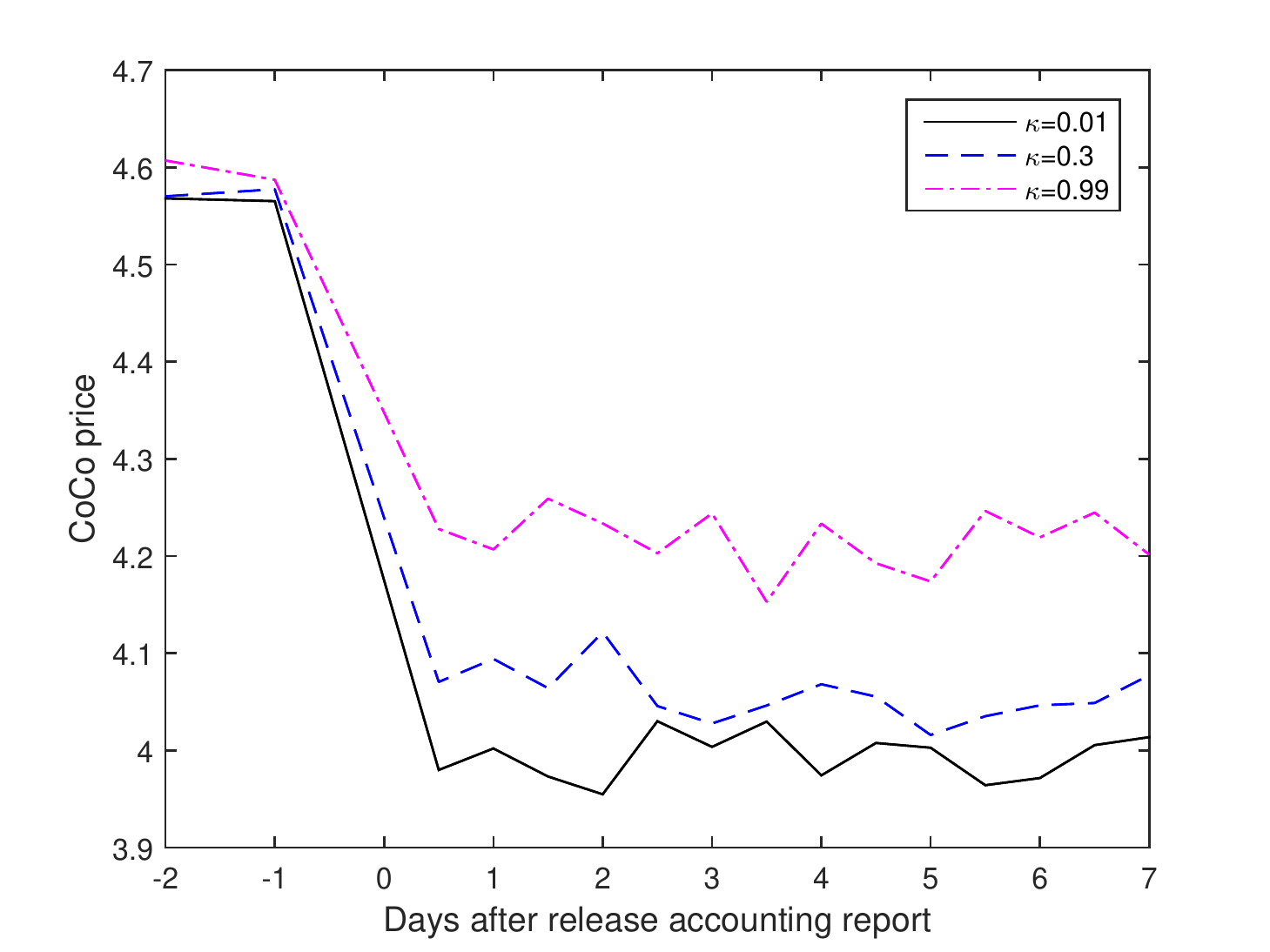}\caption{Price response to ``bad accounting news'' for
different values of the autocorrelation parameter $\kappa$.}
\label{fig:kappa plot}
\end{center}
\end{figure}
Interestingly, a clear pattern emerges if the exercise is repeated
for different values of the autocorrelation parameter $\kappa$:
although the pattern is similar over the entire range from almost
no correlation in accounting noise ($\kappa=0.01)$ to almost complete
persistence of accounting noise innovations ($\kappa=0.99)$, for
higher values of the correlation parameter the price response is more
muted.  Since
the accounting report is known to be contaminated by accounting noise
each time a new report is issued, a higher value of $\kappa$ means
that more of the past noise arrivals survive in the current one, while
at the same time the variance of the accounting noise term $U_i$ increases
with $\kappa$, as it is, in a stationary regime, proportional to $1/(1-\kappa^{2})$. This in turn
lowers the information value of accounting news and explains why a
bad (i.e. worse than the previous one) report leads to a smaller negative
price response for higher $\kappa:$ the signal is less informative
so triggers a smaller price response.

\subsubsection{Time lapsed since last accounting report}

In Figure~\ref{fig:Time-since-last-report} we report on a different experiment:
we show how different CoCo designs are influenced by time lapsed since
the last accounting report. The plot shows the value of three differently
structured CoCo's, each with a different degree of shareholder dilution
after conversion as a function of time lapsed since the last accounting
report. The black line represents a PWD CoCo where the CoCo is written
off upon conversion and no subsequent dilution of the old shareholder
takes place; the other two lines represent equity converters, one
with partial dilution of the old shareholder ($\rho=0.5)$, the dashed-dotted
line, and one where the old shareholder is completely wiped out after
conversion $(\rho=1)$, the dashed line.
\begin{figure}[h]
\begin{center}
\includegraphics[width=0.75\textwidth,height=0.35\textheight]{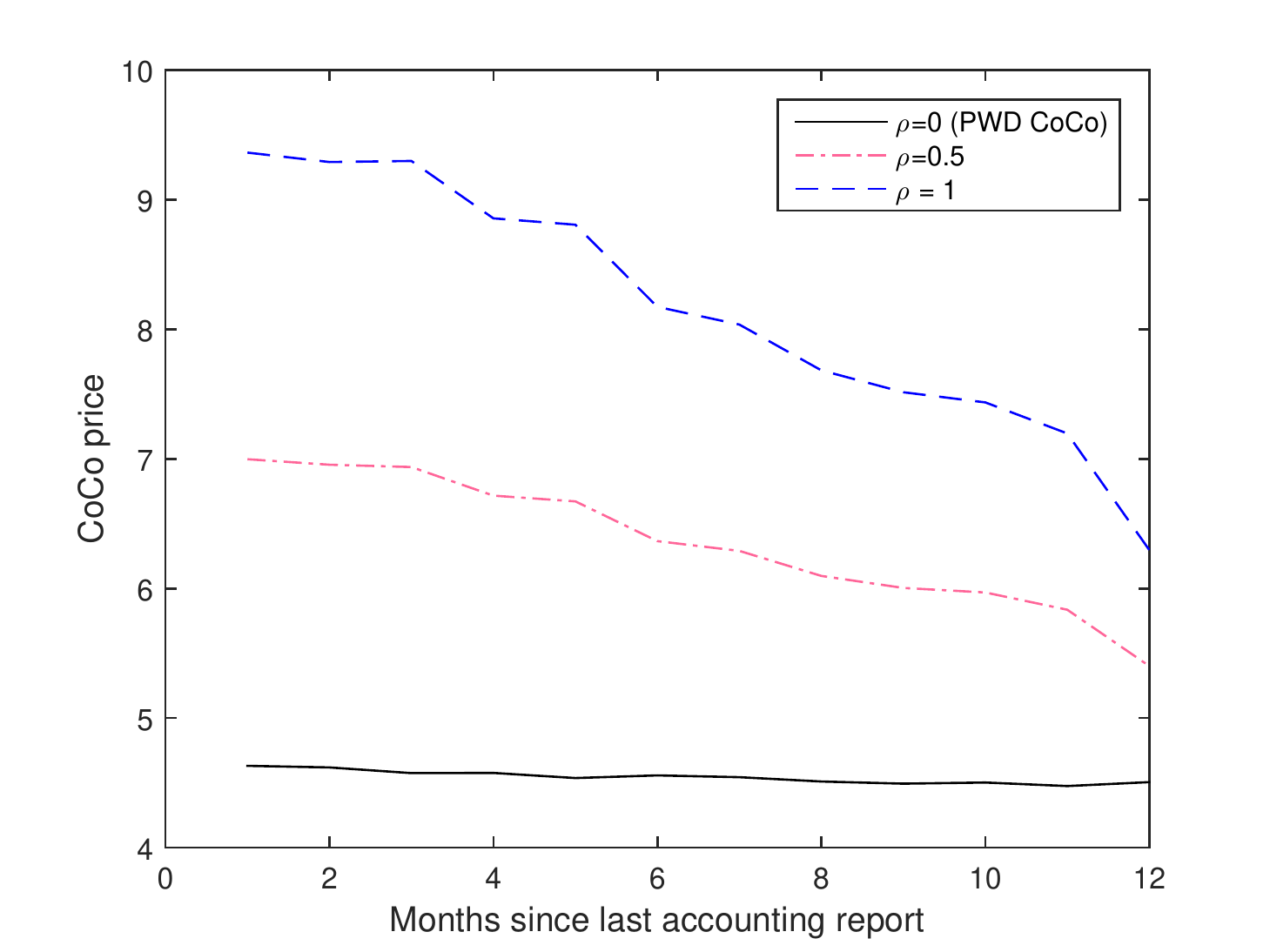}\caption{Time since last accounting report.}
\label{fig:Time-since-last-report}
\end{center}
\end{figure}
The plots show very little impact on the PWD CoCo while the two equity converters
decline in value as the time since the last accounting report increases.
A longer time lapse does not change the asset price dynamics but leads
to a higher uncertainty as to where the asset value is at the time of
valuation. This is similar to moving more weights in the tails and
thus a larger probability of bankruptcy. Since bankruptcy follows
conversion, a higher probability of bankruptcy does not influence
the PWD, they will then already have lost everything because conversion
precedes bankruptcy. But the more shares the CoCo holder receives
upon conversion, the more she loses from a subsequent bankruptcy,
so the price decline increases more for higher values of the dilution
parameter $\rho$.

\subsection{Design parameters and CoCo Valuation}

\label{secparamimpact} Consider next the impact on pricing of the
main characteristics of the CoCo design: the trigger level and the
number of shares received upon conversion.

\subsubsection{The conversion trigger}

In Figure~\ref{Conversion Trigger}, the CoCo price is plotted against
the conversion trigger for different degrees of dilution. The solid
line corresponds to a PWD CoCo, the other lines to CoCos with varying
degrees of dilution of the original shareholders upon conversion as
specified in the legend.

\begin{figure}[h]
\begin{center}
\includegraphics[width=0.75\textwidth,height=0.35\textheight]{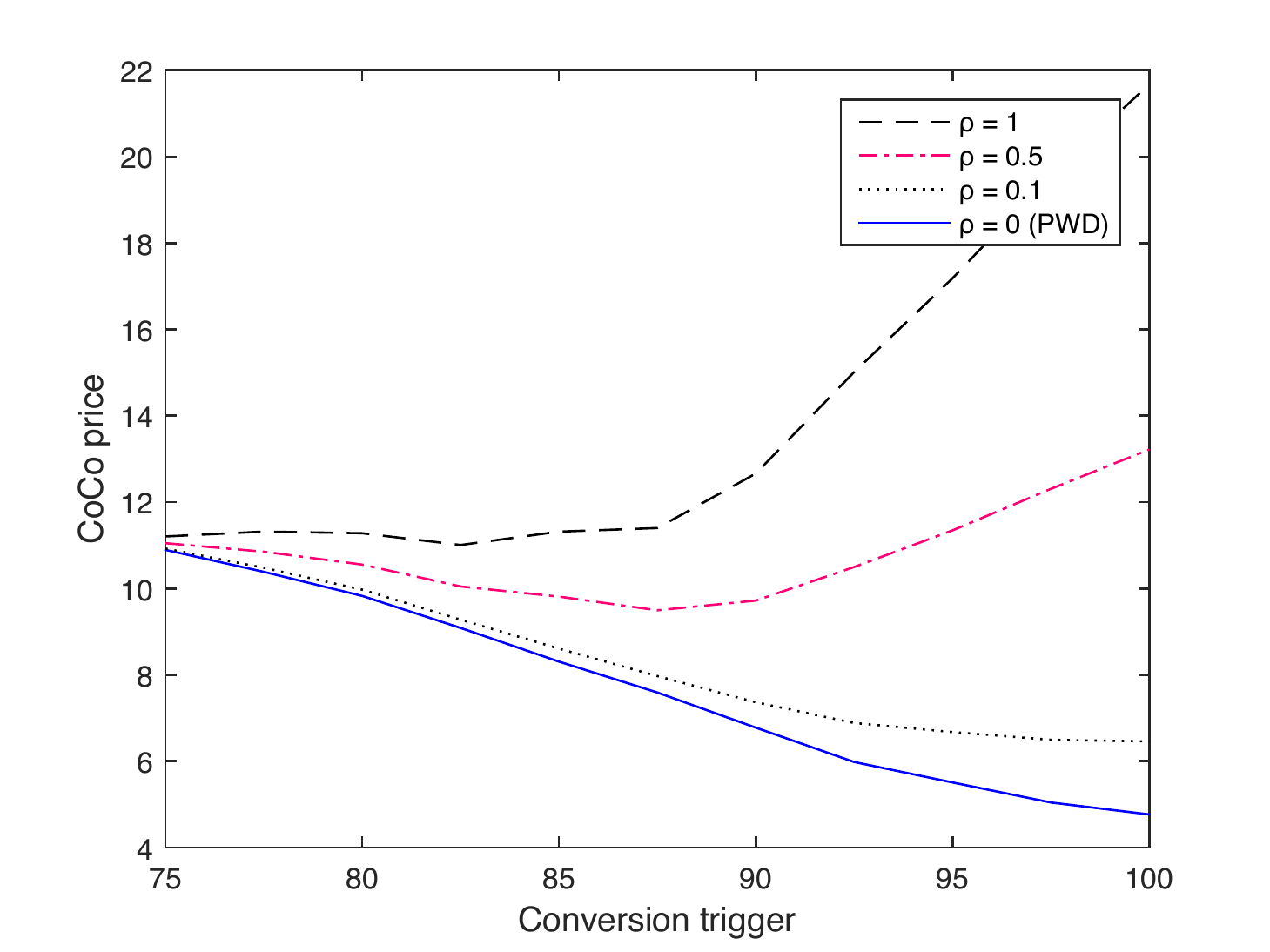} \caption{Conversion Trigger}
\label{Conversion Trigger} 
\end{center}
\end{figure}
As one would expect, the price of a PWD CoCo (the solid line) is lower
for a higher conversion trigger\footnote{Note that the trigger is defined as a percentage of the asset value
with the losses coming from the top (i.e.\ equity above debt on the
liability side), so a higher trigger value means a higher probability
of conversion, as is done in the rest of the academic literature.
In the banking and supervision literature, it is more conventional
to define the trigger value also as a percentage of (risk weighted)
assets, but with the losses coming from the bottom, with equity below
debt; in that definition a higher trigger ratio leads to a lower probability
of conversion.}: a higher conversion trigger increases the probability of a principal
write-down and its associated loss of principal. However, the other
lines show that if conversion terms are more favorable to the CoCo
investor, the impact of the trigger level changes: price will increase
with the conversion trigger if the dilution ratio favors the CoCo
holder enough. In the extreme case that the dilution ratio $\rho$
equals one (the dashed line in Figure~\ref{Conversion Trigger}),
the CoCo price goes up with the conversion trigger. For less extreme
dilution parameters, this positive effect is weaker, and the corresponding
lines are in between the two extremes (no dilution versus complete
dilution).

\subsubsection{On dilution and Leverage}

In Figure~\ref{Dilution and leverage}, the price of a CoCo is plotted
against $\Delta$, the number of shares received at conversion per
unit of principal, for different values of straight debt in the firm's
capital structure. The case $\Delta=0$ corresponds to a principal
write-down CoCo, while $\Delta=\infty$ corresponds to the case in
which all of the original shareholders are wiped out at conversion
and the CoCo investors are then the only shareholders left. Figure
\ref{Dilution and leverage} clearly shows that the CoCo price increases
with $\Delta$. This is of course as expected, as a higher $\Delta$
means a higher payout at conversion. Furthermore, the figure shows
that a CoCo with a conversion into shares has a higher price when
there is a lower amount of straight debt issued. Hence the CoCo is
more valuable when the firm has a lower leverage. This can also easily
be explained, as the CoCo investors receive a fraction of the firm's
equity value at conversion and the equity value is higher in case
there are less liabilities.

\begin{figure}[h]
\begin{center}
\includegraphics[width=0.75\textwidth,height=0.35\textheight]{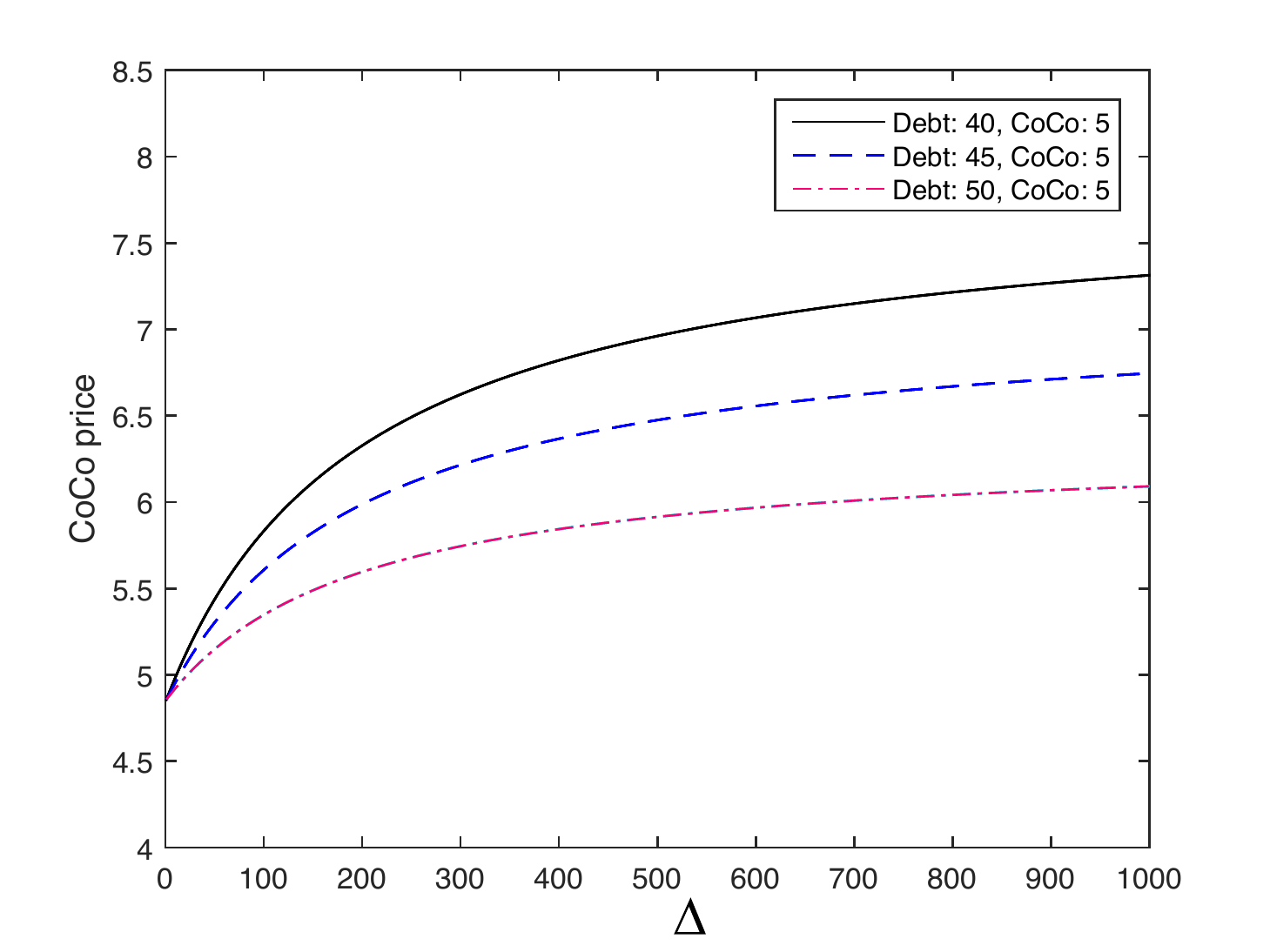}\caption{CoCo prices against dilution, for different leverage ratios.}
\label{Dilution and leverage} 
\end{center}
\end{figure}

The lines for different leverage converge to the same point on the
vertical axis as $\Delta\rightarrow0;$ for a PWD CoCo, leverage has
no impact on the price since both the CoCo and equity are junior to
debt. This result does depend on the assumption that the variance
of the asset value process is exogenously chosen; if it would be endogenously
chosen, higher leverage would lead to more risk taking and a higher
variance, which would have an impact on the value of the CoCo even
if it has a PWD structure (this point is made in \citet{Chan2016;revisedNovember2017}).

\subsection{Capital structure, CoCos and Risk Taking Incentives}

\subsubsection{Issuing CoCos to replace straight debt}

Consider next the impact of the issuance of CoCos on the capital structure
of the bank and, deduced from that, on incentives for shareholders\footnote{The computation of the prices and the production of the figures is
performed following the same procedures as in Section~\ref{secparamimpact}.}. First consider the case in which straight debt is replaced with
CoCos. In Figure~\ref{CoCoForDebt} we show the \emph{change} in
equity value (on the vertical axis) as a consequence of replacing
5 units of straight debt with 5 units of CoCos, set off against different
trigger prices. The different lines correspond to different degrees
of the dilution parameter $\rho$, again ranging from 0 to 1 (from
no dilution at all to infinite dilution). The solid line and the dotted
line indicate that shareholders only benefit from replacing debt with
CoCos when the terms of conversion are favorable enough to the shareholders
and the trigger is high enough. For low trigger ratios, the conversion
possibility becomes very small and the exercise comes down to swapping
debt for debt. That actually turns out to have a negative impact on
equity values because CoCos then are just a more expensive form of
debt so replacing debt with CoCos then actually destroys equity value.
As the trigger ratio goes up (move to the right in Figure~\ref{CoCoForDebt}),
the probability of getting the benefit of wiping out the CoCo debt
at favorable terms becomes more likely and starts to dominate, hence
the positive sign for high trigger ratios. Of course that second effect
does not take place for highly dilutive CoCos, for low probability
of conversion the impact of the debt for CoCo swap is negative, as
with non-dilutive CoCo's. But as the conversion trigger rises
and with it the conversion probability, the negative impact of a highly
dilutive conversion comes closer, so the price impact turns even more
negative. So the dashed line and the dashed-dotted line (highly dilutive
cases) show that shareholders have no incentive to swap debt for highly
dilutive CoCos, and increasingly less so as the probability of conversion
increases with higher trigger levels (academic convention).

\begin{figure}[h]
\begin{center}
\includegraphics[width=0.75\textwidth,height=0.35\textheight]{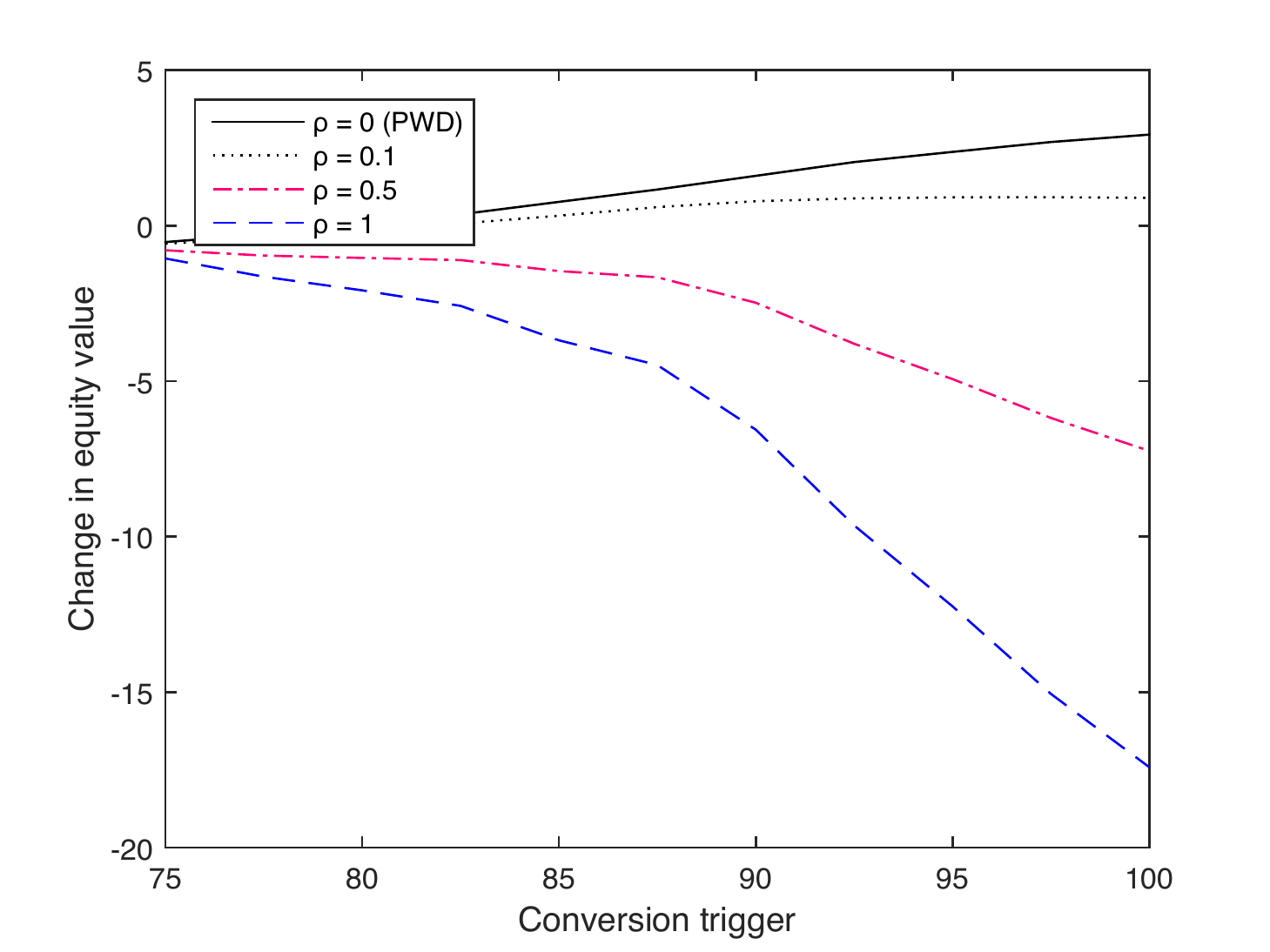}\caption{Change in Equity Value when 5 units of debt are replaced with 5 units
of CoCo (in market value terms). }
\label{CoCoForDebt} 
\end{center}
\end{figure}

\subsubsection{Issuing CoCos to replace equity}

Consider next a change in capital structure in the other direction,
where equity instead of debt is replaced by CoCos. Specifically, we
assume a CoCo is issued and the proceeds are used to buy back equity
at market value. The consequences on equity values are shown in Figure~\ref{CoCosForEquity},
again for different trigger levels (on the horizontal axis with the
different lines representing different degrees of dilution after conversion).

\begin{figure}[h]
\begin{center}
\includegraphics[width=0.75\textwidth,height=0.35\textheight]{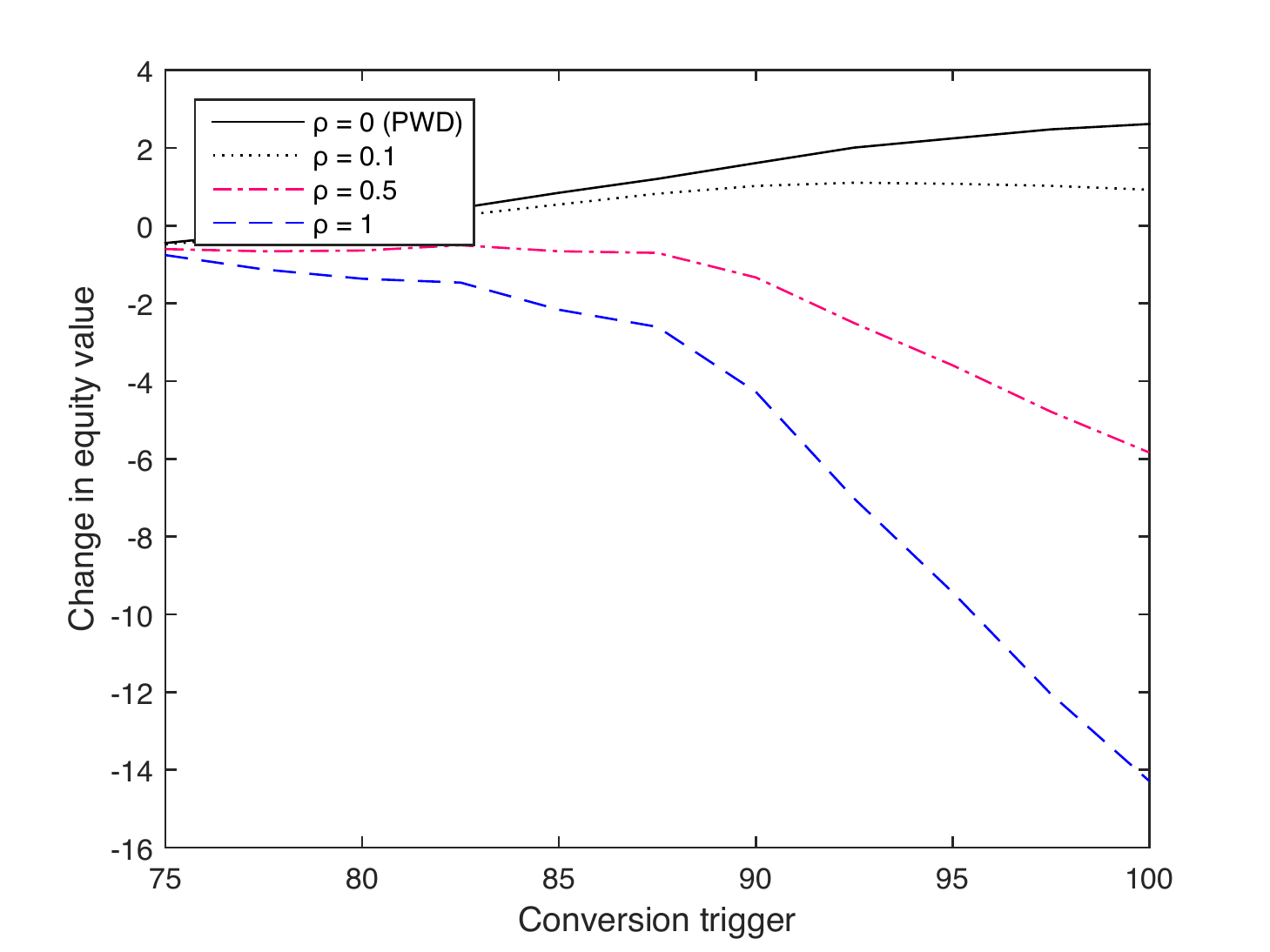}\caption{Change in equity value when 5 units of equity are replaced with 5
units of CoCos (in market value)}
\label{CoCosForEquity} 
\end{center}
\end{figure}
The pattern is very similar to the debt for CoCo swaps analyzed in
Figure~\ref{CoCoForDebt}. Equity holders have a strong incentive
to issue PWD CoCos with $\rho=0$ instead of new equity (or even issue
PWD CoCos to buy back debt as is done in this policy experiment),
since they actually gain on conversion. However for lower trigger
ratios the probability of conversion becomes too small, turning CoCos
\emph{de facto} into expensive debt, so the impact on equity value
turns negative for low values of the trigger ratio. And equity holders
will never want to issue dilutive CoCos ($\rho=1$ is the extreme
case with infinite dilution) for any level of the trigger ratio: before
conversion these CoCos are an expensive form of debt and after conversion
or rather at conversion time equity holders will actually loose out
when conversion takes place, making the instrument unambiguously unattractive
to shareholders when structured this way. These results may well explain
why some 60\% of all CoCos are PWD CoCos, cf.~\citet{Avdiev2017}, instead of
the dilutive CoCos favored by the academic literature (\citet{Calomiris_Herring_2013}
is an early and eloquent example of what is a widely shared view in
the academic literature arguing CoCos should be highly dilutive).

\subsection{Debt Overhang: on CoCos and Investment Incentives}

Debt overhang arises when the firm's loss absorption capacity has
become too low to protect the debtholders from fluctuations in asset
values (cf.\ \citet{Merton_1974}, \citet{Myers1977}), possibly
to the point of arrears having emerged already. One consequence of
debt overhang is that investment incentives are reduced for equity
holders, since part of the benefits of a new project will in effect
have to be shared with the creditors. Even if there are no actual
arrears yet, but debt is trading under par, part of the asset value
increase will go into increased market value of the debt, at the (partial)
expense of a higher market value for equity. In a structural model
without CoCos, the shareholders then do not have an incentive to invest
exactly at the moment the firm most needs an increase in asset values,
i.e.\ when the firm is near bankruptcy. Almost all of the value of
the investment will then be captured by the debt holders, as the value
of debt increases when the probability of a bankruptcy is reduced.
In which way CoCos interact with a situation with debt overhang is an
interesting question; CoCos introduce additional loss absorption capacity
which is good for debt holders, but CoCos may also have their own
impact on equity values. Debt holders may also profit in another way
in that, depending on the design of the CoCos, shareholders may have
an increased incentive to make an investment to avoid conversion.

The debt overhang and incentive issue can be looked at within the
context of our model by looking at what happens when assets are increased
by one unit, financed through one unit of equity (issued at market
value). If the total market value of equity goes up by more than one
unit, the shareholders would make a profit when they invest, giving
them an incentive to do so. However, when equity increases by less
than one unit, the investment is not beneficial to shareholders to
offset the expense, all or part of the benefits are apparently captured
by debt holders. We therefore consider the case in which a new accounting
report has just be released, with an asset value, see Equation~\eqref{eq:acc},
of $Y_{t_{n}}=100$; we can then examine what happens when this asset
value increases by one unit. The profit of this investment of one
unit is plotted against the conversion trigger in Figure~\ref{InvestmentIncentives}.

\begin{figure}[h]
\begin{center}
\includegraphics[width=0.75\textwidth,height=0.35\textheight]{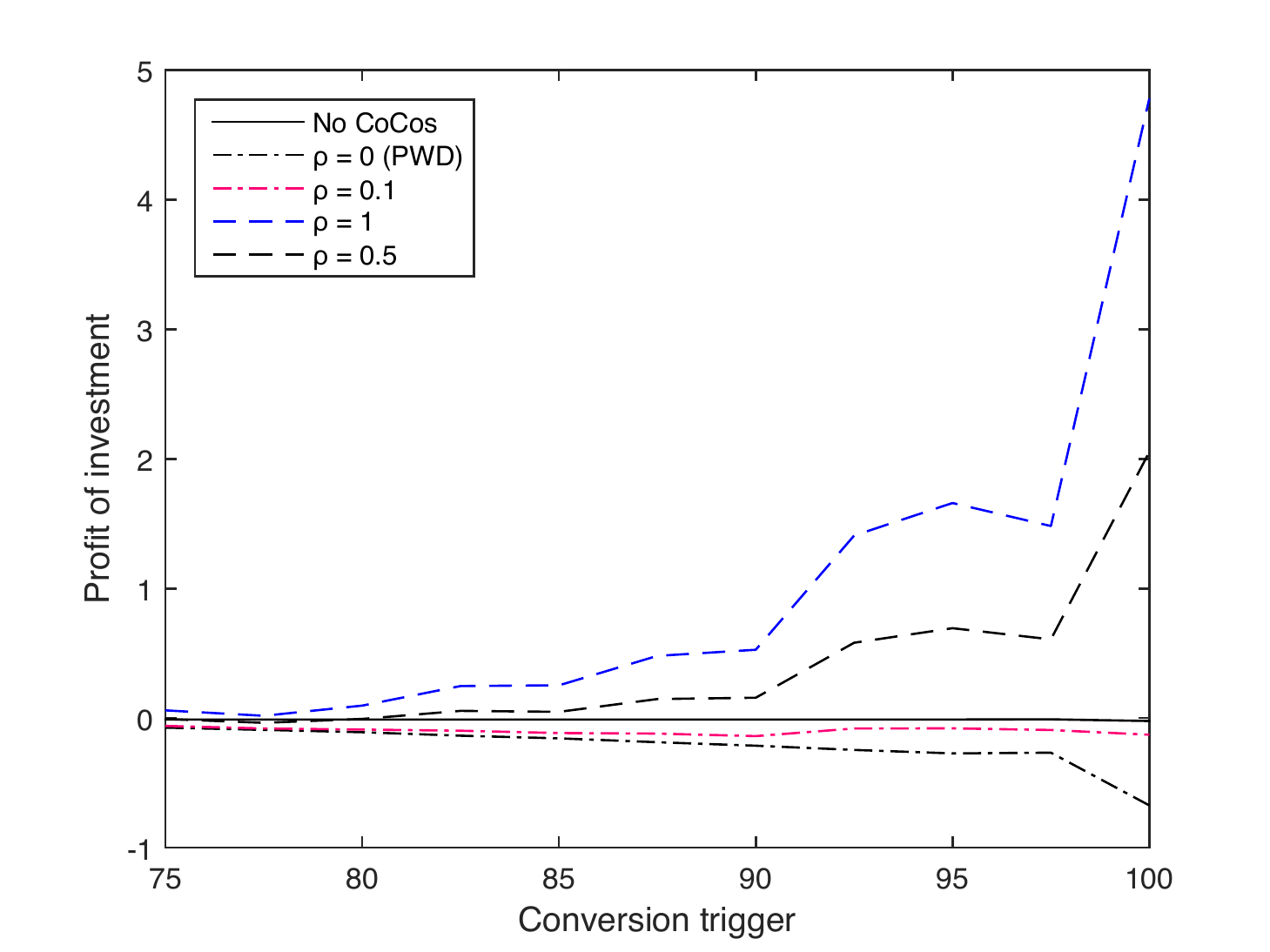}\caption{Debt overhang, CoCos and investment incentives}
\label{InvestmentIncentives} 
\end{center}
\end{figure}
The solid black line is our benchmark case with only straight debt
in addition to equity. The simulation shows the impact of debt overhang:
without CoCos (the solid line) the shareholders do not make a profit
when they invest, they actually suffer a small loss. The dashed-dotted
lines show that when the terms of conversion are favorable to shareholders
(i.e.\ CoCo holders loose out upon conversion), the shareholders
have even less of an incentive to engage in additional investment,
actually worsening the debt overhang problem. The black dashed-dotted
line corresponds to the existence of a PWD CoCo in the capital structure
of the firm and shows that the PWD CoCo indeed makes the investment
incentive for shareholders more negative, especially close to the
conversion trigger. So the strongest increase in Debt Overhang is
with the CoCos that most favors shareholders, the CoCos with a principal
write-down. The same happens to a somewhat lesser degree with CoCos
at slightly less non-dilutive terms but still favorable to shareholders.
Thus PWD or insufficiently dilutive CoCos are not capable of solving
the problem of debt overhang. However, highly dilutive CoCos do strengthen
shareholders' incentives to invest because they want to avoid conversion.
See in particular the dashed lines in Figure~\ref{InvestmentIncentives},
which correspond to highly dilutive CoCos; clearly such CoCos improve
the shareholders' investment incentives because they wish to avoid
conversion. Especially close to the conversion trigger, the shareholders
have in this case a substantive incentive to invest in a last attempt
to avoid the unfavorable conversion. To summarize, when terms of conversion
are beneficial enough to CoCo investors instead of favoring the old
shareholders, CoCos are capable of creating more of an investment
incentive for the shareholders. However, PWD CoCos and in general
less dilutive CoCos actually lead to lower investment incentives and
worsen the debt overhang problem when compared to straight debt.

\subsection{Coupon payments, the MDA trigger and the Deutsche Bank CoCo scare
of February 2016}

In the literature it is generally assumed that coupons are paid until
conversion. However coupon payments are affected by the so called
Maximum Distributable Amount trigger, under which regulators stop
the payment of coupons (and dividends) when the firm's capital value
falls below some trigger that is higher than the conversion trigger.
Coupon payments can start again when the capital value goes back up
and exceeds the trigger value again. This means that in the valuation
of a CoCo, we can apply Theorem~\ref{thmCoCoprice2} and Algorithm
\ref{rwmh3}, but with the coupon term defined as in Equation~(\ref{newterm}).
To demonstrate the relevance of the inclusion of this trigger in the
valuation of CoCos, we will look at the big price drop that the CoCos
of Deutsche Bank suffered at the beginning of 2016. On January 28
Deutsche Bank reported a net loss of 2.1 Billion EUR over the last
quarter of 2015. The relevant report furthermore reported for its
Risk-Weighted Assets a value of 397 Billion EUR, down from 408 Billion
EUR in the previous accounting report. Also, the Common Equity Tier
1 (CET1) ratio, defined as the fraction of the common equity and the
risk weighted asset (RWA), fell from 11.5\% to 11.1\%, primarily reflecting
the net loss over the quarter. The information is taken from the Financial Data Supplement 4Q2015, \citet{DeutscheBank2016},
the report that caused a big downward move in the price of the CoCos
of Deutsche Bank.

At this time, Deutsche Bank had four different CoCos issued (two in
USD, one in EUR, one in GBP, all PWD CoCos). To avoid having to deal
with an additional exchange rate risk factor, we will only consider
the EUR CoCo. This CoCo's write-down is triggered when the CET1-ratio
hits the level of 5.125\% and it pays a coupon of 6\%. As is clear
from the above, the CET1-ratio did not even come close to the low
trigger level. Still, the CoCo price tumbled 19.5\% percent within
the week after the announcement of the report. Market publications
at the time widely argued that this happened out of fear for reaching
the MDA trigger and the subsequent cancelling of coupon payments.
The model developed in this paper is particularly relevant to analyze
this case, as we can include the announcement of a bad accounting
report in the valuation, as well as the early cancelling of coupons
when the MDA trigger is hit. The precise value of the MDA trigger
is not publicly known, so it is not possible to use the real value
of the MDA trigger. However, it is still interesting to examine how
much of a price drop the model can explain by taking the MDA trigger
close to the reported values. Unless stated otherwise, we use the
same parameters as in Table~\ref{BaseCaseParameters}. Before the
bad accounting report arrives, we assume there is one accounting report,
with a value $Y_{t_{1}}$= EUR~408~bn. Then the new accounting report
arrives, so we now have two accounting reports with values $Y_{t_{1}}$=EUR~408~bn
and $Y_{t_{2}}$=EUR~397~bn. The triggers are chosen such that they
correspond with CET1 ratios at the moment of the accounting report.
That is, we choose $v_{c}$ such that it corresponds to a CET1 ratio
of 5.125\%. We know the CET1 ratio is 11.1\% where RWA is EUR~397~bn,
so the total amount of debt (only CoCos and straight debt in the model)
is EUR~397~bn $\times$ 0.889 = EUR~352.93~bn. So a CET1 ratio
of 5.125\% would then correspond to a RWA value of EUR~352.93/(1-0.05125)
= EUR~372~bn, which is thus the value of the conversion trigger
$v_{c}$. The value of the MDA trigger $v_{cc}$ can be chosen in
the same way, a MDA trigger at a CET1 ratio of 10\% would correspond
to a RWA value of EUR~352.93/(1-0.1)~bn = EUR~392~bn. The coupon
of the CoCo is $c_{2}$ = 0.06. As the relevant CoCo has a perpetual
maturity, we choose the first call date, 10/10/18, as the maturity.
Because we assumed that the second accounting report arrives at 01/28/16,
t=0 corresponds to 07/28/15. Hence T = 3 + 2/12 + 13/365. In Figure
\ref{Price_Drop_DB} the price change after the announcement of a
bad accounting report is illustrated for different choices of the
MDA trigger.

\begin{figure}[h]
\begin{center}
\includegraphics[width=0.75\textwidth,height=0.35\textheight]{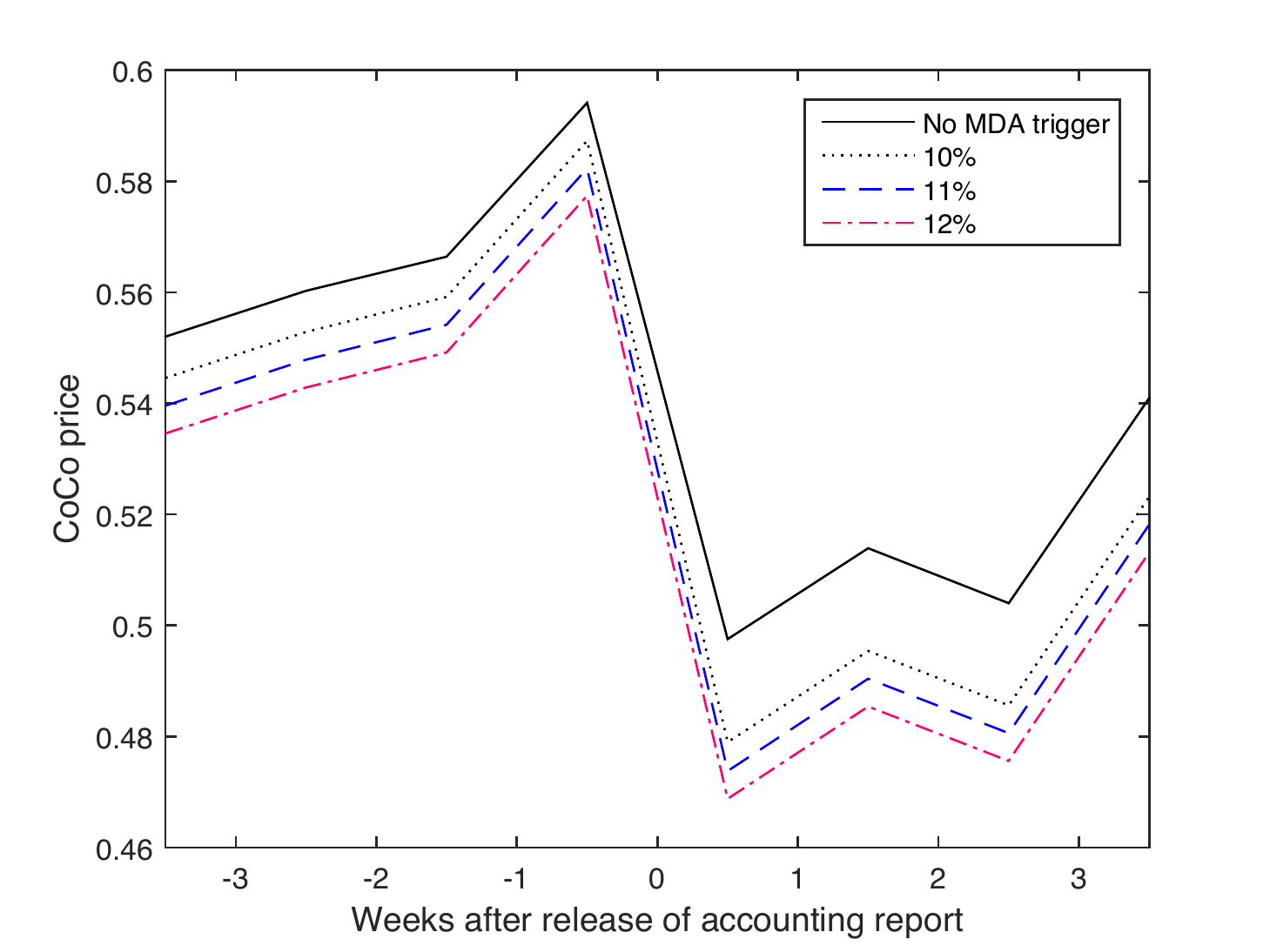}
\caption{CoCo price after the release of the bad accounting report for different
values of the MDA trigger.}
\label{Price_Drop_DB} 
\end{center}
\end{figure}
The solid line corresponds to the case where the MDA-trigger is not
included in the model, in this case only a drop of 14.3\% in the CoCo
price occurs, when looking at the price just before the release of
the accounting report and afterwards. However, if we add the MDA trigger
to the model, a stronger negative price change follows. The dashed
line corresponds to the case that we take the MDA trigger at 11\%,
i.e.\ just beneath the reported CET1 value. This gives a price drop
of 18.7\%. If we take the MDA trigger to be 10\%, the price drops
by 17.3\%, which is illustrated by the dotted line. However, we have
to take the MDA trigger above the reported CET1 ratio of 11.1\% to
create a price drop of 19.5\%, cf.~the dashed-dotted line. That is,
a price drop of 19.5\% corresponds in the model to the situation that
the MDA trigger is already breached, which was not the case. However,
it is clear that a significant part of the price change is driven
by the MDA trigger, not by the conversion trigger. The above illustrates
the added value of explicitly incorporating accounting reports into
the analysis and taking the MDA trigger into account in the valuation
of a CoCo, especially when the MDA trigger is coming close, but the
conversion trigger is still far away.

\section{Conclusions}

\label{sec:conclusions} \setcounter{equation}{0}

CoCos are debt instruments that are written down or converted into
equity when the value of the issuing bank becomes too low. CoCos have
taken European capital markets by storm. Over 560 bn Euro has been issued
over the past five years, with more likely to come. Apparently banks
see CoCos as an attractive alternative to issuing new equity when
faced with a capital shortage. The academic literature has rapidly
developed attempting to analyse and price the new debt instruments,
but at the same time a remarkable divergence has opened up between
this academic literature and the type of CoCos issued in actual practice.
Without exception, the academic literature argues for conversion triggers
based on market values instead of accounting ratios. Accordingly,
with the exception of \citet{Glasserman2012a}, the asset pricing
literature on CoCos has analysed market based conversion triggers
only. Yet, at least in the European Union and Switzerland, market
based triggers disqualify the instrument as capital under EU regulation,
so without a single exception all CoCos issued so far base their conversion
trigger on accounting ratios. In addition, they have to keep open
the possibility of regulatory intervention when a so called Point
of Non-Viability is reached. Moreover, the literature has paid no
attention to the triggers in place for suspension of coupon payments,
although those triggers have most likely caused most of the recent
volatility in CoCo prices. In this paper we bridge the gap between
the academic literature and actual practice by explicitly introducing
coupon suspension triggers and, arguably more important, explicitly
introducing key features of the accounting process in the model which
allows us to analyse accounting ratio based conversion triggers and
PONV interventions in a meaningful way.

In order to do so we model the basic stochastic process driving asset
values as a standard geometric Brownian Motion, as does most of the
literature. Where we diverge is in our assumption that the process
is not directly observable. Instead information reaches the market
based on noisy accounting reports issued at regular but discrete time
epochs only. In this way we can take into account differences between
accounting values and market values. Our model is based on the premise
that the market cannot observe the true asset value process, it only
has access to noisy accounting reports which moreover, are only published
at discrete moments in time. In this way, the price of CoCos can only
be based on the information from the accounting reports, not on the
underlying true asset process as this is not observed directly. The
model does not lead to closed form solutions for CoCo prices, but
Markov Chain Monte Carlo methods are used to compute prices.

The model was remarkably successful in reproducing the price response
of CoCos to a widely reported adverse profit warning issued by Deutsche
Bank in February 2016. This exercise has shown the importance of incorporating
the so called MDA trigger in valuation models, the trigger that governs
suspension of coupon payments. Using the model as a tool of analysis
yields a rich set of results on the relation between valuation, CoCo
design and environment variables such as asset volatility and accounting
noise. Moreover, we have shown that CoCos depending on their design
have a significant impact on shareholder incentives to take on additional
risk or on investment incentives in situations of debt overhang, and
interact in interesting ways with the capital structure of the bank
issuing them. The various results, such as the attractiveness of PWD
CoCos for equity holders, can help explain the design choices made
in practice where about 60\% of all CoCos issued are of that variety.
The explicit incorporation of the accounting process as providing
noisy reports on the underlying unobservable firm fundamentals, which are
issued at regular but discrete time instants, allows us to analyse CoCo
designs based on accounting ratios as well as triggers based on reaching
a so called Point of Non-Viability. The relation between risk taking
incentives, leverage and CoCo design should be of interest to regulators.
We show that CoCos, again depending on their design features, may
significantly change the sensitivity of equity values to risk, thereby
possibly opening the door to risk arbitrage for given capital requirements,
taking into account asset characteristics only, as is the case under
the BIS based capital regime. 

\bibliographystyle{plainnat}

\appendix

\section{Appendix}

\label{sec:appendix} \setcounter{equation}{0}

In this Appendix, all the mathematical details and proofs that are
left out in the main text, are provided.\medskip{}
\\
 \textbf{Proof of Lemma~\ref{ftildelemma}} $\tilde{f}(t,\cdot,z_{0})$
is defined by 
\[
\mathbb{P}\left(Z_{t}\in\dd x|\tau_{b}>t\right)=\tilde{f}(t,x,z_{0})\dd x.
\]
By Bayes' rule we can write 
\[
\mathbb{P}\left(Z_{t}\in\dd x|\tau_{b}>t\right)=\frac{\mathbb{P}\left(Z_{t}\in\dd x,\tau_{b}>t\right)}{\mathbb{P}(\tau_{b}>t)}.
\]
The denominator of this expression is given by 
\[
\mathbb{P}(\tau_{b}>t)=1-\pi(t,z_{0}-z_{b})=\Phi\left(\frac{z_{0}-z_{b}+mt}{\sigma\sqrt{t}}\right)-e^{-2m(z_{0}-z_{b})/\sigma^{2}}\Phi\left(\frac{z_{b}-z_{0}+mt}{\sigma\sqrt{t}}\right).
\]
In order to compute the numerator, we will rely on the following result
by \citet{Harrison1985}, which can be found in Section 1.8, Proposition 1. Denote
by $X_{t}$ a Brownian motion with drift $\mu$, variance $\sigma^{2}$
and $X_{0}=0$. Furthermore define $M_{t}:=\max\{X_{s}:0\leq s\leq t\}$.
Then the joint distribution of $X_{t}$ and $M_{t}$ satisfies 
\begin{align}
\mathbb{P}\left(X_{t}\in\dd x,M_{t}\leq y\right)=\frac{1}{\sigma\sqrt{t}}\exp\left(\frac{\mu x}{\sigma^{2}}-\frac{\mu^{2}t}{2\sigma^{2}}\right)\left(\phi\left(\frac{x}{\sigma\sqrt{t}}\right)-\phi\left(\frac{x-2y}{\sigma\sqrt{t}}\right)\right)\dd x,\label{harrlemma}
\end{align}
where $\phi$ denotes the standard normal density function. Now, denote
$X_{t}=-Z_{t}+z_{0}$, which is a Brownian motion with drift $-m$,
variance $\sigma^{2}$ and $X_{0}=0$. Furthermore, denote $M_{t}=\max\{X_{s}:0\leq s\leq t\}$.
Then Equation~(\ref{harrlemma}) implies that 
\begin{align}\label{zevenenhalf}
\mathbb{P}\left(Z_{t}\in\dd x,\tau_{b}>t\right) & =\mathbb{P}\left(Z_{t}\in\dd x,\inf_{0\leq s\leq t}Z_{s}>z_{b}\right)\nonumber \\
 & =\mathbb{P}\left(X_{t}\in\dd(z_{0}-x),M_{t}\leq z_{0}-z_{b}\right)\nonumber \\
 & =\frac{1}{\sigma\sqrt{t}}\exp\left(\frac{-m(z_{0}-x)}{\sigma^{2}}-\frac{m^{2}t}{2\sigma^{2}}\right)\left(\phi\left(\frac{z_{0}-x}{\sigma\sqrt{t}}\right)-\phi\left(\frac{-z_{0}-x+2z_{b}}{\sigma\sqrt{t}}\right)\right)\dd x.
\end{align}
So we conclude that 
\begin{align*}
\tilde{f}(t,x,z_{0})=\frac{1}{\sigma\sqrt{t}}\frac{\exp\left(\frac{-m(z_{0}-x)}{\sigma^{2}}-\frac{m^{2}t}{2\sigma^{2}}\right)\left(\phi\left(\frac{z_{0}-x}{\sigma\sqrt{t}}\right)-\phi\left(\frac{-z_{0}-x+2z_{b}}{\sigma\sqrt{t}}\right)\right)}{\Phi\left(\frac{z_{0}-z_{b}+mt}{\sigma\sqrt{t}}\right)-e^{-2m(z_{0}-z_{b})/\sigma^{2}}\Phi\left(\frac{z_{b}-z_{0}+mt}{\sigma\sqrt{t}}\right)}.
\end{align*}
\hfill{}$\square$\medskip\\
\textbf{Proof of Theorem~\ref{thmCoCoprice1}}
Recall that the CoCo price was written as 
\begin{align}
C(t)=Pe^{-r(T-t)}p_{c}(t,T)+cP\int_{t}^{T}e^{-r(u-t)}p_{c}(t,u)\dd u-RP\int_{t}^{T}e^{-r(u-t)}p_{c}(t,\dd u).\label{CoCopricedufflando2}
\end{align}
The integral in this last term can be written as 
\begin{align*}
\int_{t}^{T}e^{-r(u-t)}p_{c}(t,\dd u) & =\int_{t}^{T}e^{-r(u-t)}\frac{\partial}{\partial u}p_{c}(t,u)\dd u\\
 & =\int_{t}^{T}e^{-r(u-t)}\int_{z_{c}}^{\infty}\frac{\partial}{\partial u}(1-\pi(u-t,x-z_{c}))f(t,x,\omega)\dd x\dd u\\
 & =\int_{z_{c}}^{\infty}f(t,x,\omega)\int_{t}^{T}e^{-r(u-t)}\frac{\partial}{\partial u}(1-\pi(u-t,x-z_{c}))\dd u\dd x\\
 & =\int_{z_{c}}^{\infty}f(t,x,\omega)I(x)\dd x,
\end{align*}
where 
\begin{align*}
I(x)=\int_{t}^{T}e^{-r(u-t)}\frac{\partial}{\partial u}(1-\pi(u-t,x-z_{c}))\dd u.
\end{align*}
Furthermore, the integral in the second term of Equation~(\ref{CoCopricedufflando2})
can be written as 
\begin{align*}
\int_{t}^{T}e^{-r(u-t)}p_{c}(t,u)\dd u & =\int_{t}^{T}e^{-r(u-t)}\int_{z_{c}}^{\infty}(1-\pi(u-t,x-z_{c}))f(t,x,\omega)\dd x\dd u\\
 & =\int_{z_{c}}^{\infty}f(t,x,\omega)\int_{t}^{T}e^{-r(u-t)}(1-\pi(u-t,x-z_{c}))\dd u\dd x\\
 & =\int_{z_{c}}^{\infty}f(t,x,\omega)\tilde{I}(x)\dd x,
\end{align*}
where 
\begin{align*}
\tilde{I}(x) & =\int_{t}^{T}e^{-r(u-t)}(1-\pi(u-t,x-z_{c}))\dd u\\
 & =\left[-\frac{1}{r}e^{-r(u-t)}(1-\pi(u-t,x-z_{c}))\right]_{u=t}^{T}+\frac{1}{r}I(x)\\
 & =-\frac{1}{r}e^{-r(T-t)}(1-\pi(T-t,x-z_{c}))+\frac{1}{r}+\frac{1}{r}I(x).
\end{align*}
Putting the above together allows us to write the CoCo price $C(t)$
as a single integral, weighted by the density $f(t,x)$, as follows
\begin{align*}
C(t) & =\int_{z_{c}}^{\infty}\left(Pe^{-r(T-t)}(1-\pi(T-t,x-z_{c}))+cP\tilde{I}(x)-RPI(x)\right)f(t,xa)\dd x\\
 & =\int_{z_{c}}^{\infty}\left(\frac{r-c}{r}Pe^{-r(T-t)}(1-\pi(T-t,x-z_{c}))+\frac{c}{r}P+\left(\frac{cP}{r}-RP\right)I(x)\right)f(t,x)\dd x.
\end{align*}
It now remains to find an analytical expression for $I(x)$. First
consider 
\begin{align*}
\lefteqn{\frac{\partial}{\partial u}(1-\pi(u-t,x-z_{c}))}\\
 & \quad=\frac{\partial}{\partial u}\left(\Phi\left(\frac{x-z_{c}+m(u-t)}{\sigma\sqrt{u-t}}\right)-e^{-2m(x-z_{c})/\sigma^{2}}\Phi\left(\frac{-(x-z_{c})+m(u-t)}{\sigma\sqrt{u-t}}\right)\right)\\
 & \quad=\phi\left(\frac{x-z_{c}+m(u-t)}{\sigma\sqrt{u-t}}\right)\left(\frac{m}{2\sigma\sqrt{u-t}}-\frac{x-z_{c}}{2\sigma(u-t)^{3/2}}\right)\\
 & \quad\quad-e^{-2m(x-z_{c})/\sigma^{2}}\phi\left(\frac{-(x-z_{c})+m(u-t)}{\sigma\sqrt{u-t}}\right)\left(\frac{m}{2\sigma\sqrt{u-t}}+\frac{x-z_{c}}{2\sigma(u-t)^{3/2}}\right)\\
 & \quad=\frac{z_{c}-x}{\sigma(u-t)^{3/2}}\phi\left(\frac{x-z_{c}+m(u-t)}{\sigma\sqrt{u-t}}\right),
\end{align*}
which implies 
\begin{align}\label{I2}
I(x) & =\int_{t}^{T}e^{-r(u-t)}\frac{z_{c}-x}{\sigma(u-t)^{3/2}}\frac{1}{\sqrt{2\pi}}\exp\left(-\frac{(x-z_{c}+m(u-t))^{2}}{2\sigma^{2}(u-t)}\right)\dd u\nonumber \\
 & =\frac{z_{c}-x}{\sqrt{2\pi\sigma^{2}}}\exp\left(-\frac{m(x-z_{c})}{\sigma^{2}}\right)\int_{0}^{T-t}e^{-ru}\frac{1}{u^{3/2}}\exp\left(-\frac{(x-z_{c})^{2}}{2\sigma^{2}u}-\frac{m^{2}u}{2\sigma^{2}}\right)\dd u\nonumber \\
 & =\frac{z_{c}-x}{\sqrt{2\pi\sigma^{2}}}\exp\left(-\frac{m(x-z_{c})}{\sigma^{2}}\right)\int_{0}^{T-t}\frac{1}{u^{3/2}}\exp\left(-\frac{(x-z_{c})^{2}}{2\sigma^{2}}\frac{1}{u}-\left(\frac{m^{2}}{2\sigma^{2}}+r\right)u\right)\dd u\nonumber \\
 & =2\frac{z_{c}-x}{\sqrt{2\pi\sigma^{2}}}\exp\left(-\frac{m(x-z_{c})}{\sigma^{2}}\right)\int_{(T-t)^{-1/2}}^{\infty}\exp\left(-Av^{2}-B\frac{1}{v^{2}}\right)\dd v,
\end{align}
where the last line follows by substitution of $v=u^{-1/2}$ and by
setting 
\[
A=\frac{(x-z_{c})^{2}}{2\sigma^{2}},~B=\frac{m^{2}}{2\sigma^{2}}+r.
\]
Now, by noting that $(Av^{2}+B/v^{2})=(\sqrt{A}v-\sqrt{B}/v)^{2}+2\sqrt{AB}$,
as well as $(Av^{2}+B/v^{2})=(\sqrt{A}v+\sqrt{B}/v)^{2}-2\sqrt{AB}$,
the remaining integral can be evaluated, by doing the substitutions
$u=\sqrt{A}v-\sqrt{B}/v$ and $u=\sqrt{A}v+\sqrt{B}/v$, as follows
\begin{align}\label{int3}
\hspace{2em} & \hspace{-2em}\int_{(T-t)^{-1/2}}^{\infty}\exp\left(-Av^{2}-B\frac{1}{v^{2}}\right)\dd v\nonumber \\
 & =\frac{1}{2\sqrt{A}}\int_{(T-t)^{-1/2}}^{\infty}\exp\left(-(\sqrt{A}v-\sqrt{B}/v)^{2}-2\sqrt{AB}\right)(\sqrt{A}+\sqrt{B}\frac{1}{v^{2}})\dd v\nonumber \\
 & ~~~+\frac{1}{2\sqrt{A}}\int_{(T-t)^{-1/2}}^{\infty}\exp\left(-(\sqrt{A}v+\sqrt{B}/v)^{2}+2\sqrt{AB}\right)(\sqrt{A}-\sqrt{B}\frac{1}{v^{2}})\dd v\nonumber \\
 & =\frac{1}{2\sqrt{A}}e^{-2\sqrt{AB}}\int_{\sqrt{A/(T-t)}-\sqrt{B(T-t)}}^{\infty}e^{-u^{2}}\dd u\nonumber \\
 & ~~~+\frac{1}{2\sqrt{A}}e^{2\sqrt{AB}}\int_{\sqrt{A/(T-t)}+\sqrt{B(T-t)}}^{\infty}e^{-u^{2}}\dd u\nonumber \\
 & =\frac{\sqrt{\pi}}{4\sqrt{A}}\left(e^{-2\sqrt{AB}}\erfc\left(\sqrt{A/(T-t)}-\sqrt{B(T-t)}\right)\right.\nonumber \\
 & ~~~+\left.e^{2\sqrt{AB}}\erfc\left(\sqrt{A/(T-t)}+\sqrt{B(T-t)}\right)\right),
\end{align}
where $\erfc(x)$ is the complementary error function, which is defined
by 
\[
\erfc(x):=\frac{2}{\sqrt{\pi}}\int_{x}^{\infty}e^{-u^{2}}\dd u
\]
and satisfies 
\[
\frac{1}{2}\erfc(x/\sqrt{2})=1-\Phi(x).
\]
Combining Equations (\ref{I2}) and (\ref{int3}) and substituting
back the expressions for $A$ and $B$, finally leads to the expression
for $I(x)$: 
\begin{align}\label{I(x)2}
I(x) & =2\frac{z_{c}-x}{\sqrt{2\pi\sigma^{2}}}\exp\left(-\frac{m(x-z_{c})}{\sigma^{2}}\right)\int_{(T-t)^{-1/2}}^{\infty}\exp\left(-Av^{2}-B\frac{1}{v^{2}}\right)\dd v\nonumber \\
 & =\exp\left(-\frac{m(x-z_{c})}{\sigma^{2}}\right)\left(-e^{-2\sqrt{AB}}\frac{1}{2}\erfc\left(\sqrt{A/(T-t)}-\sqrt{B(T-t)}\right)\right.\nonumber \\
 & ~~~-\left.e^{2\sqrt{AB}}\frac{1}{2}\erfc\left(\sqrt{A/(T-t)}+\sqrt{B(T-t)}\right)\right)\nonumber \\
 & =\exp\left(-\frac{m(x-z_{c})+(x-z_{c})\sqrt{m^{2}+2r\sigma^{2}}}{\sigma^{2}}\right)\left(\Phi\left(\frac{x-z_{c}-\sqrt{m^{2}+2r\sigma^{2}}(T-t)}{\sigma\sqrt{T-t}}\right)-1\right)\nonumber \\
 & ~~~+\exp\left(-\frac{m(x-z_{c})-(x-z_{c})\sqrt{m^{2}+2r\sigma^{2}}}{\sigma^{2}}\right)\left(\Phi\left(\frac{x-z_{c}+\sqrt{m^{2}+2r\sigma^{2}}(T-t)}{\sigma\sqrt{T-t}}\right)-1\right).
\end{align}
\hfill{}$\square$ \\
 \textbf{Proof of Theorem~\ref{thmCoCoprice2}} The market price
of the CoCos is given by 
\begin{align}\label{priceeqconv2}
C(t) & =\mathbb{E}\left(P_{2}e^{-r(T-t)}\mathbf{1}_{\{\tau_{c}>T\}}|\mathcal{H}_{t}\right)+\mathbb{E}\left(\int_{t}^{T}c_{2}P_{2}e^{-r(u-t)}\mathbf{1}_{\{\tau_{c}>u\}}\dd u|\mathcal{H}_{t}\right)\nonumber \\
 & ~~~+\mathbb{E}\left(\frac{\Delta P_{2}}{\Delta P_{2}+1}E^{PC}(\tau_{c})e^{-r(\tau_{c}-t)}\mathbf{1}_{\{\tau_{c}\leq T\}}|\mathcal{H}_{t}\right).
\end{align}
The first two terms together, are captured in the integral 
\[
\int_{z_{c}}^{\infty}h_{0}(x)f(t,x)\dd x,
\]
as is clear from taking $R=0$ in the PWD case, cf.\ Theorem~\ref{thmCoCoprice1}.
\\
 Recall that the third term in Equation~(\ref{priceeqconv2}) was
written as 
\begin{align} \label{newterm2.1}
\hspace{2em} & \hspace{-2em}\mathbb{E}\left(\frac{\Delta P_{2}}{\Delta P_{2}+1}E^{PC}(\tau_{c})e^{-r(\tau_{c}-t)}\mathbf{1}_{\{\tau_{c}\leq T\}}|\mathcal{H}_{t}\right)\nonumber \\
 & =\frac{\Delta P_{2}}{\Delta P_{2}+1}e^{z_{c}}\int_{t}^{T}e^{-r(u-t)}\mathbb{P}(\tau_{c}\in\dd u|\tau_{c}>t,Y^{(n)})\nonumber \\
 & ~~~-\frac{\Delta P_{2}c_{1}P_{1}}{\Delta P_{2}+1}\int_{t}^{\infty}e^{-r(u-t)}\mathbb{P}(\tau_{c}\leq T\wedge u,\tau_{b}>u|\tau_{c}>t,Y^{(n)})\dd u\nonumber \\
 & ~~~-\frac{\Delta P_{2}}{\Delta P_{2}+1}e^{z_{b}}\int_{t}^{\infty}e^{-r(u-t)}\mathbb{P}(\tau_{c}\leq T,\tau_{b}\in\dd u|\tau_{c}>t,Y^{(n)}).
\end{align}
Note that the first integral in this equation is already computed
in the proof of Theorem~\ref{thmCoCoprice1} and given by 
\begin{align}
e^{z_{c}}\int_{t}^{T}e^{-r(u-t)}\mathbb{P}(\tau_{c}\in\dd u|\tau_{c}>t,Y^{(n)})=-e^{z_{c}}\int_{z_{c}}^{\infty}f(t,x)I(x)\dd x,\label{oldterm}
\end{align}
where $I(x)$ is given by Equation~(\ref{I(x)2}). \\
 To compute the other integrals in Equation~(\ref{newterm2.1}),
it is sufficient to find expressions for 
\[
\mathbb{P}(\tau_{c}\leq T,\tau_{b}>u|\tau_{c}>t,Y^{(n)}=y^{(n)})\text{ and }\mathbb{P}(\tau_{c}\leq u,\tau_{b}>u|\tau_{c}>t,Y^{(n)}=y^{(n)}).
\]
In order to find expressions for this joint probabilities, we first
need the following lemma. \begin{lemma} The joint probability $\gamma(x,y,z,t_{1},t_{2})$
that $Z$, starting from $x$, does not hit $z$ before time $t_{1}$
but does hit $y$ before time $t_{2}$, is for $x>y>z$ given by 
\begin{align*}
\gamma(x,y,z,t_{1},t_{2}) & =\mathbb{P}(\inf_{0\leq s\leq t_{1}}Z_{s}>z,\inf_{0\leq s\leq t_{2}}Z_{s}\leq y)\\
 & =\left\{ \begin{matrix}\pi(t_{2},x-y)-\pi(t_{1},x-z) & \text{ for }t_{1}\leq t_{2},\\
1-\pi(t_{1},x-z)-\int_{y}^{\infty}(1-\pi(t_{1}-t_{2},\tilde{z}-z))\hat{f}(x,y,\tilde{z},t_{2})\dd\tilde{z} & \text{ for }t_{1}>t_{2},
\end{matrix}\right.
\end{align*}
where 
\begin{align}
\hat{f}(x,y,\tilde{z},t_{2})=\frac{1}{\sigma\sqrt{t_{2}}}\exp\left(\frac{-m(x-\tilde{z})}{\sigma^{2}}-\frac{m^{2}t_{2}}{2\sigma^{2}}\right)\left(\phi\left(\frac{x-\tilde{z}}{\sigma\sqrt{t_{2}}}\right)-\phi\left(\frac{-x-\tilde{z}+2y}{\sigma\sqrt{t_{2}}}\right)\right)\label{gfunctie}
\end{align}
\end{lemma}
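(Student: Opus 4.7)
The plan is to handle the two cases separately by decomposing the event $\{\inf_{0\leq s\leq t_1}Z_s>z,\ \inf_{0\leq s\leq t_2}Z_s\leq y\}$ in a way suited to the relative ordering of $t_1$ and $t_2$. Throughout write $m_1=\inf_{0\leq s\leq t_1}Z_s$ and $m_2=\inf_{0\leq s\leq t_2}Z_s$, so that $\mathbb{P}(m_1\leq z)=\pi(t_1,x-z)$ and $\mathbb{P}(m_2\leq y)=\pi(t_2,x-y)$ by Lemma~\ref{pi}.

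For the case $t_1\leq t_2$, observe that the running infimum is nonincreasing in the time horizon, so $m_2\leq m_1$ almost surely; combined with $z<y$, this gives the chain of inclusions $\{m_1\leq z\}\subseteq\{m_2\leq z\}\subseteq\{m_2\leq y\}$. Splitting $\{m_2\leq y\}$ according to whether $m_1>z$ or $m_1\leq z$ then yields $\pi(t_2,x-y)=\gamma(x,y,z,t_1,t_2)+\pi(t_1,x-z)$, which is exactly the first branch of the claim.

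For the case $t_1>t_2$, I would instead write $\gamma=\mathbb{P}(m_1>z)-\mathbb{P}(m_1>z,\,m_2>y)$, where the first term equals $1-\pi(t_1,x-z)$. For the second term, since $y>z$ implies that $\{m_2>y\}\subseteq\{m_2>z\}$, the event $\{m_1>z,\,m_2>y\}$ coincides with $\{m_2>y\}\cap\{\inf_{t_2\leq s\leq t_1}Z_s>z\}$. Conditioning on the value $Z_{t_2}=\tilde{z}$, the Markov property of $Z$ together with stationarity of its increments gives $\mathbb{P}(\inf_{t_2\leq s\leq t_1}Z_s>z\mid Z_{t_2}=\tilde{z})=1-\pi(t_1-t_2,\tilde{z}-z)$. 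The key input is then the joint density of $(Z_{t_2},\mathbf{1}_{\{m_2>y\}})$ starting from $x$, which is precisely $\hat{f}(x,y,\tilde{z},t_2)\,\dd\tilde{z}$: this is the same object derived in the proof of Lemma~\ref{ftildelemma} in Equation~(\ref{zevenenhalf}) upon the substitutions $z_0\mapsto x$ and $z_b\mapsto y$. Integrating $\tilde{z}$ over $(y,\infty)$ and subtracting from $1-\pi(t_1,x-z)$ produces the stated second branch.

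The only real subtlety is the bookkeeping in Case 2, where one must correctly identify the relevant piece of the sample path on $[t_2,t_1]$ and recognize the density $\hat{f}$ as a by-product of an earlier computation; no new heavy calculation is required. The case distinction itself captures the essential dichotomy: when $t_1\leq t_2$, the event involving $z$ is automatically the rarer one and absorbs into the event involving $y$, whereas when $t_1>t_2$ the two constraints act on overlapping but distinct time intervals and need to be separated via the Markov property.
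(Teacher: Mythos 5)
Your proof is correct and follows essentially the same route as the paper: for $t_1\leq t_2$ the paper likewise reduces $\mathbb{P}(m_1\leq z, m_2\leq y)$ to $\pi(t_1,x-z)$ via the inclusion $\{m_1\leq z\}\subseteq\{m_2\leq y\}$, and for $t_1>t_2$ it performs the same complementation, conditions on $Z_{t_2}$ using the Markov property and stationary independent increments, and identifies the defective density $\mathbb{P}(\inf_{0\leq s\leq t_2}Z_s>y,\,Z_{t_2}\in\dd\tilde z)=\hat f(x,y,\tilde z,t_2)\,\dd\tilde z$ as a relabelled instance of Equation~(\ref{zevenenhalf}). No substantive differences.
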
 \begin{proof} \hfill{}
\begin{itemize}
\item For $t_{1}\leq t_{2}$, we can write 
\begin{align*}
\mathbb{P}(\inf_{0\leq s\leq t_{1}}Z_{s}>z,\inf_{0\leq s\leq t_{2}}Z_{s}\leq y) & =\mathbb{P}(\inf_{0\leq s\leq t_{2}}Z_{s}\leq y)-\mathbb{P}(\inf_{0\leq s\leq t_{1}}Z_{s}\leq z,\inf_{0\leq s\leq t_{2}}Z_{s}\leq y)\\
 & =\mathbb{P}(\inf_{0\leq s\leq t_{2}}Z_{s}\leq y)-\mathbb{P}(\inf_{0\leq s\leq t_{1}}Z_{s}\leq z)\\
 & =\pi(t_{2},x-y)-\pi(t_{1},x-z).
\end{align*}
\item For $t_{1}>t_{2}$, note that 
\begin{align*}
\mathbb{P}(\inf_{0\leq s\leq t_{1}}Z_{s}>z,\inf_{0\leq s\leq t_{2}}Z_{s}\leq y) & =\mathbb{P}(\inf_{0\leq s\leq t_{1}}Z_{s}>z)-\mathbb{P}(\inf_{0\leq s\leq t_{1}}Z_{s}>z,\inf_{0\leq s\leq t_{2}}Z_{s}>y)\\
 & =1-\pi(t_{1},x-z)-\mathbb{P}(\inf_{0\leq s\leq t_{1}}Z_{s}>z,\inf_{0\leq s\leq t_{2}}Z_{s}>y),
\end{align*}
where 
\begin{align*}
\lefteqn{\mathbb{P}(\inf_{0\leq s\leq t_{1}}Z_{s}>z,\inf_{0\leq s\leq t_{2}}Z_{s}>y)}\\
 & \quad=\int_{y}^{\infty}\mathbb{P}\left(\inf_{t_{2}\leq s\leq t_{1}}Z_{s}>z,\inf_{0\leq s\leq t_{2}}Z_{s}>y|Z_{t_{2}}=\tilde{z}\right)\mathbb{P}(Z_{t_{2}}\in\dd\tilde{z})\\
 & \quad=\int_{y}^{\infty}\mathbb{P}\left(\inf_{t_{2}\leq s\leq t_{1}}Z_{s}-Z_{t_{2}}>z-\tilde{z}\right)\mathbb{P}(\inf_{0\leq s\leq t_{2}}Z_{s}>y,Z_{t_{2}}\in\dd\tilde{z})\\
 & \quad=\int_{y}^{\infty}\mathbb{P}\left(\inf_{0\leq s\leq t_{1}-t_{2}}Z_{s}>z-\tilde{z}+x\right)\mathbb{P}(\inf_{0\leq s\leq t_{2}}Z_{s}>y,Z_{t_{2}}\in\dd\tilde{z})\\
 & \quad=\int_{y}^{\infty}(1-\pi(t_{1}-t_{2},\tilde{z}-z))\mathbb{P}(\inf_{0\leq s\leq t_{2}}Z_{s}>y,Z_{t_{2}}\in\dd\tilde{z}),
\end{align*}
where is used that $Z$ has independent and stationary increments.\\
 Now the result follows by noting that by a modification of Equation~(\ref{zevenenhalf})
to the current setting, it holds that 
\end{itemize}
\[
\mathbb{P}(\inf_{0\leq s\leq t_{2}}Z_{s}>y,Z_{t_{2}}\in\dd\tilde{z})=\hat{f}(x,y,\tilde{z},t_{2})\dd\tilde{z}.
\]
\end{proof} Now the desired probabilities are, in analogy to Equation~(\ref{survprobCoCo}),
given by 
\[
\mathbb{P}(\tau_{c}\leq T,\tau_{b}>u|\tau_{c}>t,Y^{(n)}=y^{(n)})=\int_{z_{c}}^{\infty}\gamma(x,z_{c},z_{b},u-t,T-t)f(t,x)\dd x
\]
and 
\[
\mathbb{P}(\tau_{c}\leq u,\tau_{b}>u|\tau_{c}>t,Y^{(n)}=y^{(n)})=\int_{z_{c}}^{\infty}\gamma(x,z_{c},z_{b},u-t,u-t)f(t,x)\dd x.
\]
Recall that the objective was to compute the last two integrals in
Equation~(\ref{newterm2.1}). Let us first consider the second one,
that is 
\begin{align*}
-\int_{t}^{\infty}e^{-r(u-t)}\mathbb{P}(\tau_{c}\leq T,\tau_{b}\in\dd u|\tau_{c}>t,Y^{(n)}) & =\int_{t}^{\infty}e^{-r(u-t)}\frac{\partial}{\partial u}\mathbb{P}(\tau_{c}\leq T,\tau_{b}>u|\tau_{c}>t,Y^{(n)})\dd u\\
 & =(\mathrm{I})+(\mathrm{II}),
\end{align*}
where 
\begin{align*}
(\mathrm{I}) & =\int_{t}^{T}e^{-r(u-t)}\frac{\partial}{\partial u}\mathbb{P}(\tau_{c}\leq T,\tau_{b}>u|\tau_{c}>t,Y^{(n)})\dd u\\
 & =\int_{z_{c}}^{\infty}f(t,x)\int_{t}^{T}e^{-r(u-t)}\frac{\partial}{\partial u}\gamma(x,z_{c},z_{b},u-t,T-t)\dd u\dd x\\
 & =\int_{z_{c}}^{\infty}f(t,x)\int_{t}^{T}e^{-r(u-t)}\frac{\partial}{\partial u}(-\pi(u-t,x-z_{b}))\dd u\dd x\\
 & =\int_{z_{c}}^{\infty}f(t,x)I_{b}(x)\dd x,
\end{align*}
in which 
\begin{align}\label{Ib}
I_{b}(x) & =\int_{t}^{T}e^{-r(u-t)}\frac{\partial}{\partial u}(-\pi(u-t,x-z_{b}))\dd u\nonumber \\
 & =\exp\left(-\frac{m(x-z_{b})+(x-z_{b})\sqrt{m^{2}+2r\sigma^{2}}}{\sigma^{2}}\right)\left(\Phi\left(\frac{x-z_{b}-\sqrt{m^{2}+2r\sigma^{2}}(T-t)}{\sigma\sqrt{T-t}}\right)-1\right)\nonumber \\
 & ~~~+\exp\left(-\frac{m(x-z_{b})-(x-z_{b})\sqrt{m^{2}+2r\sigma^{2}}}{\sigma^{2}}\right)\left(\Phi\left(\frac{x-z_{b}+\sqrt{m^{2}+2r\sigma^{2}}(T-t)}{\sigma\sqrt{T-t}}\right)-1\right),
\end{align}
which follows from Equation~(\ref{I(x)2}), by replacing $z_{c}$
by $z_{b}$. Furthermore, we have 
\begin{align*}
(\mathrm{II}) & =\int_{T}^{\infty}e^{-r(u-t)}\frac{\partial}{\partial u}\mathbb{P}(\tau_{c}\leq T,\tau_{b}>u|\tau_{c}>t,Y^{(n)})\dd u\\
 & =\int_{z_{c}}^{\infty}f(t,x)\int_{T}^{\infty}e^{-r(u-t)}\frac{\partial}{\partial u}\gamma(x,z_{c},z_{b},u-t,T-t)\dd u\dd x\\
 & =\int_{z_{c}}^{\infty}f(t,x)\int_{T}^{\infty}e^{-r(u-t)}\frac{\partial}{\partial u}(-\pi(u-t,x-z_{b}))\dd u\dd x\\
 & ~~~-\int_{z_{c}}^{\infty}\int_{z_{c}}^{\infty}f(t,x)\hat{f}(x,z_{c},\tilde{z},T-t)\int_{T}^{\infty}e^{-r(u-t)}\frac{\partial}{\partial u}(1-\pi(u-T,\tilde{z}-z_{b}))\dd u\dd\tilde{z}\dd x\\
 & =\int_{z_{c}}^{\infty}f(t,x)(J_{b}(x)-I_{b}(x))\dd x\\
 & ~~~-\int_{z_{c}}^{\infty}\int_{z_{c}}^{\infty}f(t,x)\hat{f}(x,z_{c},\tilde{z},T-t)e^{-r(T-t)}J_{b}(\tilde{z},T)\dd\tilde{z}\dd x,
\end{align*}
where 
\begin{align}
J_{b}(x) & =\int_{t}^{\infty}e^{-r(u-t)}\frac{\partial}{\partial u}(1-\pi(u-t,x-z_{b}))\dd u\nonumber \\
 & =-\exp\left(-\frac{m(x-z_{b})+(x-z_{b})\sqrt{m^{2}+2r\sigma^{2}}}{\sigma^{2}}\right),
\end{align}
where the last line follows by taking $T\to\infty$ in Equation~(\ref{Ib}).
This leaves us with an expression for the last integral in Equation~(\ref{newterm2.1}).
\\
 \indent Similarly, the other integral satisfies 
\begin{align*}
\int_{t}^{\infty}e^{-r(u-t)}\mathbb{P}(\tau_{c}\leq T\wedge u,\tau_{b}>u|\tau_{c}>t,Y^{(n)})\dd u & =(\mathrm{III})+(\mathrm{IV}),
\end{align*}
where 
\begin{align*}
(\mathrm{III}) & =\int_{t}^{T}e^{-r(u-t)}\mathbb{P}(\tau_{c}\leq u,\tau_{b}>u|\tau_{c}>t,Y^{(n)})\dd u\\
 & =\int_{z_{c}}^{\infty}f(t,x)\int_{t}^{T}e^{-r(u-t)}\gamma(x,z_{c},z_{b},u-t,u-t)\dd u\dd x\\
 & =\int_{z_{c}}^{\infty}f(t,x)\int_{t}^{T}e^{-r(u-t)}(\pi(u-t,x-z_{c})-\pi(u-t,x-z_{b}))\dd u\dd x\\
 & =\int_{z_{c}}^{\infty}f(t,x)(\tilde{I}_{b}(x)-\tilde{I}(x))\dd x
\end{align*}
in which $\tilde{I}(x)$ is defined in the proof Theorem~\ref{thmCoCoprice1}
and $\tilde{I}_{b}(x)$ is equivalently defined as 
\begin{align}
\tilde{I}_{b}(x) & =\int_{t}^{T}e^{-r(u-t)}(1-\pi(u-t,x-z_{b}))\dd u\nonumber \\
 & =\left[-\frac{1}{r}e^{-r(u-t)}(1-\pi(u-t,x-z_{b}))\right]_{u=t}^{T}+\frac{1}{r}I_{b}(x)\nonumber \\
 & =-\frac{1}{r}e^{-r(T-t)}(1-\pi(T-t,x-z_{b}))+\frac{1}{r}+\frac{1}{r}I_{b}(x).
\end{align}
Furthermore, we have 
\begin{align*}
(\mathrm{IV}) & =\int_{T}^{\infty}e^{-r(u-t)}\mathbb{P}(\tau_{c}\leq T,\tau_{b}>u|\tau_{c}>t,Y^{(n)})\dd u\\
 & =\int_{z_{c}}^{\infty}\int_{T}^{\infty}e^{-r(u-t)}\gamma(x,z_{c},z_{b},u-t,T-t)\dd u\dd x\\
 & =\int_{z_{c}}^{\infty}f(t,x)\int_{T}^{\infty}e^{-r(u-t)}(1-\pi(u-t,x-z_{b}))\dd u\dd x\\
 & ~~~-\int_{z_{c}}^{\infty}\int_{z_{c}}^{\infty}f(t,x)\hat{f}(x,z_{c},\tilde{z},T-t)\int_{T}^{\infty}e^{-r(u-t)}(1-\pi(u-T,\tilde{z}-z_{b}))\dd u\dd\tilde{z}\dd x\\
 & =\int_{z_{c}}^{\infty}f(t,x)(\tilde{J}_{b}(x)-\tilde{I}_{b}(x))\dd x\\
 & ~~~-\int_{z_{c}}^{\infty}\int_{z_{c}}^{\infty}f(t,x)\hat{f}(x,z_{c},\tilde{z},T-t)e^{-r(T-t)}\tilde{J}_{b}(\tilde{z})\dd\tilde{z}\dd x,
\end{align*}
in which 
\begin{align}
\tilde{J}_{b}(x) & =\int_{t}^{\infty}e^{-r(u-t)}(1-\pi(u-t,x-z_{b}))\dd u\nonumber \\
 & =\left[-\frac{1}{r}(1-\pi(u-t,x-z_{b})\right]_{u=t}^{\infty}+\frac{1}{r}J_{b}(x)\nonumber \\
 & =\frac{1}{r}+\frac{1}{r}J_{b}(x).
\end{align}
Putting all the above together leads to an expression for the last
two integrals in Equation~(\ref{newterm2.1}), given by 
\begin{align}\label{finalterms}
\hspace{2em} & \hspace{-2em}-c_{1}P_{1}\int_{t}^{\infty}e^{-r(u-t)}\mathbb{P}(\tau_{c}\leq T\wedge u,\tau_{b}>u|\tau_{c}>t,Y^{(n)})\dd u\nonumber \\
\hspace{2em} & \hspace{-2em}\quad-e^{z_{b}}\int_{t}^{\infty}e^{-r(u-t)}\mathbb{P}(\tau_{c}\leq T,\tau_{b}\in\dd u|\tau_{c}>t,Y^{(n)})\nonumber \\
 & \quad=e^{z_{b}}((I)+(II))-c_{1}P_{1}((III)+(IV))\nonumber \\
 & \quad=\int_{z_{c}}^{\infty}f(t,x)\left(e^{z_{b}}J_{b}(x)+c_{1}P_{1}\tilde{I}(x)-c_{1}P_{1}\tilde{J}_{b}(x)\right)\dd x\nonumber \\
 & \quad\quad+\int_{z_{c}}^{\infty}\int_{z_{c}}^{\infty}f(t,x)\hat{f}(x,z_{c},\tilde{z},T-t)e^{-r(T-t)}(c_{1}P_{1}\tilde{J}_{b}(\tilde{z})-e^{z_{b}}J_{b}(\tilde{z}))\dd\tilde{z}\dd x.
\end{align}
\indent Finally, by combining Equations (\ref{newterm2.1}), (\ref{oldterm})
and (\ref{finalterms}), it follows that the third term in Equation~(\ref{priceeqconv}),
i.e. 
\[
\mathbb{E}\left(\frac{\Delta P_{2}}{\Delta P_{2}+1}E^{PC}(\tau_{c})e^{-r(\tau_{c}-t)}\mathbf{1}_{\{\tau_{c}\leq T\}}|\mathcal{H}_{t}\right),
\]
is given by 
\begin{align}
\int_{z_{c}}^{\infty}f(t,x)h_{1}(x)\dd x+\int_{z_{c}}^{\infty}\int_{z_{c}}^{\infty}f(t,x)\hat{f}(x,z_{c},\tilde{z},T-t)h_{2}(\tilde{z})\dd\tilde{z}\dd x,\label{finalterms2}
\end{align}
where 
\begin{align*}
 & h_{1}(x)=\frac{\Delta P_{2}}{\Delta P_{2}+1}\left(e^{z_{b}}J_{b}(x)+c_{1}P_{1}\tilde{I}(x)-c_{1}P_{1}\tilde{J}_{b}(x)-e^{z_{c}}I(x)\right),\\
 & h_{2}(\tilde{z})=\frac{\Delta P_{2}}{\Delta P_{2}+1}e^{-r(T-t)}(c_{1}P_{1}\tilde{J}_{b}(\tilde{z})-e^{z_{b}}J_{b}(\tilde{z})).
\end{align*}
\hfill{}$\square$\\
 \textbf{Proof of Proposition~\ref{prop_istep}} Denote by $\Delta t$
the time between two successive accounting reports and recall from
section~\ref{section2_model_desc} the notation $Y_{i}=Y_{t_{i}},Z_{i}=Z_{t_{i}},U_{i}=U_{t_{i}}$
and that $Y_{i}=Z_{i}+U_{i}$, where 
\[
U_{i}=\kappa U_{i-1}+\epsilon_{i},
\]
for some fixed $\kappa\in\mathbb{R}$ and independent and identically
distributed $\epsilon_{1},\epsilon_{2},\dots$, which have a normal
distribution with mean $\mu_{\epsilon}$ and variance $\sigma_{\epsilon}^{2}$,
and are independent of $Z$. This allows us to write, for $i=1,2,\dots$
\begin{align*}
\left(\begin{matrix}Y_{n+i}\\
\vdots\\
Y_{n+1}
\end{matrix}\right) & =M\left(\begin{matrix}Z_{n+i}-Z_{n+i-1}\\
\vdots\\
Z_{n+1}-Z_{n}\\
\epsilon_{n+i}\\
\vdots\\
\epsilon_{n+1}
\end{matrix}\right)+\left(\begin{matrix}\kappa^{i}\\
\vdots\\
\kappa
\end{matrix}\right)Y_{n}+\left(\begin{matrix}1-\kappa^{i}\\
\vdots\\
1-\kappa
\end{matrix}\right)Z_{n},
\end{align*}
where $M$ denotes the $(i\times2i)$-matrix defined by 
\[
M=\overbrace{\left(\begin{matrix}1 & 1 & \cdots & 1 & 1\\
0 & 1 & \cdots & 1 & 1\\
\vdots & \ddots & \ddots & \vdots & \vdots\\
0 & \cdots & 0 & 1 & 1\\
0 & \cdots & 0 & 0 & 1
\end{matrix}\right.}^{\text{ \ensuremath{i} components}}\overbrace{\left.\begin{matrix}1 & \kappa & \kappa^{2} & \cdots & \kappa^{i-1}\\
0 & 1 & \kappa & \cdots & \kappa^{i-2}\\
\vdots & \ddots & \ddots & \ddots & \vdots\\
0 & \cdots & 0 & 1 & \kappa\\
0 & \cdots & 0 & 0 & 1
\end{matrix}\right)}^{\text{ \ensuremath{i} components}}
\]
and where the vector $(Z_{n+i}-Z_{n+i-1},\dots,Z_{n+1}-Z_{n},\epsilon_{n+i},\dots,\epsilon_{n+1})$
follows a multivariate normal distribution with $2i$-dimensional
mean vector $\mu_{i}'$ and $(2i\times2i)$-dimensional covariance
matrix $\Sigma_{i}'$, defined by 
\[
\mu_{i}'=\begin{pmatrix}m\Delta t\\
\vdots\\
m\Delta t\\
\mu_{\epsilon}\\
\vdots\\
\mu_{\epsilon}
\end{pmatrix},~\Sigma_{i}'=\text{Diag}(\sigma^{2}\Delta t,\dots\sigma^{2}\Delta t,\sigma_{\epsilon}^{2},\dots,\sigma_{\epsilon}^{2}).
\]
Hence it follows that the conditional density $p_{Y}(y_{n+i},\dots,y_{n+1}|y^{(n)},z^{(n)})$
of $y_{n+i},\dots,y_{n+1}$ given $Y^{(n)}=y^{(n)}$ and $Z^{(n)}=z^{(n)}$,
is the density of a multivariate normal distribution with mean vector
\[
\hat{\mu}_{i}=M\mu_{i}'+\left(\begin{matrix}\kappa^{i}\\
\vdots\\
\kappa
\end{matrix}\right)y_{n}+\left(\begin{matrix}1-\kappa^{i}\\
\vdots\\
1-\kappa
\end{matrix}\right)z_{n}
\]
and covariance matrix 
\[
\Sigma_{i}=M\Sigma_{i}'M^{\top}.
\]
\indent The conditional density of the next $i$ accounting values,
given $Y^{(n)}$ can be written as 
\[
p_{Y}(y_{n+i},\dots,y_{n+1}|y^{(n)})=\int_{\mathbb{R}^{n}}p_{Y}(y_{n+i},\dots,y_{n+1}|y^{(n)},z^{(n)})p_{Z}(z^{(n)}|y^{(n)})\dd z^{(n)},
\]
where the conditional density $p_{Z}(z^{(n)}|y^{(n)})$ of $Z^{(n)}$,
given $Y^{(n)}=y^{(n)}$, can be computed in the same way as $b_{n}(z^{(n)}|y^{(n)})$
in section~\ref{section2_cond_dens}, which leads to 
\begin{align}
p_{Z}(z^{(n)}|y^{(n)}) & =\frac{p_{Z}(z_{n}|z_{n-1})p_{U}(y_{n}-z_{n}|y_{n-1}-z_{n-1})p_{Z}(z^{(n-1)}|y^{(n-1)})}{p_{Y}(y_{n}|y^{(n-1)})}\nonumber \\
 & =\frac{\prod_{i=1}^{n}p_{Z}(z_{i}|z_{i-1})p_{U}(y_{i}-z_{i}|y_{i-1}-z_{i-1})}{p_{Y}(y_{n}|y^{(n-1)})},
\end{align}
This leads to an expression for the survival probability until time
$t_{n+i}$, given survival until time $t_{n}\leq t<t_{n+1}$, that
is 
\begin{align}\label{survprobonlyrep1.1}
\mathbb{P}\left(\tau_{c}^{A}>t_{n+i}|Y^{(n)}=y^{(n)}\right) & =\int_{(y_{c},\infty)^{i}}p_{Y}(y_{n+i},\dots,y_{n+1}|y^{(n)})\dd y_{n+1},\dots,\dd y_{n+i}\nonumber \\
 & =\int_{\mathbb{R}^{n}}\mathbb{P}(\xi(z_{n})\in(y_{c},\infty)^{i})p_{Z}(z^{(n)}|y^{(n)})\dd z^{(n)},
\end{align}
where $\xi(z_{n})$ denotes a multivariate normal distributed random
variable with mean vector $\hat{\mu}_{i}$ and covariance matrix $\Sigma_{i}$.
\hfill{}$\square$\\
 \\
 \textbf{Proof of Theorem~\ref{thmCoCoprice3}} The CoCo's market
price is given by 
\begin{align} \label{CoCoprice_app}
C'(t) & =\mathbb{E}\left(Pe^{-r(T-t)}\mathbf{1}_{\{\tau_{c}>T\}}|\mathcal{H}_{t}\right)+\mathbb{E}\left(\int_{t}^{T}cPe^{-r(u-t)}\mathbf{1}_{\{\tau_{c}>u\}}\dd u|\mathcal{H}_{t}\right)\nonumber \\
 & ~~~+\mathbb{E}\left(RPe^{-r(\tau_{c}-t)}\mathbf{1}_{\{\tau_{c}\leq T\}}|\mathcal{H}_{t}\right).
\end{align}
For $t_{n}\leq t<t_{n+1}$, $T=t_{n+m}$ for some $m\in\mathbb{N}$
and $Y^{(n)}=y^{(n)}$, where $y_{i}>y_{c},1\leq i\leq n$, this can
be written as 
\begin{align*}
C'(t) & =Pe^{-r(T-t)}\mathbb{P}(\tau_{c}>T|Y^{(n)}=y^{(n)})+\int_{t}^{T}cPe^{-r(u-t)}\mathbb{P}(\tau_{c}>u|Y^{(n)}=y^{(n)})\dd u\\
 & ~~~~+RP\sum_{i=1}^{m}e^{-r(t_{n+i}-t)}\mathbb{P}(\tau_{c}=t_{n+i}|Y^{(n)}=y^{(n)})\\
 & =Pe^{-r(T-t)}\mathbb{P}(\tau_{c}>t_{n+m}|Y^{(n)}=y^{(n)})\\
 & ~~~~+cP\left(\sum_{i=1}^{m-1}\int_{t_{n+i}}^{t_{n+i+1}}e^{-r(u-t)}\dd u\mathbb{P}(\tau_{c}>t_{n+i}|Y^{(n)}=y^{(n)})+\int_{t}^{t_{n+i}}e^{-r(u-t)}\dd u\right)\\
 & ~~~~+RP\sum_{i=1}^{m}e^{-r(t_{n+i}-t)}\left(\mathbb{P}(\tau_{c}>t_{n+i-1}|Y^{(n)}=y^{(n)})-\mathbb{P}(\tau_{c}>t_{n+i}|Y^{(n)}=y^{(n)})\right)\\
 & =Pe^{-r(T-t)}\mathbb{P}(\tau_{c}>t_{n+m}|Y^{(n)}=y^{(n)})\\
 & ~~~~+\sum_{i=1}^{m-1}\frac{cP}{r}(e^{-r(t_{n+i}-t)}-e^{-r(t_{n+i+1}-t)})\mathbb{P}(\tau_{c}>t_{n+i}|Y^{(n)}=y^{(n)})\\
 & ~~~~+\frac{cP}{r}(1-e^{-r(t_{n+1}-t)})\\
 & ~~~~+RP\sum_{i=1}^{m}e^{-r(t_{n+i}-t)}\left(\mathbb{P}(\tau_{c}>t_{n+i-1}|Y^{(n)}=y^{(n)})-\mathbb{P}(\tau_{c}>t_{n+i}|Y^{(n)}=y^{(n)})\right)\\
 & =(1-R)Pe^{-r(T-t)}\mathbb{P}(\tau_{c}>t_{n+m}|Y^{(n)}=y^{(n)})\\
 & ~~~~+\sum_{i=1}^{m-1}\left(\frac{cP}{r}-RP\right)(e^{-r(t_{n+i}-t)}-e^{-r(t_{n+i+1}-t)})\mathbb{P}(\tau_{c}>t_{n+i}|Y^{(n)}=y^{(n)})\\
 & ~~~~+\frac{cP}{r}(1-e^{-r(t_{n+1}-t)})+RPe^{-r(t_{n+1}-t)},
\end{align*}
\hfill{}$\square$\\
 \textbf{Proof of Theorem~\ref{thmCoCoprice4}} This theorem is only
a small adaption of Theorem~\ref{thmCoCoprice3}. The second term
in Equation~(\ref{CoCoprice_app}) above, needs to be replaced by
\[
\mathbb{E}\left(\sum_{i=1}^{m-1}\int_{t_{n+i}}^{t_{n+i+1}}cPe^{-r(u-t)}\mathbf{1}_{\{\tau_{c}^{A}>u,Y_{n+i}>y_{cc}\}}\dd u+\mathbf{1}_{\{Y_{n}>y_{cc}\}}\int_{t}^{t_{n+1}}cPe^{-r(u-t)}\dd u\Big|\mathcal{H}_{t}\right),
\]
where $t_{n}\leq t<t_{n+1}$, $T=t_{n+m}$ for some $m\in\mathbb{N}$.
\\
 For $Y^{(n)}=y^{(n)}$, where $y_{i}>y_{c},1\leq i\leq n$, this
can be written as 
\begin{align*}
\lefteqn{\sum_{i=1}^{m-1}\int_{t_{n+i}}^{t_{n+i+1}}e^{-r(u-t)}\mathbb{P}(\tau_{c}^{A}>u,Y_{n+i}>y_{cc}|Y^{(n)}=y^{(n)})\dd u+\mathbf{1}_{\{Y_{n}>y_{cc}\}}\int_{t}^{t_{n+1}}cPe^{-r(u-t)}\dd u}\\
 & \quad=\sum_{i=1}^{m-1}\frac{cP}{r}(e^{-r(t_{n+i}-t)}-e^{-r(t_{n+i+1}-t)})\mathbb{P}(\tau_{c}^{A}>t_{n+i},Y_{n+i}>y_{cc}|Y^{(n)}=y^{(n)})\\
 & \quad\quad+\mathbf{1}_{\{Y_{n}>y_{cc}\}}\frac{cP}{r}(1-e^{-r(t_{n+1}-t)}),
\end{align*}
where, similar to Equation~(\ref{survprobonlyrep1.1}), 
\begin{align*}
\mathbb{P}(\tau_{c}^{A}>t_{n+i},Y_{n+i}>y_{cc}|Y^{(n)}=y^{(n)}) & =\mathbb{P}(Y_{n+1}>y_{c},\dots,Y_{n+i-1}>y_{c},Y_{n+i}>y_{cc}|Y^{(n)}=y^{(n)})\\
 & =\int_{\mathbb{R}^{n}}\mathbb{P}(\xi(z_n)\in(y_{c},\infty)^{i-1}\times(y_{cc},\infty))p_{Z}(z^{(n)}|y^{(n)})\dd z^{(n)}.\,\hfill\mbox{ \ensuremath{\square}}
\end{align*}

\end{document}